\documentclass[runningheads]{llncs}

\usepackage[T1]{fontenc}
\usepackage{microtype}

\usepackage{enumitem}
\usepackage{hyperref}
\usepackage{graphicx}
\usepackage{wrapfig}
\usepackage{lipsum}


\usepackage{xcolor}

\makeatletter\let\proof\@undefined\let\endproof\@undefined\makeatother

\usepackage{amsmath,amssymb,amsthm}
\usepackage{mathrsfs}
\usepackage{physics}

\usepackage{numprint}
\npthousandsep{\,}

\usepackage{algorithm}
\usepackage[noend]{algpseudocode}

\usepackage{booktabs}

\usepackage{tikz}
\usetikzlibrary{shapes,positioning,arrows,calc,automata,matrix}

\usepackage{authblk}

\newtheorem*{theorem*}{Theorem}
\newtheorem*{definition*}{Definition}
\newtheorem*{prop*}{Proposition}

\newcommand{\todo}[1]{\textcolor{red}{ #1}}

\newcommand{\para}[1]{\smallskip\noindent\textbf{#1}~}
\newcommand{\parag}[1]{\vspace{6pt}\noindent\textbf{#1}~}
\renewcommand{\para}[1]{\subsubsection{#1}}

\newcommand{\bangle}[1]{\langle #1 \rangle}

\newcommand{\half}{\frac{1}{2}}

\newcommand{\diff}{\mathop{}\!d}

\usepackage{amsmath}

\DeclareMathOperator*{\argmin}{arg\,min}
\usepackage{bm}
\renewcommand{\vec}[1]{\bm{#1}}
\newcommand{\normal}{\mathcal N}
\newcommand{\uniform}{\mathbb U}
\newcommand{\discrete}{\text{Discr}}

\newcommand{\reals}{\mathbb R}
\newcommand{\rationals}{\mathbb Q}
\newcommand{\naturals}{\mathbb N}

\newcommand{\expectation}{\mathbb E}

\newcommand{\probability}{\mathbb P}

\newcommand{\sand}{\,\land\,}
\newcommand{\snot}{\lnot\,}
\newcommand{\until}{\mathbf U}
\newcommand{\eventually}{\mathbf F}
\newcommand{\always}{\mathbf G}
\newcommand{\true}{tt}

\newcommand{\kr}{k}
\newcommand{\featf}{\Phi}

\newcommand{\rhon}{\hat{\rho}}


\advance\textwidth2mm 
\advance\hoffset-1mm
\advance\textheight2mm 
\advance\voffset-1mm

\let\llncssubparagraph\subparagraph
\let\subparagraph\paragraph
\usepackage{titlesec}
\let\subparagraph\llncssubparagraph

\titlespacing*{\section}{0pt}{2.5ex plus 1ex minus 1ex}{2ex plus 0.5ex minus 0.5ex}
\titlespacing*{\subsection}{0pt}{2.25ex plus 0.5ex minus 1ex}{1.5ex plus 0.5ex minus 0.5ex}
\titlespacing*{\subsubsection}{0pt}{1ex plus 0.5ex minus 0ex}{1ex}


\pagestyle{plain}

\title{
Learning Model Checking and the Kernel Trick for Signal Temporal Logic 
on Stochastic Processes}

\author{
Luca Bortolussi\inst{1,2} \and
Giuseppe Maria Gallo\inst{1} \and
Jan K\v ret\'insk\'y\inst{3} \and 
Laura Nenzi\inst{1,4}
}
\institute{
Department of Mathematics and Geoscience, University of Trieste, Italy \and
Modelling and Simulation Group, Saarland University, Germany \and
Technical University of Munich, Germany \and
University of Technology, Vienna, Austria
}

\date{}
\renewcommand\date[1]{}



\begin{document}


\maketitle

\vspace*{-1em}

\begin{abstract}
We introduce a similarity function on formulae of signal temporal logic (STL).
It comes in the form of a \emph{kernel function}, well known in machine learning as a conceptually and computationally efficient tool.
The corresponding \emph{kernel trick} allows us to circumvent the complicated process of feature extraction, i.e. the (typically manual) effort to identify the decisive properties of formulae so that learning can be applied.
We demonstrate this consequence and its advantages on the task of \emph{predicting (quantitative) satisfaction} of STL formulae on stochastic processes:
Using our kernel and the kernel trick, we learn
(i)~computationally efficiently
(ii)~a practically precise predictor of satisfaction,
(iii)~avoiding the difficult task of finding a way to explicitly turn formulae into vectors of numbers in a sensible way.
We back the high precision we have achieved in the experiments by a theoretically sound PAC guarantee, ensuring our procedure efficiently delivers a close-to-optimal predictor.

\end{abstract}

\vspace*{-1em}

\medskip

\section{Introduction}
\label{sec:intro}

\medskip

\emph{Is it possible to predict the probability that a system satisfies a property \emph{without knowing or executing} the system, solely based on previous experience with the system behaviour w.r.t.\ some \emph{other} properties?
More precisely, let $\probability_M[\varphi]$ denote 
the probability that a (linear-time) property $\varphi$ holds on a run of a stochastic process $M$.
Is it possible to predict $\probability_M[\varphi]$ knowing only $\probability_M[\psi_i]$ for properties $\psi_1,\ldots,\psi_k$, which were randomly chosen (a-priori, not knowing $\varphi$) and thus do not necessarily have any logical relationship, e.g.\ implication, to $\varphi$?
}

While this question cannot be in general answered with complete reliability, we show that in the setting of signal temporal logic, under very mild assumptions, it can be  answered with high accuracy and low computational costs.
\smallskip

\para{Probabilistic verification and its limits.}
Stochastic processes form a natural way of capturing systems whose future behaviour is determined at each moment by a unique (but possibly unknown) probability measure over the successor states.
The vast range of applications includes not only engineered systems such as software with probabilistic instructions or cyber-physical systems with failures, but also naturally occurring systems such as biological systems.
In all these cases, predictions of the system behaviour may be required even in cases the system is not (fully) known or is too large.
For example, consider a safety-critical cyber-physical system with a third party component, or a complex signalling pathway to be understood and medically exploited.

\emph{Probabilistic model checking}, e.g.~\cite{baier2008principles}, provides a wide repertoire of analysis techniques, in particular to determine the probability $\probability_M[\varphi]$ that the system $M$ satisfies the logical formula $\varphi$.
However, there are two caveats.
Firstly, despite recent advances \cite{handbook} the scalability is still quite limited, compared to e.g. hardware or software verification.
Moreover, this is still the case even if we only require \emph{approximate} answers, i.e., for a given precision $\varepsilon$, to determine $v$ such that $\probability_M[\varphi]\in [v-\varepsilon,v+\varepsilon]$.
Secondly, knowledge of the model $M$ is required to perform the analysis. 

\emph{Statistical model checking} \cite{younes02} fights these two issues at an often acceptable cost of relaxing the guarantee to \emph{probably approximately} correct (PAC), requiring that the approximate answer of the analysis may be incorrect with probability at most $\delta$.
This allows for a statistical evaluation:
Instead of analyzing the model, we evaluate the satisfaction of the given formula on a number of observed runs of the system, and derive a statistical prediction, which is valid only with some confidence.
Nevertheless, although $M$ may be unknown, it is still necessary to execute the system in order to obtain its runs.

\emph{``Learning'' model checking} is a new paradigm we propose, in order to fill in a hole in the model-checking landscape where very little access to the system is possible.
We are given a set of input-output pairs for model checking, i.e., a~collection $\{(\psi_i,p_i)\}_i$ of formulae and their satisfaction values on a given model $M$, where $p_i$ can be the probability $\probability_M[\psi_i]$ of satisfying $\psi_i$, or its robustness (in case of real-valued logics), or any other quantity.
From the data, we learn a predictor for the model checking problem: a classifier for Boolean satisfaction, or a regressor for quantitative domains of $p_i$.
Note that apart from the results on the a-priori given formulae, no knowledge of the system is required; also, no runs are generated and none have to be known.
As an example consequence, a user can investigate properties of a system even before buying it, solely based on producer's guarantees on the standardized formulae $\psi_i$.

\emph{Advantages of our approach} can be highlighted as follows, not intending to replace standard model checking in standard situations but focusing on the case of extremely limited (i) information and (ii) online resources.
\emph{Probabilistic} model checking re-analyzes the system for every new property on the input; 
\emph{statistical} model checking can generate runs and then, for every new property, analyzes these runs; 
\emph{learning} model checking performs one analysis with complexity dependent only on the size of the data set (a-priori formulae) and then, for every new formula on input, only evaluates a simple function (whose size is again independent of the system and the property, and depends only on the data set size).
Consequently, it has the least access to information and the least computational demands.
While lack of any guarantees is typical for machine-learning techniques and, in this context with the lowest resources required, expectable, yet we provide PAC guarantees.

\parag{Technique and our approach.}
To this end, we show how to efficiently learn on the space of temporal formulae via the so-called \emph{kernel trick}, e.g.~\cite{shawe2004kernel}.
This in turn requires to introduce a mapping of formulae to vectors (in a Hilbert space) that preserves the information on the formulae.
\emph{How to transform a formula into a vector of numbers (of always the same length)?}
While this is not clear at all for finite vectors, we 
take the dual perspective on formulae, namely as functionals mapping trajectories to values. 
This point of view provides us with a large bag of functional analysis tools \cite{brezis2010functional} and allows us to define the needed semantic similarity of two formulae (the inner product on the Hilbert space).

\parag{Application examples.}
Having discussed the possibility of learning model checking, the main potential of our kernel (and generally introducing kernels for any further temporal logics) is that it opens the door to \emph{efficient learning on formulae} via kernel-based machine-learning techniques~\cite{murphy2012machine,rasmussen:williams:2006}.
Let us sketch a few further applications that immediately suggest themselves: \vspace*{-0.5em}
\begin{description}
    \item[Game-based synthesis] 
    Synthesis with temporal-logic specifications can often be solved via games on graphs \cite{DBLP:conf/cav/MeyerSL18,DBLP:journals/corr/abs-1904-07736}.
    However, exploration of the game graph and finding a winning strategy is done by graph algorithms ignoring the logical information. 
    For instance, choosing between $a$ and $\neg a$ is tried out blindly even for specifications that require us to visit $a$s.
    Approaches such as \cite{DBLP:conf/atva/KretinskyMM19} demonstrate how to tackle this but hit the barrier of inefficient learning of formulae.
    Our kernel will allow for learning reasonable choices from previously solved games.
    \item[Translating, sanitizing and simplifying specifications]
    A formal specification given by engineers might be somewhat different from their actual intention.
    Using the kernel, we can, for instance, find the closest simple formula to their inadequate translation from English to logic, which is then likely to match better. 
    (Moreover, the translation would be easier to automate by natural language processing since learning from previous cases is easy once the kernel gives us an efficient representation for formulae learning.)
    \item[Requirement mining] A topic which received a lot of attention recently is that of identifying 
    specifications from observed data, i.e. to tightly characterize a set of observed behaviours or anomalies  \cite{bartocci2018specification}. Typical methods are using either formulae templates  \cite{BBS14} or methods based e.g. on decision trees \cite{bombara2016} or genetic algorithms \cite{NenziSBB18}.  Our kernel opens a different strategy to tackle this problem: lifting the search problem from the discrete combinatorial space of syntactic structures of formulae to a continuous space in which distances preserve semantic similarity (using e.g. kernel PCA \cite{murphy2012machine} to build finite-dimensional embeddings of formulae into $\mathbb{R}^m$).
\end{description}


\vspace*{-0.5em}

\parag{Our main contributions} are the following: \vspace*{-0.5em}
\begin{itemize}
    \item From the technical perspective, we define a kernel function for temporal formulae (of signal temporal logic, see below) and design an efficient way to learn it. 
    This includes several non-standard design choices, improving the quality of the predictor (see Conclusions).
    \item Thereby we open the door to various learning-based approaches for analysis and synthesis and further applications, in particular also to what we call the \emph{learning} model checking.
    \item We demonstrate the efficiency practically on the predicting the expected satisfaction of formulae on stochastic systems.
    We complement the experimental results with a theoretical analysis and provide a PAC bound.
\end{itemize}

\subsection{Related Work}

\parag{Signal temporal logic (STL)}~\cite{Maler2004} is gaining momentum as a requirement specification language for complex systems and, in particular, cyber-physical systems ~\cite{bartocci2018specification}.
STL has been applied in several flavours, from runtime-monitoring~\cite{bartocci2018specification}, falsification problems \cite{FainekosH019} to control synthesis~\cite{HaghighiMBB19}, and recently also within learning algorithms, trying to find a maximally discriminating formula between sets of trajectories~\cite{bombara2016,BBS14}. In these applications, a central role is played by the real-valued quantitative semantics
\cite{donze2013efficient}, measuring robustness of satisfaction.  
Most of the applications of STL have been directed to deterministic (hybrid) systems, with less emphasis on non-deterministic or stochastic ones~\cite{BartocciBNS15}. 

\parag{Metrics and distances}
form another area in which formal methods are  providing interesting tools, in particular logic-based distances between models, like bisimulation metrics for Markov models~\cite{BacciBLM16,DBLP:conf/concur/BacciBLMTB19,DBLP:conf/aaai/AmortilaBPP19}, which are typically based on a branching logic. In fact, extending these ideas to linear time logic is hard \cite{jan2016linear}, and typically requires statistical approximations. 
Finally, another relevant problem is how to measure the distance between two logic formulae, thus giving a metric structure to the formula space, a task relevant for learning which received little attention for STL, with the notable exception of \cite{madsen2018metrics}.

\parag{Kernels}
make it possible to work in a \emph{feature space} of a higher dimension without increasing the computational cost. Feature space, as  used in machine learning \cite{rasmussen:williams:2006,comaniciu2002mean}, refers to an $n$-dimensional real space that is the co-domain of a mapping from the original space of data. The idea is to map the original space in a new one that is easier to work with. 
The so-called \emph{kernel trick}, e.g.~\cite{shawe2004kernel} allows us to efficiently perform approximation and learning tasks over the feature space without explicitly constructing it. 
We provide the necessary background information in Section~\ref{sec:background:kernel}.

\parag{\emph{Overview of the paper:}}
Section \ref{sec:background} recalls STL and the classic kernel trick.
Section~\ref{sec:overview} provides an overview of our technique and results. 
Section~\ref{sec:kernelSTL} then discusses all the technical development in detail.
In Section~\ref{sec:exp}, we experimentally evaluate the accuracy of our learning method. 
In Section~\ref{sec:conc}, we conclude with future work.
For space reasons, some technical proofs, further details on the experiments, and additional quantitative evidence for our respective conclusions had to be moved to Appendix.
While this material is not needed to understand our method, constructions, results and their analysis, we believe the extra evidence may be interesting for the questioning readers.

\section{Background}
\label{sec:background}

Let $\reals,\reals_{\geq 0},\rationals,\naturals$ denote the sets of  non-negative real, rational, and (positive) natural numbers, respectively.
For vectors $\vec x,\vec y\in\reals^n$ (with $n\in\naturals)$, we write $\vec x=(x_1,\ldots,x_n)$ to access the components of the vectors, in contrast to sequences of vectors $\vec {x_1},\vec {x_2},\ldots\in\reals^n$.
Further, we write $\bangle{\vec x,\vec y}=\sum_{i=1}^n x_i y_i$ for the scalar product of vectors. 
  
\subsection{Signal Temporal Logic} 
\label{sec:STL}

\parag{Signal Temporal Logic (STL)}~\cite{Maler2004} is a linear-time temporal logic suitable to monitor properties of  trajectories. 
A \emph{trajectory} is a function $\xi: I\to D$ with a \textit{time domain} $I\subseteq\reals_{\geq 0}$, 
and a \textit{state space} $D\subseteq\reals^n$ for some $n\in\naturals$.
We define the \textit{trajectory space}  $\mathcal T$ as the set of all possible continuous functions\footnote{The whole framework can be easily relaxed to piecewise continuous c\`adl\`ag trajectories endowed with the Skorokhod topology and metric \cite{billingsley2008probability}.} over $D$. 
An \emph{atomic predicate} of STL is a continuous computable predicate\footnote{Results are easily generalizable to predicates defined by piecewise continuous c\`adl\`ag functions.} on $\vec x\in\reals^n$ of the form of $f(x_1, ..., x_n) \geq 0$, typically linear, i.e. $\sum_{i=1}^n q_ix_i\geq 0$ for $q_1,\ldots,q_n\in\rationals$.
  
\parag{Syntax.}  
The set $\mathcal P$ of STL formulae is given by the following syntax: \vspace*{-0.75em}
   	\[\varphi:=\true\mid\pi\mid\snot\varphi\mid \varphi_1\sand\varphi_2\mid\varphi_1\until_{[a, b]}\varphi_2\]\vspace*{-1.75em}
   	
   	\noindent
  	where $\true$ is the Boolean \textit{true} constant, $\pi$ ranges over atomic predicates, {\it negation} $\snot$ and {\it conjunction} $\sand$  are the standard Boolean connectives and $\until_{[a, b]}$ is the \textit{until} operator, with $a, b\in\rationals$ and $a<b$.  As customary, we can derive the {\it disjunction} operator $\vee$ by De Morgan's law and the   {\it eventually} (a.k.a.\ future) operator $\eventually_{[t_{1},t_{2}]}$ and  the {\it always} (a.k.a.\ globally) operator $\always_{[t_{1},t_{2}]}$ operators from  the until operator. 
  	
\parag{Semantics.}  	
STL can be given not only the classic Boolean notion of 
\emph{satisfaction}, denoted by $s(\varphi,\xi,t) = 1$ if  $\xi$ at time $t$ satisfies $\varphi$, and $0$ otherwise, but also
a quantitative one, denoted by $\rho(\varphi, \xi, t)$.
This measures the quantitative level of satisfaction of a formula for a given trajectory, evaluating how ``robust'' is the satisfaction of $\varphi$ with respect to perturbations in the signal~\cite{donze2013efficient}. The quantitative semantics is defined recursively as follows: \vspace*{-0.75em}
  	\begin{align*}
  	&\rho(\pi,\xi,t) &=& f_\pi(\xi(t)) \qquad \text{for } \pi(x_1,...,x_n)=\big(f_\pi(x_1,...,x_n)\geq 0\big)\\
  	&\rho(\snot\varphi,\xi,t) &=& -\rho(\varphi,\xi,t)\\
  	&\rho(\varphi_1\sand\varphi_2,\xi,t) &=& \min\big(\rho(\varphi_1,\xi,t), \rho(\varphi_2,\xi,t)\big)\\
  	&\rho(\varphi_1\until_{[a, b]}\varphi_2,\xi,t) &=& \max_{t'\in[a+t,b+t]}\big(\min\big(\rho(\varphi_2,\xi,t'),\min_{t''\in[t,t']}\rho(\varphi_1,\xi,t'')\big)\big)
  	\end{align*}
\vspace*{-1.75em}

 \parag{Soundness and Completeness} 
 Robustness is compatible with satisfaction in that it complies with the following soundness property: if $\rho(\varphi, \xi, t) > 0$ then $s(\varphi,\xi,t) = 1$; and if $\rho(\varphi, \xi, t) < 0$ then $s(\varphi,\xi,t) = 0$. 
 If the robustness is $0$, both satisfaction and the opposite may happen, but either way only non-robustly: there are arbitrarily small perturbations of the signal so that the satisfaction changes.
In fact, it complies also with a completeness property that $\rho$ measures how robust the satisfaction of a trajectory is with respect to perturbations, see~\cite{donze2013efficient} for more detail.
  	

\parag{Stochastic process} in this context is a probability space $\mathcal M = (\mathcal T, \mathcal A, \mu)$, where $\mathcal T$ is a trajectory space and $\mu$ is a probability measure on a $\sigma$-algebra $\mathcal A$ over $\mathcal T$.
Note that the definition is essentially equivalent to the standard definition of a~stochastic process as a collection $\{D_t\}_{t\in I}$ of random variables, where $D_t(\xi)\in D$ is the signal $\xi(t)$ at time $t$ on $\xi$ \cite{billingsley2008probability}.
The only difference is that we require, for simplicity\footnote{Again, this assumption can be relaxed since continuous functions are dense in the Skorokhod space of c\`adl\`ag functions.}, the signal be continuous.

\parag{Expected robustness and satisfaction probability.}
Given a stochastic process $\mathcal M = (\mathcal T, \mathcal A, \mu)$, we define the  \textit{expected robustness} $R_{\mathcal M}:\mathcal P\times I\to\reals$ as 
\[R_{\mathcal M}(\varphi, t) := \expectation_{\mathcal M}[\rho(\varphi, \xi, t)] = \int_{\xi\in\mathcal T}\rho(\varphi,\xi,t)d\mu(\xi)\,.\] 
The qualitative counterpart of the expected robustness is the \textit{satisfaction probability} $ S(\varphi)$, i.e. the probability that a trajectory generated by the stochastic process $\mathcal M$ satisfies the formula $\varphi$, i.e.\ $S_{\mathcal M}(\varphi,t):= \expectation_{\mathcal M}[s(\varphi, \xi, t)] = \int_{\xi\in\mathcal T}s(\varphi,\xi,t)d\mu(\xi)$.%
\footnote{As argued above, this is essentially equivalent to 
integrating the indicator function of robustness being positive
since a formula has robustness exactly zero only with probability zero as we sample all values from continuous distributions.} 
Finally, when $t=0$ we often drop the parameter $t$ from all these functions.


\subsection{Kernel Crash Course}\label{sec:background:kernel}
\label{sec:kernel}
We recall the necessary background for readers less familiar with machine learning.

\parag{Learning linear models.}
Linear predictors take the form of a vector of weights, intuitively giving positive and negative importance to features.
A predictor given by a vector $\vec w=(w_1,\ldots,w_d)$ evaluates a data point $\vec x=(x_1,\ldots,x_d)$ to $w_1x_1+\cdots+w_dx_d=\bangle{\vec w,\vec x}$.
To use it as a classifier, we can, for instance, take the sign of the result and output yes iff it is positive;
to use it as a regressor, we can simply output the value.
During learning, we are trying to separate, respectively approximate, the training data $\vec {x_1},\ldots \vec{x_N}$ with a linear predictor, which corresponds to solving an optimization problem of the form
\[\argmin_{\vec{w}\in\reals^d} f(\bangle{\vec w,\vec{x_1}},\ldots,\bangle{\vec w,\vec {x_N}},\bangle{\vec w,\vec w})\]
where the possible, additional last term comes from regularization (preference of simpler weights, with lots of zeros in $\vec w$).

\parag{Need for a feature map $\Phi:\mathit{Input}\to\reals^n$.} 
In order to learn, the input object first needs to be transformed to a vector of numbers.
For instance, consider learning the logical exclusive-or function (summation in $\mathbb Z_2$) $y=x_1\oplus x_2$.
Seeing true as 1 and false as 0 already transforms the input into elements of $\reals^2$.
However, observe that there is no linear function separating sets of points $\{(0,0),(1,1)\}$ (where xor returns true) and $\{(0,1),(1,0)\}$ (where xor returns false).
In order to facilitate learning by linear classifiers, richer feature space may be needed than what comes directly with the data.
In our example, we can design a feature map to a higher-dimensional space using $\Phi:(x_1,x_2)\mapsto(x_1,x_2,x_1\cdot x_2)$.
Then e.g.\ $x_3\leq \frac{x_1+x_2-1}2$ holds in the new space iff $x_1\oplus x_2$ and we can learn this linear classifier.


\begin{wrapfigure}{r}{0.5\textwidth}
\vspace*{-0.2cm}
\includegraphics[scale=0.16]{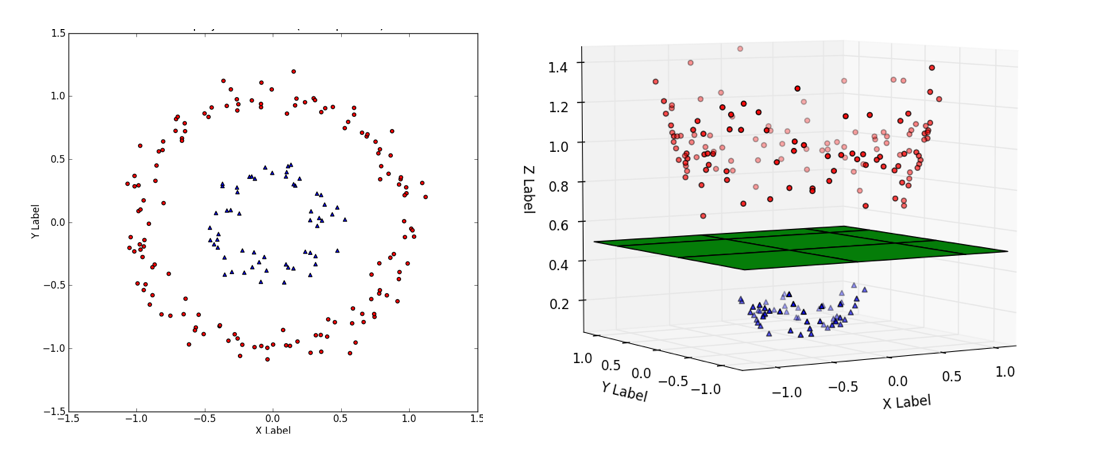}
\vspace*{-1cm}
\caption{An example illustrating the need for feature maps in linear classification \cite{kernel-trick-pic}.}
\label{fig:higher-dim}
\vspace*{-0.8cm}
\end{wrapfigure}
 
Another example can be seen in Fig.~\ref{fig:higher-dim}.
The inner circle around zero cannot be linearly separated from the outer ring.
However, considering $x_3:=x_1^2+x_2^2$ as an additional feature turns them into easily separable lower and higher parts of a paraboloid.

In both examples, a feature map $\Phi$ mapping the input to a space with higher dimension ($\reals^3$), was used.
Nevertheless, two issues arise:
\begin{enumerate}
    \item What should be the features? Where do we get good candidates?
    \item How to make learning efficient if there are too many features?
\end{enumerate}
On the one hand, identifying the right features is hard, so we want to consider as many as possible.
On the other hand, their number increases the dimension and thus decreases the efficiency both computationally and w.r.t.~the number of samples required.
 
\parag{Kernel trick.}
Fortunately, there is a way to consider a huge amount of features, but with efficiency independent of their number (and dependent only on the amount of training data)!
This is called the kernel trick.
It relies on two properties of linear  classifiers:
\begin{itemize}
    \item The optimization problem above, after the feature map is applied, takes the form \vspace*{-0.5em}
    \[\argmin_{\vec{w}\in\reals^n} f\big(\bangle{\vec w,\featf(\vec {x_1})},\ldots,\bangle{\vec w,\featf(\vec {x_N})},\bangle{\vec w,\vec w}\big)\] \vspace*{-0.5em}
    
    \item Representer theorem: The optimum of the above can be written in the form \vspace*{-0.5em} \[\vec w^*=\sum_{i=1}^N \alpha_i\featf(\vec {x_i})\] \vspace*{-0.5em}
    
    \noindent
    Intuitively, anything orthogonal to training data cannot improve precision of the classification on the training data, and only increases $||\vec w||$, which we try to minimize (regularization).
\end{itemize}
Consequently, plugging the latter form into the former optimization problem yields an optimization problem of the form:
\[\argmin_{\vec{\alpha}\in\reals^N}
g\big(\vec\alpha, \bangle {\featf(\vec {x_i}),\featf(\vec {x_j})}_{1\leq i,j\leq N}\big)
\]
In other words, optimizing weights $\vec \alpha$ of expressions where data only appear in the form $\bangle {\featf(\vec x_i),\featf(\vec x_j)}$.
Therefore, we can take all features in $\featf(\vec x_i)$ into account \emph{if}, at the same time, we can efficiently evaluate the kernel function
\[\kr:(\vec x,\vec y)\mapsto \bangle {\featf(\vec x),\featf(\vec y)}\]
i.e.\ \emph{without} explicitly constructing $\featf(\vec x)$ and $\featf(\vec y)$.
Then we can efficiently learn the predictor on the rich set of features.
Finally, when the predictor is applied to a new point $\vec x$, we only need to evaluate the expression
 \[\bangle{\vec w,\featf(\vec x)}=\sum_{i=1}^N \alpha_i \bangle{\featf(\vec {x_i}),\featf(\vec x)}=\sum_{i=1}^N \alpha_i \kr(\vec{x_i},\vec x)\]

\section{Overview of Our Approach and Results}\label{sec:overview}
In this section, we describe what our tasks are if we want to apply the kernel trick in the setting of temporal formulae, what our solution ideas are, and where in the paper they are fully worked out.

\begin{enumerate}
     \item
\emph{Design the kernel function:} define a similarity measure for STL formulae and prove it takes the form $\bangle{\featf(\cdot),\featf(\cdot)}$ 
 \begin{enumerate}
     \item
     \emph{Design an embedding of formulae into a Hilbert space (vector space with possibly infinite dimension) (Thm.~\ref{thm:hilbert} in 
     App.~\ref{app:STLkernel} proves this is well defined):}
     Although learning can be applied also to data with complex structure such as graphs, the underlying techniques typically work on vectors.
     How do we turn a formula into a vector?
     
     Instead of looking at the syntax of the formula, we can look at its semantics.
     Similarly to Boolean satisfaction, where a formula can be identified with its language, i.e., the set $\mathcal T\to 2\cong 2^{\mathcal T}$ of trajectories that satisfy it, we can regard an STL formula $\varphi$ as a map $\rho(\varphi,\cdot):\mathcal T\to\reals\cong \reals^{\mathcal T}$ of trajectories to their robustness. 
     Observe that this is a real function, i.e., an \emph{infinite-dimensional} vector of reals.
     Although explicit computations with such objects are problematic, kernels circumvent the issue.
     In summary, we have the implicit features given by the map:
     \[\varphi\stackrel \featf\mapsto \rho(\varphi,\cdot)\]
     
     \item
     \emph{Design similarity on the feature representation (
     in Sec.~\ref{par:kernel}):}
     Vectors' similarity is typically captured by their scalar product $\bangle{ \vec x,\vec y}=\sum_{i} x_i y_i$ since it gets larger whenever the two vectors ``agree'' on a component.
     In complete analogy, we can define for infinite-dimensional vectors (i.e.\ functions) $f,g$ their ``scalar product'' $\langle f,g\rangle=\int f(x)g(x)\diff x$.
     Hence we want the kernel to be defined as
     \begin{equation*}
     \kr(\varphi,\psi)=
     \langle\rho(\varphi,\cdot),\rho(\psi,\cdot)\rangle=
     \int_{\xi\in\mathcal T} \rho(\varphi,\xi)\rho(\psi,\xi) \diff\xi   
     \end{equation*}

     \item 
     \emph{Design a measure on trajectories (Sec.~\ref{par:trajectory space.}):}
     Compared to finite-dimensional vectors, where in the scalar product each component is taken with equal weight, integrating over uncountably many trajectories requires us to put a finite measure on them, according to which we integrate.
     Since, as a side effect, it necessarily expresses their importance, we define a probability measure $\mu_0$ preferring ``simple'' trajectories, where the signals do not change too dramatically (the so-called total variation is low).
     This finally yields the definition of the kernel as\footnote{On the conceptual level; technically, additional normalization and Gaussian transformation are performed to ensure usual desirable properties, 
     see Cor.~\ref{cor:norm_exp_kernel} 
     in Sec.~\ref{par:kernel}.}
     \begin{center}
     \vspace*{6pt}
     \noindent\framebox[0.7\textwidth]{\parbox{0.6\textwidth}{
        \begin{equation}
            \kr(\varphi,\psi)= \int_{\xi\in\mathcal T} \rho(\varphi,\xi)\rho(\psi,\xi) \diff\mu_0(\xi)\qquad\label{eq:kernel}
        \end{equation}
     }}
     \vspace*{6pt}
     \end{center}
\end{enumerate}     
     
     \item 
     \emph{Learn the kernel (Sec.~\ref{subsec:setting}):}
\begin{enumerate}
    \item 
    \emph{Get training data $\vec {x_i}$:}
    The formulae for training should be chosen according to the same distribution as they are coming in the final task of prediction.
    Since that distribution is unknown, we assume at least a general preference of simple formulae and thus design a probability distribution $\mathcal F_0$, preferring formulae with simple syntax trees (see Section \ref{subsec:setting}).
    We also show that several hundred formulae are sufficient for practically precise predictions.
    
    \item
    \emph{Compute the ``correlation'' of the data $\bangle{\phi(\vec {x_i}),\phi(\vec {x_j})}$ by kernel $\kr(\vec {x_i},\vec {x_j})$:}
    Now we evaluate (\ref{eq:kernel}) for all the data pairs.
    Since this involves an integral over all trajectories, we simply approximate it by Monte Carlo: We choose a number of trajectories according to $\mu_0$ and sum the values for those.
    In our case, 10\,000 provide a very precise approximation.
    
    \item 
    \emph{Optimize the weights $\vec\alpha$ (using values from (b)
    ):} 
    Thus we get the most precise linear classifier given the data, but penalizing too ``complicated'' ones since they tend to overfit and not generalize well (so-called regularization).
    Recall that the dimension of $\vec\alpha$ is the size of the training data set, not the infinity of the Hilbert space.
\end{enumerate}     

     \item 
     \emph{Evaluate the predictive power of the kernel} and thus implicitly the kernel function design:
     \begin{itemize}
         \item We evaluate the accuracy of predictions of robustness for single trajectories (Sec.~\ref{sec:exp:single}), the expected robustness on a stochastic system 
         and the corresponding Boolean notion of satisfaction probability (Sec.~\ref{sec:exp:sat}).
         Moreover, we show that there is no need to derive kernel for each stochastic process separately depending on their probability spaces, but the one derived from the generic $\mu_0$ is sufficient and, surprisingly, even more accurate (Sec.~\ref{sec:exp:processes}).
         
         \item
         Besides the experimental evaluation, we provide a PAC bound on our methods in terms of Rademacher complexity \cite{foundationsML2018} (Sec.~\ref{par:pac}).
     \end{itemize}
 \end{enumerate}
\section{A Kernel for Signal Temporal Logic}
\label{sec:kernelSTL}
In this section, we sketch the technical details of the construction of the STL kernel, of the correctness proof, and of PAC learning bounds. More details on the definition, including proofs, are provided in Appendix~\ref{app:STLkernel}.


  
\subsection{Definition of  STL Kernel}
\label{par:kernel}
  Let us fix a formula $\varphi\in\mathcal{P}$ in the STL formulae space and consider the robustness $\rho(\varphi, {}\cdot{}, {}\cdot{}):\mathcal{T}\times I\to\reals$,
  seen as a real-valued function on the domain $\mathcal{T}\times I$, where $I\subset\reals$ is a bounded interval, and  $\mathcal{T}$ is the trajectory space of continuous functions. The STL kernel is defined as follows. 
    
  \begin{definition}\label{def:STLkernel}
    Fixing a probability measure $\mu_0$ on $\mathcal{T}$, we define the STL-kernel \vspace*{-0.5em}
  \begin{eqnarray*}
  	\kr'(\varphi, \psi)  & = \int_{\xi\in\mathcal T}\int_{t\in I}\rho(\varphi, \xi, t)\rho(\psi, \xi, t)dtd\mu_0\,
  	\end{eqnarray*}
 \end{definition}
 \vspace*{-0.5em}

\noindent The integral is well defined as it corresponds to a scalar product in a suitable Hilbert space of functions.  Formally proving this, and leveraging foundational results on kernel functions \cite{foundationsML2018}, in Appendix~\ref{app:STLkernel} we prove the following: \vspace*{-0.3em}

\begin{theorem}\label{th:proper_kernel}
The function $\kr'$ is a proper kernel function.
  \end{theorem}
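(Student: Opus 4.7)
The plan is to identify $k'$ with the inner product of a concrete $L^2$-Hilbert space and then invoke the standard characterization of inner-product-induced kernels as \emph{proper} (i.e. symmetric and positive semi-definite) kernels.

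First, I would fix the Hilbert space $H = L^2(\mathcal{T}\times I,\,\mu_0\otimes\lambda_I)$, where $\lambda_I$ denotes Lebesgue measure on the bounded interval $I$. Equipped with the product measure, $H$ carries the scalar product
\[\langle f,g\rangle_H \;=\; \int_{\xi\in\mathcal T}\int_{t\in I} f(\xi,t)\,g(\xi,t)\,dt\,d\mu_0(\xi).\]
Defining the feature map $\featf(\varphi) := \rho(\varphi,\cdot,\cdot)$, Definition~\ref{def:STLkernel} yields $k'(\varphi,\psi) = \langle \featf(\varphi),\featf(\psi)\rangle_H$ as soon as $\featf(\varphi)\in H$ for all $\varphi\in\mathcal P$. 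That reduction is the whole point: once $\featf$ factors through a Hilbert space, ``proper kernel'' is automatic.

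Second, I would prove $\featf(\varphi)\in H$ by induction on the STL syntax. For an atomic predicate $\pi$ with continuous $f_\pi$, the map $(\xi,t)\mapsto f_\pi(\xi(t))$ is continuous on $\mathcal T\times I$, hence measurable, and bounded on $\{\xi\}\times I$ for every continuous $\xi$ because $I$ is compact. Integrability in $\mu_0$ is the key assumption to discharge here: the measure $\mu_0$ (to be specified in Sec.~\ref{par:trajectory space.}) must be chosen so that $\int_{\mathcal T}\sup_{t\in I}f_\pi(\xi(t))^2 d\mu_0(\xi) <\infty$, which is exactly what the preference for low-total-variation trajectories is designed to give. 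The inductive steps are then effortless: $\snot$ preserves $|\cdot|$ pointwise; $\sand$ uses $|\min(a,b)|\le\max(|a|,|b|)\le|a|+|b|$; and $\until_{[a,b]}$ is a sup-inf over a compact subset, so $|\rho(\varphi_1\until_{[a,b]}\varphi_2,\xi,t)|$ is pointwise bounded by $\sup_{s\in I'}(|\rho(\varphi_1,\xi,s)|+|\rho(\varphi_2,\xi,s)|)$ over a compact $I'$, giving the needed $L^2$ bound from the inductive hypothesis. Fubini's theorem then legitimises the iterated integral in Definition~\ref{def:STLkernel}.

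Third, having established $\featf:\mathcal P\to H$, symmetry $k'(\varphi,\psi)=k'(\psi,\varphi)$ is immediate, and positive semi-definiteness follows from
\[\sum_{i,j=1}^n c_i c_j\, k'(\varphi_i,\varphi_j) \;=\; \Big\| \sum_{i=1}^n c_i\featf(\varphi_i)\Big\|_H^2 \;\ge\; 0\]
for any $\varphi_1,\ldots,\varphi_n\in\mathcal P$ and $c_1,\ldots,c_n\in\reals$. The standard characterization (see \cite{foundationsML2018}) that a symmetric, positive semi-definite function is a proper kernel then concludes the proof.

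The main obstacle is the integrability step: bounding $\|\featf(\varphi)\|_H$ uniformly enough for the induction to go through depends on the yet-to-be-specified measure $\mu_0$ having sufficiently fast tail decay on the magnitudes and variations of trajectories. Once that base case is in hand, the rest is essentially the Moore--Aronszajn mechanism applied to the feature map $\varphi\mapsto\rho(\varphi,\cdot,\cdot)$.
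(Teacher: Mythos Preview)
Your proposal is correct and follows the same high-level architecture as the paper: exhibit $\featf(\varphi)=\rho(\varphi,\cdot,\cdot)$ as an element of $L^2(\mathcal T\times I,\mu_0\otimes\lambda_I)$, observe that $k'$ is then literally the inner product $\langle\featf(\varphi),\featf(\psi)\rangle$, and conclude positive semi-definiteness via $\sum_{i,j}c_ic_j k'(\varphi_i,\varphi_j)=\|\sum_i c_i\featf(\varphi_i)\|^2\ge 0$ together with the standard characterization of kernels \cite{foundationsML2018}.

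The only substantive difference is in how $L^2$-membership is established. The paper imposes the hypothesis that $\mathcal T$ is bounded in sup norm (say by $B$), which immediately gives a uniform bound $|\rho(\varphi,\xi,t)|\le M(\varphi)<\infty$ depending only on the atomic predicates of $\varphi$; since $\mu_0$ is a probability measure and $I$ is bounded, square-integrability is then a one-line estimate. Your structural induction on the STL syntax is more laborious but also more general: it does not need $\mathcal T$ bounded, only that $\mu_0$ has fast enough tail decay so that $\sup_{t\in I}|f_\pi(\xi(t))|\in L^2(\mathcal T,\mu_0)$ for each atomic $f_\pi$. The paper in fact mentions this relaxation as a remark. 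One caution: for the induction to close under $\until_{[a,b]}$ you need the invariant to be ``$\xi\mapsto\sup_{t\in I}|\rho(\varphi,\xi,t)|$ lies in $L^2(\mathcal T,\mu_0)$'', not merely $\rho\in L^2(\mathcal T\times I)$; your base case already tracks this stronger statement, but it should be stated explicitly as the inductive hypothesis.
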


In the previous definition, we can fix time to $t=0$ and remove the integration w.r.t. time. This simplified version of the kernel is called \emph{untimed}, to distinguish it from the \emph{timed} one introduced above. 
    
 \smallskip

In the rest of the paper, we mostly work with two derived kernels,  $\kr_0$ and $\kr$:

\begin{equation}
\small
\kr_0(\varphi, \psi) = \frac{\kr'(\varphi, \psi)}{\sqrt{\kr'(\varphi, \varphi)\kr'(\psi, \psi)}}\;\;\;\;\;\;\;\;\;
    \kr(x, y) = \exp\left(-\frac{1-2\kr_0(x,y)}{\sigma^2}\right).
\end{equation}
The normalized kernel $\kr_0$ rescales $\kr'$ to guarantee that $\kr(\varphi, \varphi) \geq \kr(\varphi, \psi)\,,\ \forall \varphi, \psi\in\mathcal P$. The Gaussian kernel $\kr$, additionally, allows us to introduce a soft threshold $\sigma^2$ to fine tune the identification of significant similar formulae in order to improve learning.
 The following proposition is straightforward in virtue of the closure properties of kernel functions \cite{foundationsML2018}:
\begin{corollary}\label{cor:norm_exp_kernel}
The functions  $\kr_0$ and $\kr$ are proper kernel functions.
  \end{corollary}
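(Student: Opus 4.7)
The plan is to reduce the claim to three standard closure properties of positive semi-definite kernels, all of which can be found in the reference \cite{foundationsML2018} already cited for Theorem~\ref{th:proper_kernel}. Namely: (i) if $k_1$ is a kernel and $c>0$, then $c\,k_1$ is a kernel; (ii) if $k_1,k_2$ are kernels, then $k_1\cdot k_2$ and $k_1+k_2$ are kernels; (iii) if $k_1$ is a kernel, then $\exp(k_1)$ is a kernel (via the Taylor expansion, which is a convergent combination of sums and products of $k_1$). From Theorem~\ref{th:proper_kernel} we already know $k'$ is a kernel, so everything downstream just has to be built from it using (i)--(iii).

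For $k_0$, I would invoke the \emph{normalization} construction. Concretely, let $\featf$ denote the feature map into the Hilbert space $H$ associated with $k'$ by Theorem~\ref{th:proper_kernel}, so that $k'(\varphi,\psi)=\langle \featf(\varphi),\featf(\psi)\rangle_H$. Assuming $k'(\varphi,\varphi)>0$ for every $\varphi$ of interest (which must be checked; it holds whenever $\rho(\varphi,\cdot,\cdot)$ is not $\mu_0\otimes dt$-almost everywhere zero, a mild condition discussed in Appendix~\ref{app:STLkernel}), define the normalized feature map $\tilde\featf(\varphi):=\featf(\varphi)/\|\featf(\varphi)\|_H$. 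Then
\begin{equation*}
k_0(\varphi,\psi)=\frac{\langle \featf(\varphi),\featf(\psi)\rangle_H}{\sqrt{\langle \featf(\varphi),\featf(\varphi)\rangle_H\,\langle \featf(\psi),\featf(\psi)\rangle_H}}=\langle \tilde\featf(\varphi),\tilde\featf(\psi)\rangle_H,
\end{equation*}
which exhibits $k_0$ as an inner product of a feature map and hence a proper kernel.

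For $k$, I would simply algebraically rewrite
\begin{equation*}
k(\varphi,\psi)=\exp\!\left(-\tfrac{1}{\sigma^2}\right)\cdot\exp\!\left(\tfrac{2}{\sigma^2}\,k_0(\varphi,\psi)\right).
\end{equation*}
The inner factor $\tfrac{2}{\sigma^2}k_0$ is a kernel by (i) applied to $k_0$, and exponentiating gives a kernel by (iii); multiplication by the positive constant $\exp(-1/\sigma^2)$ preserves the kernel property via (i) once more. Therefore $k$ is a proper kernel.

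The only place that requires any care, and the sole potential obstacle, is the positivity condition $k'(\varphi,\varphi)>0$ needed for the normalization step to be well defined; I would handle it by restricting attention to the quotient of $\mathcal{P}$ by the semantic equivalence $\varphi\sim\psi\iff \rho(\varphi,\cdot,\cdot)=\rho(\psi,\cdot,\cdot)$ almost everywhere, which is exactly the set on which $\featf$ becomes injective and $\|\featf(\varphi)\|_H>0$. Everything else is a routine application of the closure toolkit.
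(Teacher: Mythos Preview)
Your proposal is correct and follows essentially the same approach as the paper, which simply states the corollary is ``straightforward in virtue of the closure properties of kernel functions'' from \cite{foundationsML2018}. You have spelled out precisely which closure properties are invoked (normalization via a unit-norm feature map for $k_0$, and scalar multiplication plus exponentiation for $k$), which is more explicit than the paper itself; your caveat about needing $k'(\varphi,\varphi)>0$ for the normalization to be well defined is a genuine technical point that the paper glosses over.
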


\subsection{The Base Measure $\mu_0$}
\label{par:trajectory space.}

  In order to make our kernel meaningful and not too expensive to compute, we endow the trajectory space $\mathcal T$ with a probability distribution such that more complex trajectories are less probable.  We use the total variation~\cite{pallara2000functions} of a~trajectory\footnote{The total variation of function $f$ defined on  $[a, b]$ is  $V_a^b(f) = \sup_{P\in\mathbf P}\sum_{i=0}^{n_{P}-1}|f(x_{i+1})-f(x_i)|$, where $\mathbf P=\{ P=\{x_{0},\dots ,x_{n_{P}}\}\mid P{\text{ is a partition of }}[a,b]\} $.} and the number of changes in its monotonicity  as indicators of its ``complexity''.

  

  
  Because later we use the probability measure $\mu_0$ for Monte Carlo approximation of the kernel $\kr$, it is advantageous to define $\mu_0$ algorithmically, by providing a \emph{sampling algorithm}.
  The algorithm samples from continuous piece-wise linear functions, a dense subset of $\mathcal{T}$, and is described in detail in Appendix~\ref{app:mu}.  Essentially, we simulate the value of a trajectory at discrete steps $\Delta$, for a total of $N$ steps (equal to 100 in the experiments) by first sampling its total variation distance from a squared Gaussian distribution, and then splitting such total variation in the single steps, changing sign of the derivative at each step with small probability $q$. We then interpolate linearly  between consecutive points of the discretization and make the trajectory continuous piece-wise linear.
 
  In Section \ref{sec:exp:processes}, we show that using this simple measure still allows us to make predictions with remarkable accuracy even for other stochastic processes on $\mathcal{T}$.

\subsection{Normalized Robustness} 
Consider the predicates $x_1 - 10 \geq 0$ and $x_1- 10^7 \geq 0$.
Given that we train and evaluate on $\mu_0$, whose trajectories typically take values in the interval $[-3,3]$ (see also Fig.~\ref{fig:trajectories_base} in App.~\ref{app:mu}), both predicates are essentially equivalent for satisfiability. However, their robustness on the same trajectory differs by orders of magnitude.
This very same effect, on a smaller scale, happens also when comparing $x_1\geq 10$ with $x_1\geq 20$. In order to ameliorate this issue and make the learning less sensitive to outliers, we also consider 
 a {\it normalized robustness}, where we rescale the value of the secondary (output) signal to $(-1,1)$ using a sigmoid function. More precisely, given  an atomic predicate $\pi(x_1,...,x_n)=(f_\pi(x_1,...,x_n)\geq 0)$, we define $\rhon(\pi,\xi,t) = \tanh{(f_\pi(x_1,...,x_n))}$. The other operators of the logic follow the same rules of the standard robustness described in Section~\ref{sec:STL}. 
 Consequently, both $x_1 - 10 \geq 0$ and $x_1- 10^7 \geq 0$ are mapped to very similar robustness for typical trajectories w.r.t. $\mu_0$, thus reducing the impact of outliers.

\subsection{PAC Bounds for the STL Kernel}\label{par:pac}

Probably Approximately Correct (PAC) bounds \cite{foundationsML2018} for learning  provide a bound on  the generalization error on unseen data (known as risk) in terms of the training loss plus additional terms which shrink to zero as the number of samples grows. These additional terms typically depend also on some measure of the complexity of the class of models we consider for learning (the so-called hypothesis space), which ought to be finite. The bound holds with  probability $1-\delta$, where $\delta>0$ can be set arbitrarily small at the price of the bound getting looser.

In the following, we will state a PAC bound for  learning with STL kernels for classification. A bound for regression, and more details on the the classification bound, can be found in Appendix~\ref{sec:app:pac}.
We first recall the definition of the risk $L$ and the empirical risk $\hat{L}$ for classification. The former is an average of the zero-one loss over the data generating distribution $p_{data}$, while the latter averages over a finite sample $D$ of size $m$ of $p_{data}$.  Formally,
\[L(h) = \mathbb{E}_{\varphi \sim p_{data}}\left[ \mathbb{I}\big(h(\varphi) \neq y(\varphi)\big) \right] \;\;\;\text{and}\;\;\; \hat{L}_D(h) = \frac{1}{m}\sum_{i=1}^m \mathbb{I}\big(h(\varphi_i) \neq y(\varphi_i)\big), \]
where $y(\varphi)$ is the actual class (truth value) associated with $\varphi$, in contrast to the predicted class $h(\varphi)$, and $\mathbb{I}$ is the indicator function.

The major issue with PAC bounds for kernels is that we need to constrain in some way the model complexity. This is achieved by requesting the functions that can be learned  have a bounded norm.  We recall that the norm $\|h\|_{\mathbb{H}}$ of a function $h$ obtainable by kernel methods, i.e. $h(\varphi) = \sum_{i=1}^{N}\alpha_i k(\varphi_i,\varphi)$, is   $\|h\|_{\mathbb{H}} = \boldsymbol{\alpha}^T K \boldsymbol{\alpha}$, where $K$ is the Gram matrix (kernel evaluated between all pairs of input points, $K_{ij}=\kr(\varphi_i,\varphi_j)$). 
The following theorem, stating the bounds, can be proved by combining  bounds on the Rademacher complexity for kernels with Rademacher complexity based PAC bounds, as we show in Appendix~\ref{sec:app:pac}.

\begin{theorem}[PAC bounds for Kernel Learning in Formula Space]
\label{th:PAC}
Let $k$ be a kernel (e.g.  normalized, exponential) for STL formulae  $\mathcal{P}$, and fix $\Lambda > 0$. 
Let $y:\mathcal{P}\rightarrow \{-1,1\}$ be a target function to learn as a classification task. Then for any $\delta >0$ and hypothesis function $h$ with $\|h\|_{\mathbb{H}}\leq \Lambda$, with probability at least $1-\delta$ it holds that
\begin{equation}
\label{PACboundClassificationSTL}
 L(h) \leq \hat{L}_D(h) + \frac{\Lambda}{\sqrt{m}} + 3\sqrt{\frac{\log\frac{2}{\delta}}{2m}}.   
\end{equation}
\end{theorem}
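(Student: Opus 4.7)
The plan is to derive the bound by combining two classical ingredients from statistical learning theory: (i) a uniform PAC bound in terms of the (empirical) Rademacher complexity of the hypothesis class, and (ii) an upper bound on the Rademacher complexity of a norm-bounded ball in an RKHS expressed in terms of kernel evaluations. Both tools are standard, so the work is in assembling them and tracking constants correctly; nothing specific to the formula space $\mathcal{P}$ enters beyond the fact that $\kr$ is a proper kernel (Theorem~\ref{th:proper_kernel} and Corollary~\ref{cor:norm_exp_kernel}).

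First I would recall the generic Rademacher-based PAC bound: for a hypothesis class $\mathcal{H}_\Lambda = \{h\in\mathbb{H} : \|h\|_{\mathbb{H}}\leq \Lambda\}$ and a loss taking values in $[0,1]$, with probability at least $1-\delta$ over the draw of $D\sim p_{data}^m$, every $h\in\mathcal{H}_\Lambda$ satisfies
$$L(h) \;\leq\; \hat{L}_D(h) \;+\; 2\,\hat{\mathfrak{R}}_D(\mathcal{H}_\Lambda) \;+\; 3\sqrt{\frac{\log(2/\delta)}{2m}}.$$
Since the target loss here is the $0/1$ classification loss, which is not Lipschitz, I would pass through a $1/2$-Lipschitz surrogate that upper-bounds the $0/1$ loss (e.g.\ a suitably scaled ramp loss), and then invoke Talagrand's contraction lemma to control the Rademacher complexity of the composed loss class by that of $\mathcal{H}_\Lambda$ itself (absorbing the Lipschitz constant into the overall factor so that the $2\,\hat{\mathfrak{R}}_D$ term becomes $\hat{\mathfrak{R}}_D$).

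Second I would bound $\hat{\mathfrak{R}}_D(\mathcal{H}_\Lambda)$ using Cauchy--Schwarz in the RKHS and Jensen's inequality:
$$\hat{\mathfrak{R}}_D(\mathcal{H}_\Lambda) \;=\; \mathbb{E}_{\boldsymbol{\sigma}}\!\left[\sup_{\|h\|_{\mathbb{H}}\leq \Lambda} \frac{1}{m}\sum_{i=1}^m \sigma_i h(\varphi_i)\right] \;\leq\; \frac{\Lambda}{m}\,\mathbb{E}_{\boldsymbol{\sigma}}\!\left\|\sum_{i=1}^m \sigma_i \featf(\varphi_i)\right\|_{\mathbb{H}} \;\leq\; \frac{\Lambda}{m}\sqrt{\sum_{i=1}^m \kr(\varphi_i,\varphi_i)}.$$
Because $\kr$ is the normalized (respectively exponential) kernel of Section~\ref{par:kernel}, one has $\kr(\varphi,\varphi)\leq 1$ for every $\varphi\in\mathcal{P}$, so the trace of the Gram matrix is at most $m$ and we conclude $\hat{\mathfrak{R}}_D(\mathcal{H}_\Lambda) \leq \Lambda/\sqrt{m}$.

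Substituting this bound into the generic PAC inequality (with the contraction step accounting for the constant in front of the complexity term) yields exactly inequality~\eqref{PACboundClassificationSTL}. The only step that requires any care is the contraction/surrogate argument that removes the factor of $2$ before $\hat{\mathfrak{R}}_D$ while preserving the upper-bound relation with the $0/1$ loss; this is the one place where a reader might quibble about constants, and where I would be most careful in the detailed write-up. Everything else is a direct specialization of known inequalities to the kernel $\kr$ on $\mathcal{P}$.
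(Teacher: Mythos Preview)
Your proposal is correct and follows essentially the same route as the paper: the paper also combines a Rademacher-based PAC bound for classification with the standard RKHS bound $R_D(\mathcal{H}_\Lambda)\le \Lambda r/\sqrt{m}$, then uses $r=1$ for the normalized/exponential kernel. The only cosmetic difference is how the constant in front of the complexity term is handled: the paper invokes the identity $R_D(\mathcal{G})=\tfrac{1}{2}R_D(\mathcal{H})$ for the 0--1 loss class directly, whereas you go through a $1/2$-Lipschitz surrogate plus Talagrand contraction; both arrive at the same $\Lambda/\sqrt{m}$ term.
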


The previous theorem gives us a way to control the learning error, provided we restrict the full hypothesis space. Choosing a value of $\Lambda$ equal to 40 (the typical value we found in experiments) and  confidence 95\%, the bound predicts around 650\,000 samples to obtain an accuracy bounded by the accuracy on the training set plus 0.05. This theoretical a-priori bound is much larger than the training set sizes in the order of hundreds, for which we observe good performance in practice.

\section{Experiments}
\label{sec:exp}
We test the performance of the STL kernel in predicting (a) robustness and satisfaction on single trajectories, and (b) expected robustness and satisfaction probability estimated statistically from  $K$ trajectories. 
Besides, we test the kernel on trajectories sampled according to the a-priori base measure $\mu_0$ and according to the respective stochastic models to check the generalization power of the generic $\mu_0$-based kernel. Here we report the main results; for additional details as well as plots and tables for further ways of measuring the error, we refer the interested reader to Appendix \ref{ap:exp}.

Computation of the STL robustness and of the kernel were implemented in Python exploiting PyTorch \cite{paszke2017automatic} for parallel computation on GPUs. All the experiments were run on a AMD Ryzen 5000 with 16 GB of RAM and on a consumer NVidia GTX 1660Ti with 6 GB of DDR6 RAM. We run each experiment 1000 times for single trajectories and 500 for expected robustness and satisfaction probability were we use 5000 trajectories for each run.
Where not  indicated differently, each result is the mean over all experiments. Computational time is fast: the whole process of sampling from $\mu_0$, computing the kernel, doing regression for training, test set of size 1000 and validation set of size 200, takes about 10 seconds on GPU.  
We use the following acronyms: RE = relative error, AE= absolute error,
MRE = mean relative error, MAE = mean absolute error, MSE = mean square error.

\subsection{Setting}
\label{subsec:setting}
To compute the kernel itself, we sampled 10\,000 trajectories from $\mu_0$, using the sampling method described in Section~\ref{par:trajectory space.}.  As regression algorithm (for optimizing $\vec\alpha$ of Sections \ref{sec:background:kernel} and \ref{sec:overview}) we use the  \emph{Kernel Ridge Regression} (KRR)~\cite{murphy2012machine}. KRR was as good as, or superior, to other regression techniques (a comparison can be found in Appendix \ref{app:setting}).

\parag{Training and test set} are composed of M formulae sampled randomly according to the measure $\mathcal F_0$ given by a syntax-tree random recursive growing scheme (reported in detail in Appendix \ref{app:setting}), where  
the root is always an operator node and each node is an atomic predicate with
probability $p_{\mathit{leaf}}$ (fixed in this experiments  to $0.5$), or, otherwise, another operator node (sampling the type using a uniform distribution). In these experiments, we fixed $M = 1000$.

\parag{Hyperparameters.} 
 We vary several hyperparameters, testing their impact on errors and accuracy. 
Here we briefly summarize the results. \\
\noindent - The impact of \emph{formula complexity}: We vary the parameter $p_{\mathit{leaf}}$ in the formula generating algorithm in the range $[0.2,0.3,0.4,0.5]$ (average formula size around $[100,25,10,6]$ nodes in the syntax tree), but only a slight increase in the median relative error is observed for more complex formulae: $[0.045,0.037,0.031,0.028]$. \\
\noindent  - The addition of \emph{time bounds} in the formulae has essentially no impact on the performance in terms of errors. \\
\noindent - There is a very small improvement (<10\%) using \emph{integrating signals w.r.t. time} (timed kernel) vs using only robustness at time zero (untimed kernel), but at the cost of a 5-fold increase in computational training time. \\
\noindent  - \emph{Exponential kernel $\kr$} gives a 3-fold improvement in accuracy w.r.t. normalized kernel $\kr_0$.\\
  \vspace*{-0.4cm}
\begin{wrapfigure}{r}{0.45\textwidth}
\centering
  \vspace*{-1.2cm}
  	\includegraphics[scale=0.41]{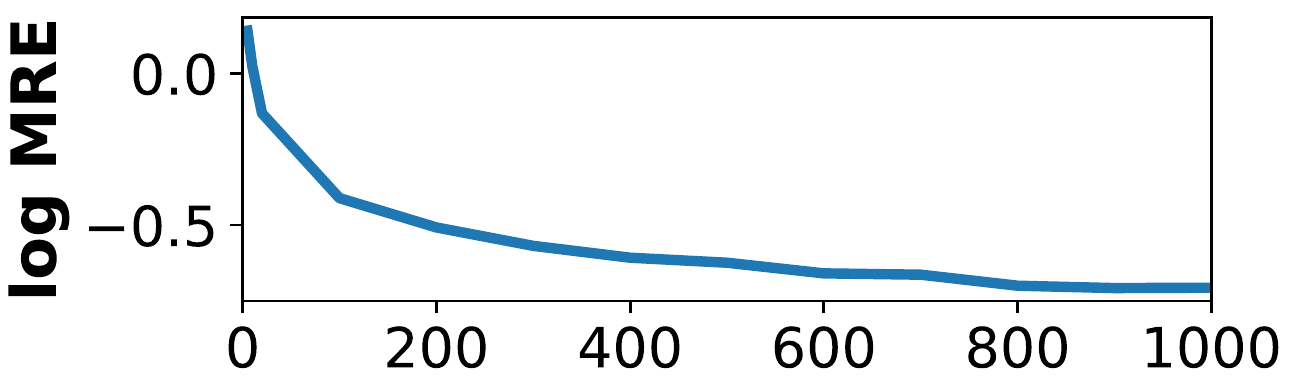}
 	  \vspace*{-0.5cm}
  \caption{\label{fig:training-size}
  MRE of  predicted average robustness vs the size of the training set.}
 	  \vspace*{-0.7cm}
  \end{wrapfigure}  
\noindent - \emph{Size of training set}:   The error in estimating robustness decreases as we increase the amount of training formulae, see Fig.~\ref{fig:training-size}.  However, already for a few hundred formulae, the predictions are quite accurate.\\
\noindent - \emph{Dimensionality of signals}: Error tends to increase linearly with dimensionality. 
For 1000 formulae in the training set, from dimension 1 to 5, MRE is [0.187, 0.248, 0.359,
 0.396, 0.488] and MAE is [0.0537, 0.0735,
 0.0886, 0.098, 0.112].


\begin{figure}[!b]
    \centering
     \includegraphics[width=.47\linewidth]{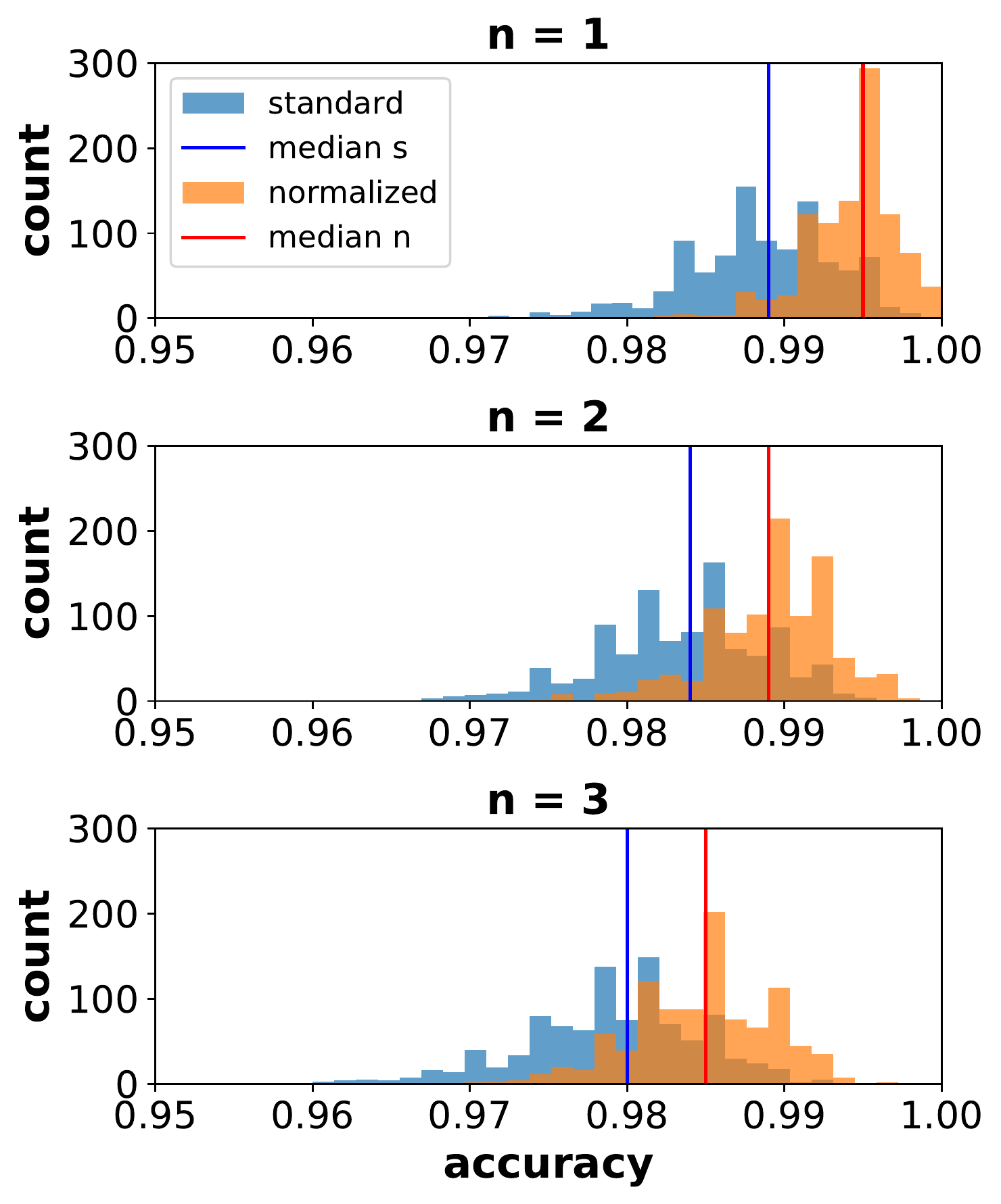}
         \includegraphics[width=.44\linewidth]{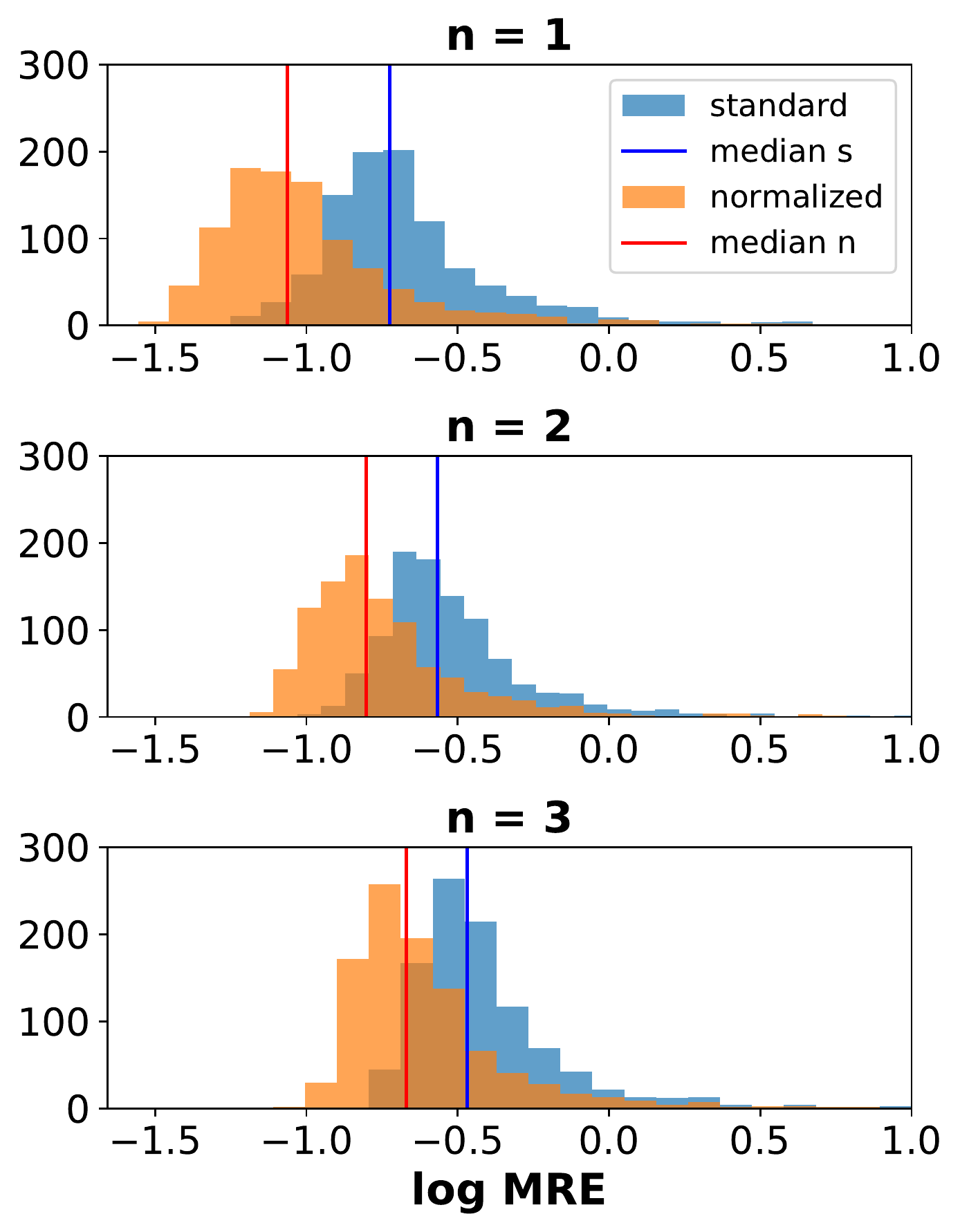}
         \vspace*{-5mm}
    \caption{Accuracy of satisfiability prediction (left)  and  $log_{10}$ of the MRE (right)   over all $1000$ experiments for standard and normalized robustness for samples from $\mu_0$ with dimensionality of signals $n=1,2,3$.
    (Note the logarithmic scale, with log value of -1 corresponding to 0.1 of the standard non-logarithmic scale.)
    }
    \label{fig:acc_mre}
\end{figure}

\subsection{Robustness and Satisfaction on Single Trajectories}\label{sec:exp:single}
In this experiment, we predict the Boolean satisfiability of a formula using as a~discriminator the sign of the robustness.  We generate the training and test set of formulae using $\mathcal F_0$, and
the function sampling trajectories from $\mu_0$  with dimension $n=1,2,3$. We evaluate the standard robustness $\rho$ and the normalized one $\rhon$ of each trajectory for each formula in the training and test sets. We then predict $\rho$ and $\rhon$ for the test set and check if the sign of the predicted robustness agrees with that of the true one, which is a proxy for satisfiability, as discussed previously. Accuracy and distribution of the $\log_{10}$ MRE  over all experiments are reported in Fig. \ref{fig:acc_mre}. 
Results are good for both  but the  normalized robustness performs always better. Accuracy is always greater than 0.96 and gets slightly worse when increasing the dimension.
We report the mean of quantiles of $\rho$ and $\rhon$ for RE and AE for n=3 (the toughest case) in Table~\ref{tab:quantile_all} (top two rows).  Errors for the normalized one are also always lower and slightly worsen when increasing the dimension. 

\newcolumntype{g}{>{\columncolor{Gray}}c}
\newcommand{\ccell}[1]{\begin{tabular}[t]{@{}c@{}}#1\end{tabular}}
   \begin{table}[t]
  	\begin{center}
  		\hspace*{0cm}
  		\caption{Mean of quantiles for RE and AE over all experiments for prediction of the standard and normalized robustness ($\rho$, $\rhon$),  expected robustness ($R$, $\hat{R}$),  
  		the satisfaction probability  (S) with trajectories sampled from $\mu_0$ and signals with dimensionality n=3, and of the normalized expected robustness on trajectories sampled from \emph{Immigration} (1 dim), \emph{Isomerization} (2 dim), and \emph{Transcription} (3 dim)}
  		\label{tab:quantile_all}
\begin{tabular}{lllllll|lllll}
\toprule
  		{} & \multicolumn{6}{c}{relative error (RE) } & \multicolumn{5}{c}{absolute error (AE)} \\
  		\midrule
{} &   5perc  &  1quart &   median &   3quart &   95perc &   99perc &  1quart &   median &   3quart &   99perc \\
\midrule
$\rho\phantom{aa}$ & 0.0035$\phantom{a}$ & 0.018 & 0.045 & 0.141 & 0.870& 4.28  & 0.016 & 0.039 & 0.105 & 0.689 \\
$\rhon$ & 0.0008 &0.001& 0.006 & 0.019 & 0.564 & 2.86 & 0.004 & 0.012 & 0.039 & 0.286 &\\
\midrule
$R$ & 0.0045 & 0.021 & 0.044 & 0.103 & 0.548& 2.41&  0.013 & 0.029 & 0.070& 0.527 & \\
$\hat{R}$ & 0.0006 & 0.003 & 0.007 & 0.020 & 0.133&0.55&   0.001 & 0.003 & 0.007 & 0.065\\
\midrule
$S$ & 0.0005 & 0.003 & 0.008 & 0.030 & 0.586 &  81.8 & 0.001 & 0.003 & 0.007 & 0.072\\
\midrule
{$\hat{R}$} imm & 0.0053&   0.0067 & 0.016 & 0.049 & 0.360 & 1.83& 0.0037 & 0.008 & 0.019 & 0.151\\

{$\hat{R}$} iso & 0.0030&0.0092 & 0.026 & 0.091 & 0.569 & 2.74& 0.0081 & 0.021 & 0.057 & 0.460\\
{$\hat{R}$} trancr & 0.0072 &  0.0229 & 0.071 & 0.240 & 1.490 & 7.55& 0.018 & 0.049 & 0.12 & 0.680\\
\bottomrule
\end{tabular}
  	\end{center}
  	\vspace*{-0.6cm}
  \end{table}  
\begin{figure}[t!]
    \centering
    \includegraphics[width=.45\linewidth]{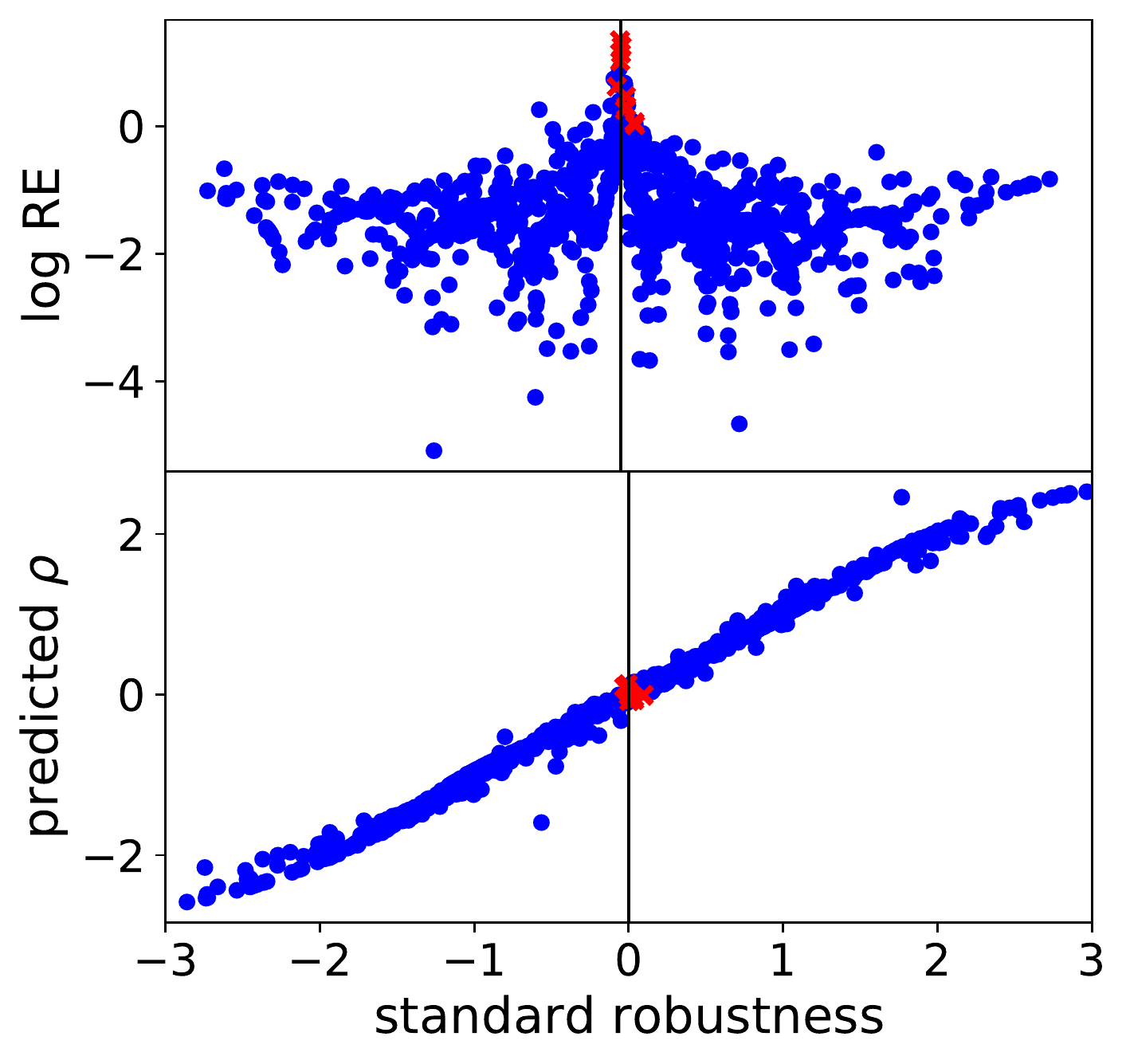}~~~
        \includegraphics[width=.46\linewidth]{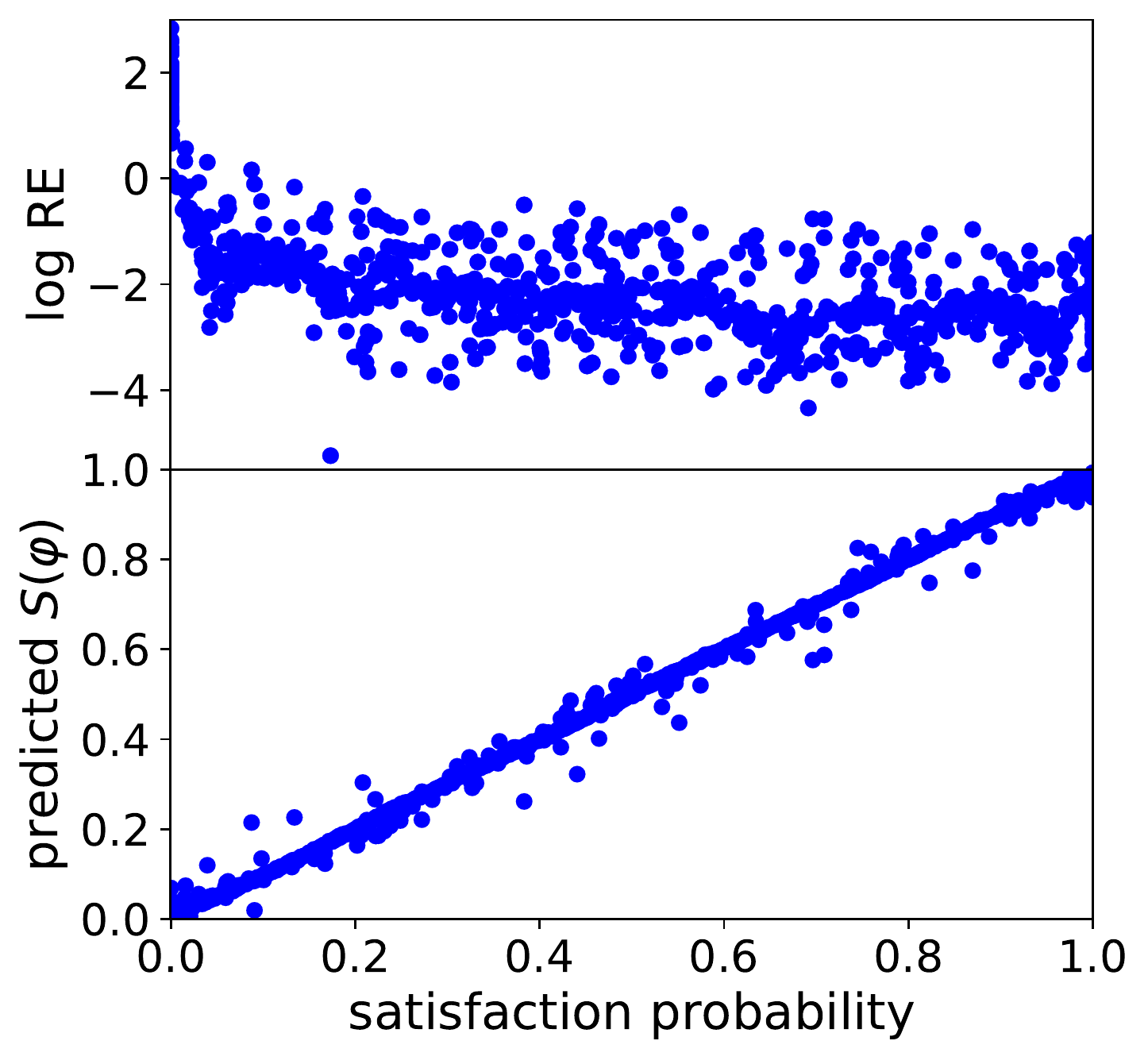}
    \vspace{-0.5em}
\caption{ (left) True  standard robustness vs predicted values and RE on  single trajectories sampled from $\mu_0$. The misclassified formulae are the red crosses. (right) Satisfaction probability vs predicted values and RE (again  for a single experiment).}
    \label{fig:single_exp_scatter}
    \vspace{-0.5cm}
\end{figure}

  In Fig.~\ref{fig:single_exp_scatter} (left), we plot the true standard robustness for random test formulae in contrast to their predicted values and the corresponding log RE. Here we can clearly observe that the misclassified formulae (red crosses) tend to have a robustness close to zero, where even tiny absolute errors unavoidably produce large relative errors and frequent misclassification.

We test our method also on three specifications of the ARCH-COMP 2020 \cite{ARCH20}, to show that it works well even on real formulae. We obtain still good results, with an accuracy equal to 1, median AE = $0.0229$, and median  RE = $0.0316$ in the worst case (Appendix~\ref{app:subsec_single}).  


\subsection{Expected Robustness and Satisfaction Probability}\label{sec:exp:sat}

In these experiments, we approximate the expected robustness and the satisfaction probability using a fixed set of 5000 trajectories sampled according to $\mu_0$, evaluating it for each formula in the training and test sets, and predicting the expected standard $R(\varphi)$ and normalized robustness $\hat{R}(\varphi)$ and the satisfaction probability $S(\phi)$ for the test set. 

For the robustness,  mean of quantiles of RE and AE show good results as can be seen in Table~\ref{tab:quantile_all}, rows 3--4. Values of MSE, MAE and MRE are smaller than those achieved on single trajectories with medians for n=3 equal to 0.0015, 0.0637, and 0.2 for the standard robustness and 0.000212, 0.00688, and 0.0478 for the normalized one. Normalized robustness continues to outperform the standard one. 

For the satisfaction probability, values of MSE and MAE errors are very low, with a median for n=3 equal to 0.000247 for MSE and 0.0759 for MAE. MRE instead is higher and equal to 3.21.  The reason can be seen in Fig.~\ref{fig:single_exp_scatter}~(right), where we plot the satisfaction probability vs the relative error for a random experiment.
We can see that all large relative errors are concentrated on formulae with satisfaction probability close to zero, for which even a small absolute deviation can cause large errors. 
Indeed the $95$th percentile of RE is still pretty low, namely $0.586$ (cf.~Table~\ref{tab:quantile_all}, row 5), while we observe the  $99$th percentile of RE blowing up to 81.8 (at points of near zero true probability). This heavy tailed behaviour suggests to rely on median for a proper descriptor of typical errors, which is 0.008 (hence the typical relative error is less than 1\%).


\subsection{Kernel Regression on  Other Stochastic Processes}\label{sec:exp:processes}
 The last aspect that we investigate is whether the definition of our kernel w.r.t. the fixed measure $\mu_0$  can be used for making predictions also for other stochastic processes, i.e. without redefining and recomputing the kernel every time that we change the distribution of interest on the trajectory space. 

  \begin{wrapfigure}{R}{0.57\textwidth}
  \vspace*{-0.9cm}
  	\includegraphics[scale=0.55]{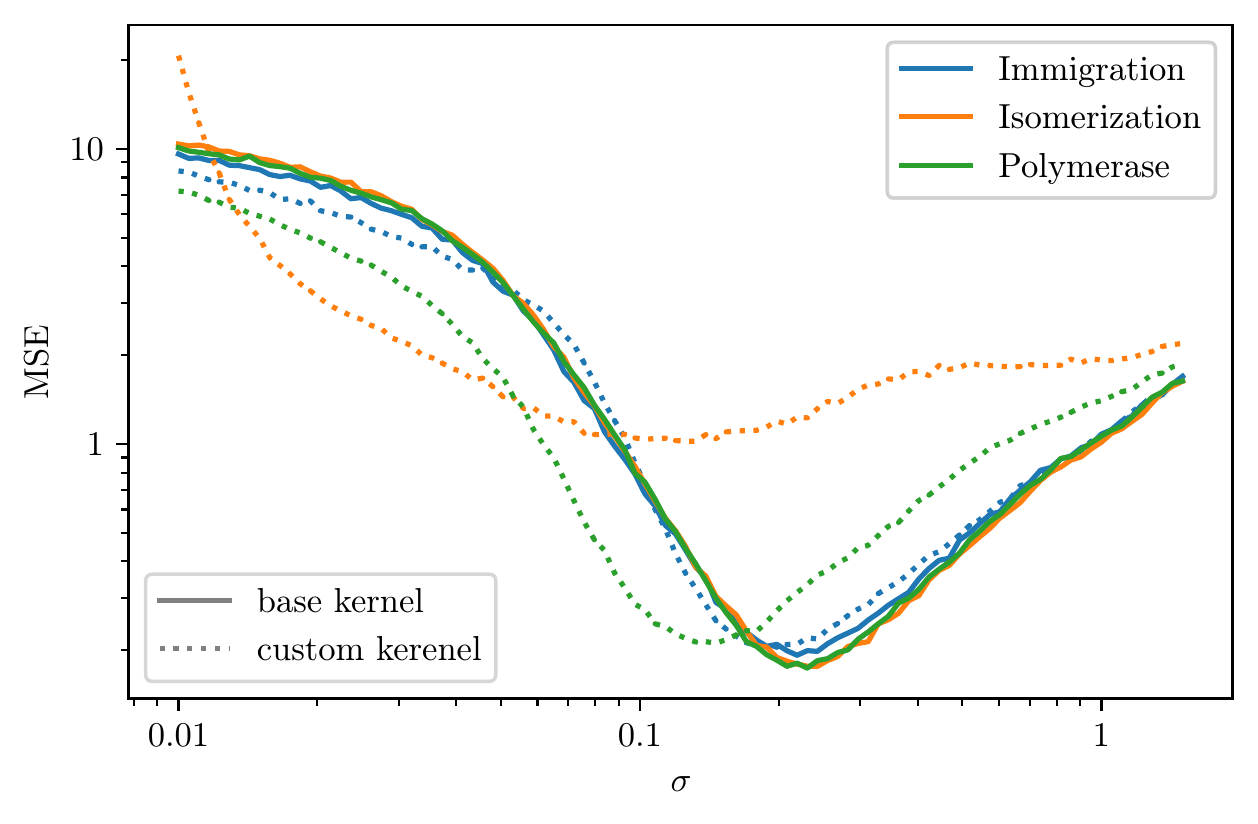}
  	  \vspace*{-0.9cm}
  \caption{\label{fig:compare_sigma_exp_rob_other_stoch}
  Expected robustness prediction using  the kernel evaluated according to the {\it base kernel}, and a {\it custom kernel}.    We depict  MSE as a function of the bandwidth $\sigma$ of the Gaussian kernel (with both axes in logarithmic scale). 
  	}
  	  \vspace*{-0.9cm}
  \end{wrapfigure}
  
\noindent{\bf Standardization.} To use the same kernel of $\mu_0$ we need to standardize the trajectories so that they have the same scale as our base measure. Standardization, by subtracting to each variable its sample mean and dividing by its sample standard deviation, will result in a similar range of values as that of trajectories sampled from $\mu_0$, thus removing distortions due to the presence of different scales and allowing us to reason on the trajectories using thresholds like those generated by the STL sampling algorithm.

\noindent{\bf Performance of base and custom kernel.} 
We 
consider three different stochastic models: \emph{Immigration} (1 dim), \emph{Isomerization} (2 dim)  and \emph{Polymerise} (2 dim),   simulated using the Python library StochPy \cite{maarleveld2013stochpy} (see also Appendix~\ref{app:exp:models} and Fig.~\ref{fig:stoch_trajectories}). 
We compare the performance using the kernel evaluated according to the base measure $\mu_0$ (base kernel), and a custom kernel 
computed replacing $\mu_0$ with the measure on trajectories given by the stochastic model itself. Results show that the base kernel is still the best performing one, see Fig.~\ref{fig:compare_sigma_exp_rob_other_stoch}.
This can be explained by the fact that the measure $\mu_0$ is broad in terms of coverage of the trajectory space, so even if two formulae are very similar, there will be, with a high probability, a set of trajectories for which the robustnesses of the two formulae are very different. This allows us to better distinguish among STL formulae, compared to models that tend to focus the probability mass on narrower regions of $\mathcal{T}$ as, for example, the \emph{Isomerization} model, which is the model with the most homogeneous trajectory space and has indeed the worst performance.

\noindent{\bf Expected Robustness} 
Setting is the same as for the corresponding experiment on $\mu_0$. Instead of the Polymerase model, we consider here a \emph{Transcription} model \cite{maarleveld2013stochpy} (Appendix~\ref{app:exp:models}), to have also a 3-dimensional model.
Results of quantile for RE and AE for the normalized robustness are reported in Table~\ref{tab:quantile_all}, bottom three rows.
The results on the different models are remarkably promising, with the Transcription model (median RE 7\%) performing a bit worse than Immigration and Isomeration  (1.6\% and 2.6\% median RE). 

Similar experiments have been done also on single trajectories, where we obtain similar results as for the Expected Robustness (Appendix~\ref{app:subsec_stoch_single}).

\vspace*{-0.5em}
\section{Conclusions}\label{sec:conc} 
\vspace*{-0.5em}

To enable any learning over formulae, their features must be defined. 
We circumvented the typically manual and dubious process by adopting a more canonic, infinite-dimensional feature space, relying on the quantitative semantics of STL.
To effectively work with such a space, we defined a kernel for STL.
To further overcome artefacts of the quantitative semantics, we proposed several normalizations of the kernel.
Interestingly, we can use \emph{exactly} the same kernel with a fixed base measure over trajectories across different stochastic models, not requiring any access to the model.
We evaluated the approach on realistic biological models from the stochpy library 
as well as on realistic formulae from Arch-Comp
and concluded a good accuracy already with a few hundred training formulae.



Yet smaller training sets are possible through a wiser choice of the training formulae: one can incrementally pick  formulae significantly different (now that we have a~similarity measure on formulae) from those already added.
Such active learning results in a better coverage of the formula space, allowing for a more parsimonious training set.
Besides estimating robustness of concrete formulae, one can lift the technique to computing STL-based distances between stochastic models, given by differences of robustness over \emph{all} formulae, similarly to \cite{jan2016linear}.
To this end, it suffices to resort to a dual kernel construction, and build non-linear embeddings of formulae into finite-dimensional real spaces using the kernel-PCA techniques \cite{murphy2012machine}. 
Our STL kernel, however, can be used for many other tasks, some of which we sketched in Introduction. 
Finally, to further improve its properties, another direction for future work is to refine the quantitative semantics so that equivalent formulae have the same robustness, e.g. using ideas like in \cite{madsen2018metrics}.


\bibliographystyle{alpha}
\bibliography{biblio}

\clearpage
\appendix

\begin{center}
\textbf{\Large Appendix}
\end{center}

\section{Sampling algorithm for the base measure $\mu_0$}\label{app:mu}

In this section, we detail a pseudocode of the  sampling algorithm over piecewise linear functions that we use for Monte Carlo approximation. In doing so, we sample from a dense subset of $C(I), I= [a,b]$. 
  The sampling algorithm is the following:
  \begin{enumerate}
  	\item Set a discretization step $\Delta$; define $N = \frac{b-a}{\Delta}$ and $t_i = a + i\Delta$;
  	\item Sample a starting point $\xi_0\sim\normal(m', \sigma')$ and set $\xi(t_0) = \xi_0$;
  	\item Sample $K\sim(\normal(m'', \sigma''))^2$, that will be the total variation of $\xi$;
  	\item Sample $N-1$ points $y_1,...,y_{N-1}\sim\uniform([0, K])$ and set $y_0=0$ and $y_n=K$;
  	\item Order $y_1, ..., y_{N-1}$ and rename them such that $y_1 \leq y_2 \leq...\leq y_{N-1}$;
  	\item Sample $s_0\sim\discrete(-1, 1)$;
  	\item Set iteratively $\xi(t_{i+1}) = \xi(t_i) + s_{i+1}(y_{i+1}-y_i)$ with $s_{i+1} = s_is$,\\
  	$P(s=-1) = q$ and $P(s=1) = 1-q$, for $i = 1, 2, ..., N$.
  \end{enumerate}

 \begin{wrapfigure}{R}{0.40\textwidth}
\vspace*{-0.5cm}
\includegraphics[scale=0.4]{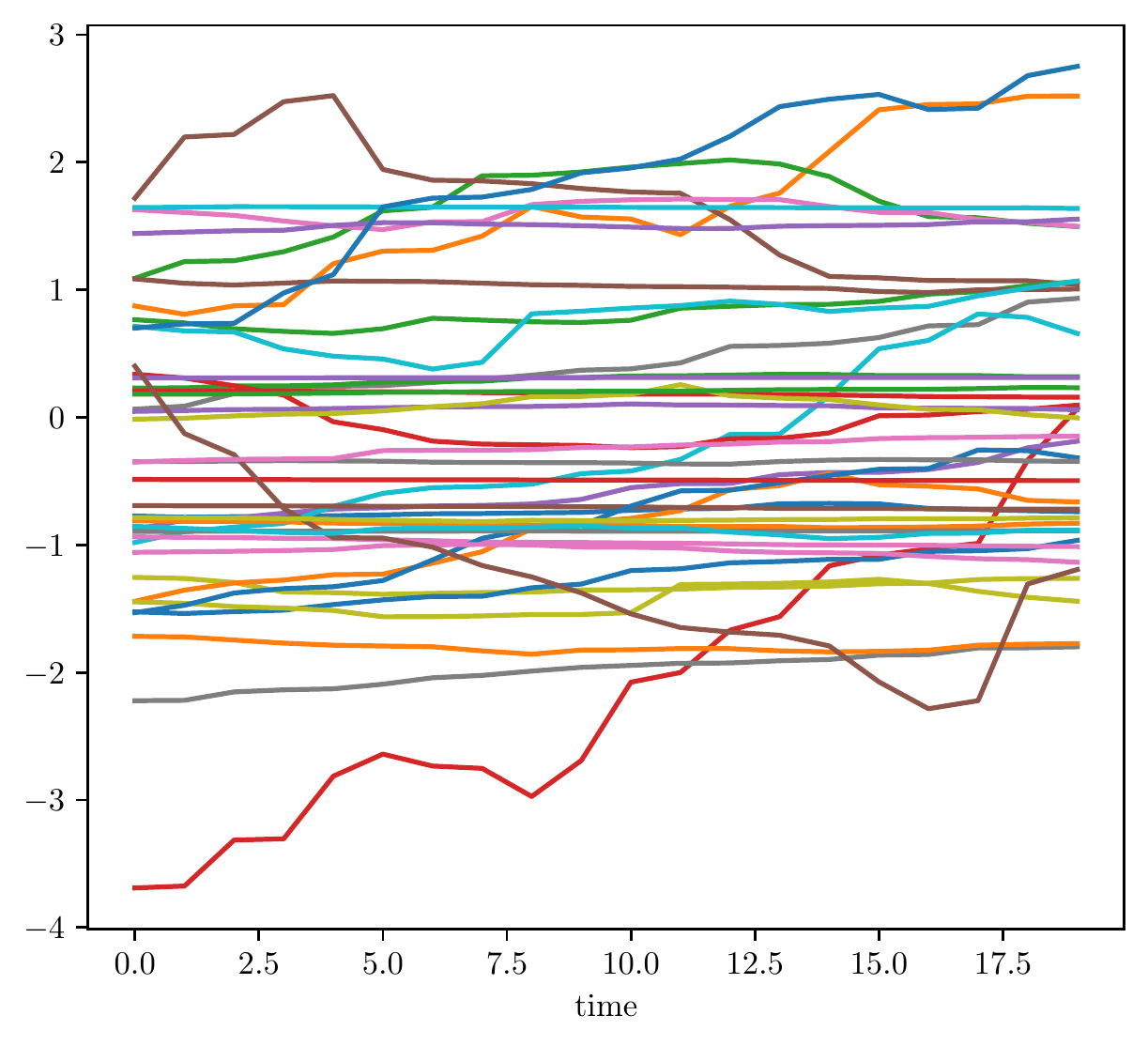}
\vspace*{-0.5cm}
  	\caption{\label{fig:trajectories_base}Trajectories randomly sampled from $\mu_0$}
\vspace*{-0.7cm}
\end{wrapfigure}

  Finally, we  linearly interpolate between consecutive points of the discretization and make the trajectory continuous, i.e., $\xi\in C(I)$.
  For our implementation, we fixed the above parameters as follows: $a = 0$, $b = 100$,  $h = 1$,
$m' = m'' = 0"$, $\sigma' = 1$,
 $\sigma'' = 1$,
 $q = 0.1$.

 Note that this algorithm, conditioned on the total variation $K$, samples trajectories with sup norm bounded by $K$. As $K$ is sampled from a squared Gaussian distribution, it is not guaranteed to be finite, however the probability of having a large sup norm  will decay exponentially with $K$, which  preserves $L_2$ integrability and the validity of our kernel definition (see the proof of theorem \ref{th:L2}).

\section{Technical details on the STL Kernel}\label{app:STLkernel}
    We consider the map $h: \mathcal{P}\to C(\mathcal{T}\times I)$ defined by $h(\varphi)(\xi, t) = \rho(\varphi, \xi, t)$, and $h_0: \mathcal{P}\to C(\mathcal{T})$ defined by $h_0(\varphi)(\xi) = \rho(\varphi, \xi, 0)$, where $C(X)$ denotes the set of the continuous functions on the topological space $X$. It can be proved that
  the functions in $h(\mathcal P)$ are square integrable in $\mathcal{T}\times I$ and those in $h_0(\mathcal P)$ are square integrable in $\mathcal{T}$,  hence we can use the scalar product in $L^2$ as a kernel for $\mathcal{P}$.   
  
  Before formally describing the theorem and is proof, let us recall the definition of $L^2$ and its inner product.
  \begin{definition}
  	Given a measure space $(\Omega, \mu)$, we call Lebesgue space $L^2$ the space defined by
  	$$L^2(\Omega) = \{f\in\Omega: \|f\|_{L^2} < \infty\}\,,$$
  	where $\|\cdot\|_{L^2} $ is a norm defined by
  	$$\|f\|_{L^2} =\left(\int_\Omega |f|^2 d\mu\right)^{\half}\,.$$
  	We define the function $\bangle{\cdot,\cdot}_{L^2}:L^2(\Omega)\times L^2(\Omega)\to\reals$ as
  	$$\bangle{f,g}_{L^2} = \int_\Omega fg\,.$$
  	It can be easily proved that $\bangle{\cdot, \cdot}_{L^2}$ is an inner product.\footnote{An inner product $\bangle{\mathbf{x_1},\mathbf{x_2}}$ maps  vectors $\mathbf{x_1}$ and $\mathbf{x_2}$ of a vector space into a real number. It is bi-linear, symmetric and positive definite (ie positive when $\mathbf{x_1}\neq\mathbf{x_2}$) and generalises the notion of scalar product to more general spaces, like Hilbert ones.}
  \end{definition}
  The following is a standard result.
  \begin{proposition}[e.g.~\cite{kothe1983topological}]
  	$L^2(\Omega)$ with the inner product $\bangle{\cdot, \cdot}_{L^2}$ is a Hilbert space.
  \end{proposition}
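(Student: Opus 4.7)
The plan is to establish the two defining properties of a Hilbert space: that $\bangle{\cdot,\cdot}_{L^2}$ is a genuine inner product on $L^2(\Omega)$ (once functions are identified modulo $\mu$-null sets), and that the induced norm is complete.

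First I would check that the pairing is well-defined on all of $L^2(\Omega)\times L^2(\Omega)$. The Cauchy--Schwarz inequality, obtained by expanding $0\leq \|f-tg\|_{L^2}^2$ and optimizing over $t\in\reals$, gives $\left|\int_\Omega fg\,d\mu\right|\leq \|f\|_{L^2}\|g\|_{L^2}<\infty$, so the integral defining $\bangle{f,g}_{L^2}$ converges absolutely. Bilinearity and symmetry are immediate from linearity of the integral and commutativity of multiplication; positive-definiteness holds once we identify functions that agree $\mu$-a.e., since $\int|f|^2 d\mu=0$ forces $f=0$ a.e. Combining this with Minkowski's inequality makes $(L^2(\Omega),\|\cdot\|_{L^2})$ a normed vector space whose norm is induced by the inner product via $\|f\|_{L^2}^2=\bangle{f,f}_{L^2}$.

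The main obstacle is completeness, essentially the Riesz--Fischer theorem. Given a Cauchy sequence $(f_n)$ in $L^2(\Omega)$, I would extract a subsequence $(f_{n_k})$ with $\|f_{n_{k+1}}-f_{n_k}\|_{L^2}<2^{-k}$. Setting $g_K(x)=\sum_{k=1}^{K}|f_{n_{k+1}}(x)-f_{n_k}(x)|$, Minkowski gives $\|g_K\|_{L^2}\leq 1$ uniformly in $K$, and since $g_K$ is pointwise nondecreasing, the monotone convergence theorem applied to $g_K^2$ shows that $g:=\lim_K g_K$ satisfies $\|g\|_{L^2}\leq 1$ and is in particular finite $\mu$-a.e. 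Consequently the telescoping series $f_{n_1}+\sum_{j\geq 1}(f_{n_{j+1}}-f_{n_j})$ converges absolutely a.e. to a measurable function $f$, and $|f_{n_k}|\leq |f_{n_1}|+g$ is dominated in $L^2$. An application of Fatou's lemma to $|f-f_{n_k}|^2=\lim_{\ell\to\infty}|f_{n_\ell}-f_{n_k}|^2$ then yields $\|f-f_{n_k}\|_{L^2}\to 0$, and a standard $\varepsilon/2$ argument combined with the Cauchy property of $(f_n)$ lifts this subsequential limit to the full sequence.

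Once these two parts are assembled, $L^2(\Omega)$ is a complete inner product space, i.e., a Hilbert space. The delicate step is the completeness argument: the subtlety is justifying that the dominating function $g$ is a.e.\ finite and belongs to $L^2$ (which requires the monotone convergence theorem applied to the sequence $g_K^2$, not just to $g_K$), and then controlling the $L^2$-error through Fatou. Everything else reduces to routine manipulations of the integral.
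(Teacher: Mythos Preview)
Your proof is correct and follows the standard Riesz--Fischer argument: verify the inner-product axioms (using Cauchy--Schwarz for well-definedness and identifying functions a.e.\ for definiteness), then establish completeness by extracting a rapidly Cauchy subsequence, building an $L^2$ majorant via monotone convergence, and passing to the limit with Fatou. All the steps are sound, including the point you flag about applying monotone convergence to $g_K^2$.

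However, the paper does not actually prove this proposition at all: it is stated with the qualifier ``e.g.~\cite{kothe1983topological}'' and the surrounding text simply says ``The following is a standard result.'' So there is nothing to compare against---the paper defers entirely to the literature, whereas you have supplied a full self-contained proof. Your work is correct and strictly more detailed than what appears in the paper.
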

 
\begin{theorem}\label{thm:hilbert}
Given the STL formulae space  $\mathcal{P}$, the trajectory space $\mathcal{T}$, a bounded interval $I\subset\reals$, let  $h: \mathcal{P}\to C(\mathcal{T}\times I)$ and $h_0: \mathcal{P}\to C(\mathcal{T})$ defined as above, then:
  \begin{equation}\label{subset L}
  h(\mathcal P)\subseteq L^2(\mathcal{T}\times I)\;\;\text{and}\;\; h_0(\mathcal P)\subseteq L^2(\mathcal{T})
  \end{equation} 
  \label{th:L2}
  \vspace*{-1.5em}
\end{theorem}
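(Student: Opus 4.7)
The plan is to proceed by structural induction on the STL formula $\varphi$, establishing that the robustness function is pointwise dominated by a trajectory-dependent quantity having finite $L^2$ norm under $\mu_0$. Concretely, I would prove a uniform bound of the form $|\rho(\varphi,\xi,t)| \leq C_\varphi \cdot M(\xi)$, where $C_\varphi > 0$ depends only on the formula (on the coefficients of its atomic predicates and on its maximal temporal horizon) and $M(\xi) := |\xi(0)| + V(\xi)$, with $V(\xi)$ the total variation of $\xi$ on the relevant compact time window. The base case for an atomic predicate $\pi$ with $f_\pi(x_1,\ldots,x_n) = \sum_i q_i x_i$ follows from $|\xi_i(s)| \leq |\xi_i(0)| + V(\xi_i)$, giving $|\rho(\pi,\xi,t)| \leq (\sum_i |q_i|)\, M(\xi)$.

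For the inductive step, negation is an isometry on robustness values; conjunction obeys $|\min(a,b)| \leq \max(|a|,|b|) \leq |a| + |b|$; and the $\until_{[a,b]}$ operator, being a composition of $\min$ and $\sup$ over a compact subinterval, preserves the same kind of bound after enlarging the relevant time window by at most $b$. Since $\varphi$ is finite and each temporal operator carries a bounded rational horizon, the total enlargement is finite, and $C_\varphi$ together with the enlarged window stays finite.

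Next, I would verify $\expectation_{\xi\sim\mu_0}[M(\xi)^2] < \infty$. From the sampling scheme in Appendix~\ref{app:mu}, the starting point $\xi(0)$ is Gaussian and the total variation $K$ is the square of a Gaussian; both have finite moments of all orders, so $\expectation[M^2] < \infty$. Combining with the pointwise bound yields $\|h_0(\varphi)\|_{L^2(\mathcal{T})}^2 \leq C_\varphi^2 \expectation[M^2] < \infty$, establishing $h_0(\varphi)\in L^2(\mathcal{T})$. For the timed version, the boundedness of $I$ gives $\|h(\varphi)\|_{L^2(\mathcal{T}\times I)}^2 \leq |I|\cdot C_\varphi^2 \expectation[M^2] < \infty$, so $h(\varphi)\in L^2(\mathcal{T}\times I)$.

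The main obstacle is the bookkeeping for nested temporal operators: the evaluation window over which one must bound $V(\xi)$ grows additively with nesting depth, so the inductive hypothesis must be formulated with the correctly shifted window. Once this is in place the induction is routine. Measurability of $\rho(\varphi,\cdot,\cdot)$ in $(\xi,t)$, required to justify the integrals, follows from continuity of the sampled trajectories and the recursive definition of $\rho$ in terms of continuous operations.
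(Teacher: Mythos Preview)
Your argument is correct and is in fact more careful than the paper's own proof, which takes a shorter route under a stronger hypothesis. The paper simply \emph{assumes} in the proof that $\mathcal{T}$ is bounded in the $\sup$ norm, so that for each $\varphi$ there is a constant $M(\varphi)$ (the maximal absolute value of any atomic proposition over~$\mathcal{T}$) with $|\rho(\varphi,\xi,t)|\leq M(\varphi)$ uniformly in $(\xi,t)$; square-integrability then follows trivially from $\mu_0(\mathcal{T})=1$ and $|I|<\infty$. No structural induction is spelled out, and no moment computation for $\mu_0$ is needed. The paper acknowledges afterward that the actual sampling scheme for $\mu_0$ produces unbounded trajectories and remarks, without proof, that exponential tails of the $\sup$ norm suffice to preserve $L^2$ integrability.

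Your approach, by contrast, dispenses with the boundedness assumption and works directly for the concrete $\mu_0$: you carry a trajectory-dependent envelope $C_\varphi\,M(\xi)$ through the structural induction and then verify $\expectation_{\mu_0}[M(\xi)^2]<\infty$ from the Gaussian/squared-Gaussian ingredients of the sampler. This is exactly the ``relaxed'' argument the paper alludes to but does not carry out, so your route is genuinely more general in this respect. Two small points to tighten: (i) the atomic predicates in the paper are allowed to be affine (e.g.\ $x_1-10\geq 0$), so your base-case bound should read $|\rho(\pi,\xi,t)|\leq C_\pi\,M(\xi)+D_\pi$ for suitable constants, which propagates harmlessly through the induction and still yields finite $L^2$ norm; (ii) for non-linear continuous predicates $f_\pi$ (which the paper formally permits) your linear envelope fails, and one would need a polynomial-growth assumption on $f_\pi$ together with higher moments of $M(\xi)$---still finite under $\mu_0$, but worth stating explicitly.
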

\begin{proof}
  We prove the result for $h$, as the proof of $h_0$ is similar. In order to satisfy (\ref{subset L}), we  make the hypothesis that
  $\mathcal T$ is a bounded (in the $\sup$ norm) subset of $C(I)$, with $I$ a bounded interval, which means that exists $B\in\reals$ such that $\|\xi\|_{\infty} = \sup_{t\in I} \xi(t) \leq B$ 
  for all $\xi\in\mathcal T$. Moreover, the measure on $\mathcal T$ is a distribution, and so it is a finite measure. Hence
  \begin{align*}
  \int_{\xi\in\mathcal T}\int_{t\in I}|h(\varphi)(\xi, t)|^2dtd\mu = &\int_{\xi\in\mathcal T}\int_{t\in I}|\rho(\varphi, \xi, t)|^2dtd\mu \\\leq &\int_{\xi\in\mathcal T}\int_{t\in I}M(\varphi)^2dtd\mu \\\leq & M(\varphi))^2|I|
  \end{align*}
  for each $\varphi\in\mathcal P$, where $M(\varphi)<\infty $ is the maximum absolute value of all atomic propositions of $\varphi$ w.r.t. $\mathcal{T}$, which is finite due to the boundedness of $\mathcal{T}$. This implies $h(\mathcal P)\subseteq L^2(\mathcal T\times I)$.
\end{proof} 

Note that the requirement of $\mathcal{T}$ bounded is not particularly stringent in practice (we can always enforce a very large bound), and it  can be further relaxed requiring the measure $\mu$ assigns to functions with large norm an exponentially decreasing probability (w.r.t the increasing norm).

We can now use the scalar product in $L^2$ as a kernel for $\mathcal{P}$. In such a way, we will obtain a kernel that returns a high positive value for formulae that agree on high-probability trajectories and high negative values for formulae that, on average, disagree. We report here the definition in the paper, decorating it also with the newly introduced notation. 
\begin{definition*}[\bf \ref{def:STLkernel}]
    Fixing a probability measure $\mu_0$ on $\mathcal{T}$, we can then define the STL-kernel as:
  \begin{eqnarray*}
  	\kr'(\varphi, \psi) = \bangle{h(\varphi), h(\psi)}_{L^2(\mathcal T\times I)} & = \int_{\xi\in\mathcal T}\int_{t\in I}h(\varphi)(\xi, t)h(\psi)(\xi, t)dtd\mu_0 \\
  	& = \int_{\xi\in\mathcal T}\int_{t\in I}\rho(\varphi, \xi, t)\rho(\psi, \xi, t)dtd\mu_0\,
  	\end{eqnarray*}
\end{definition*}
 
In the previous definition, we can replace $h$ by $h_0$ and take the scalar product in $L^2(\mathcal T)$. We call this latter version of the kernel the \textit{untimed} kernel, and the former the \textit{timed} one. As we will see in the next section, the difference in accuracies achievable between the two kernels is marginal, but the untimed one is sensibly faster to compute.

We are now ready to prove Theorem~\ref{th:proper_kernel} of the main paper, reported below.

\begin{theorem*}[\bf \ref{th:proper_kernel}]
The function $\kr'$ is a proper kernel function.
\end{theorem*}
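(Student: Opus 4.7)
The strategy is to show that $\kr'$ is literally an inner product of images under a well-defined map into a Hilbert space; once this is established, the kernel property follows from standard facts.

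First, I would invoke Theorem \ref{thm:hilbert} (already proved above), which guarantees that $h(\mathcal{P}) \subseteq L^2(\mathcal{T}\times I)$. This is the only substantive step: it ensures that $h(\varphi)$ and $h(\psi)$ live in a bona fide Hilbert space for every pair of formulae, and in particular that the double integral defining $\kr'(\varphi,\psi)$ converges absolutely (by Cauchy--Schwarz in $L^2$). Hence the map $h:\mathcal{P}\to L^2(\mathcal{T}\times I)$ serves as a feature map, and by definition
\[
\kr'(\varphi,\psi) \;=\; \langle h(\varphi),h(\psi)\rangle_{L^2(\mathcal{T}\times I)}.
\]

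Next, I would recall the standard characterization: a function $k:X\times X\to\reals$ is a proper (positive semi-definite) kernel iff there exist a Hilbert space $H$ and a map $\Phi:X\to H$ with $k(x,y)=\langle \Phi(x),\Phi(y)\rangle_H$ (see e.g.\ \cite{foundationsML2018}). Taking $H=L^2(\mathcal{T}\times I)$ and $\Phi=h$ immediately gives the result. For completeness I would also sketch the direct verification: symmetry of $\kr'$ is inherited from symmetry of the $L^2$ inner product, and for any $\varphi_1,\ldots,\varphi_n\in\mathcal{P}$ and $c_1,\ldots,c_n\in\reals$,
\[
\sum_{i,j=1}^n c_i c_j\, \kr'(\varphi_i,\varphi_j)
\;=\; \Bigl\|\sum_{i=1}^n c_i\, h(\varphi_i)\Bigr\|_{L^2}^2 \;\geq\; 0,
\]
which gives positive semi-definiteness of the Gram matrix.

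The only point that required actual work is the integrability of products $\rho(\varphi,\cdot,\cdot)\rho(\psi,\cdot,\cdot)$ against $d\mu_0\otimes dt$, and that has already been handled in Theorem \ref{thm:hilbert} under the mild assumption that $\mathcal{T}$ is bounded (or, more generally, that $\mu_0$ assigns exponentially decaying mass to trajectories of large sup-norm). Everything else is a two-line consequence of the Hilbert-space structure of $L^2$, so I do not anticipate any further obstacle.
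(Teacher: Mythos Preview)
Your proposal is correct and matches the paper's own argument essentially line for line: the paper also relies on Theorem~\ref{thm:hilbert} to place $h(\varphi)$ in $L^2$, then verifies positive semi-definiteness via the same computation $\sum_{i,j} c_i c_j \kr'(\varphi_i,\varphi_j)=\|\sum_i c_i h(\varphi_i)\|^2\geq 0$ (its Proposition~\ref{kernel matrices characterization}), and finally invokes the standard characterization of kernels from \cite{foundationsML2018}. The only cosmetic difference is order of emphasis---the paper foregrounds the PSD computation while you foreground the feature-map characterization---but the content is identical.
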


The key to prove the theorem is the following proposition, showing that  the function $\kr'$ satisfies the finitely positive semi-definite property being defined via a scalar product.

\begin{proposition}\label{kernel matrices characterization}
The function $\kr'$ satisfies the positive semi-definite property.
  \end{proposition}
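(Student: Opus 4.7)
The plan is to reduce the positive semi-definiteness of $\kr'$ to the corresponding standard property of the $L^2$ inner product, exploiting the fact that Theorem~\ref{th:L2} already guarantees $h(\varphi)\in L^2(\mathcal T\times I)$ for every $\varphi\in\mathcal P$, so that the quantity $\langle h(\varphi),h(\psi)\rangle_{L^2}$ is well defined and coincides with $\kr'(\varphi,\psi)$ by Definition~\ref{def:STLkernel}.

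Concretely, I would take an arbitrary finite collection of formulae $\varphi_1,\ldots,\varphi_n\in\mathcal P$ and an arbitrary vector of coefficients $c_1,\ldots,c_n\in\reals$, and show that the associated quadratic form is non-negative. Using the bilinearity of $\langle\cdot,\cdot\rangle_{L^2(\mathcal T\times I)}$ together with the identification $\kr'(\varphi_i,\varphi_j)=\langle h(\varphi_i),h(\varphi_j)\rangle_{L^2}$, one rewrites
\[
\sum_{i,j=1}^n c_i c_j\,\kr'(\varphi_i,\varphi_j)
=\Big\langle \sum_{i=1}^n c_i\,h(\varphi_i),\;\sum_{j=1}^n c_j\,h(\varphi_j)\Big\rangle_{L^2}
=\Big\|\sum_{i=1}^n c_i\,h(\varphi_i)\Big\|_{L^2}^2\ \geq\ 0,
\]
where the finite linear combination $\sum_i c_i h(\varphi_i)$ lies in $L^2(\mathcal T\times I)$ because $L^2$ is a vector space and each $h(\varphi_i)$ belongs to it. Since this holds for every choice of $n$, of $\varphi_1,\ldots,\varphi_n$ and of $c_1,\ldots,c_n$, the Gram matrix $K_{ij}=\kr'(\varphi_i,\varphi_j)$ is positive semi-definite, which is precisely the finitely positive semi-definite property required.

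There is essentially no obstacle here: the only subtlety is making sure that the scalar product in the display is legitimate, and this is handled by Theorem~\ref{th:L2}, which is why its hypotheses (boundedness of $\mathcal T$ or exponentially decaying tails of $\mu_0$, and boundedness of $I$) are needed. Symmetry $\kr'(\varphi,\psi)=\kr'(\psi,\varphi)$, which together with the finitely positive semi-definite property yields the full notion of ``proper kernel'' used in the subsequent Theorem~\ref{th:proper_kernel}, follows immediately from the symmetry of the $L^2$ inner product (or from Fubini applied to the integrand $\rho(\varphi,\xi,t)\rho(\psi,\xi,t)$). Thus the proposition reduces entirely to the Hilbert-space nature of $L^2(\mathcal T\times I)$ established earlier in this appendix.
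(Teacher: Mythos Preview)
Your proof is correct and follows essentially the same approach as the paper: both arguments take an arbitrary finite family $\varphi_1,\ldots,\varphi_n$, use the identity $\kr'(\varphi_i,\varphi_j)=\langle h(\varphi_i),h(\varphi_j)\rangle_{L^2}$, and collapse the quadratic form via bilinearity to $\|\sum_i c_i h(\varphi_i)\|_{L^2}^2\geq 0$. Your additional remarks on why the inner product is well defined (via Theorem~\ref{th:L2}) and on symmetry are accurate and slightly more explicit than the paper's version, but the substance is identical.
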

  \begin{proof}
  	In order to prove that it has the positive semi-definite property, we need to show that for any tuple of points $\varphi_1,\ldots,\varphi_n$, the Gram matrix $K$ defined by $K_{ij} = \bangle{h(\varphi_i), h(\varphi_j)}$ for $i, j = 1, ..., n$ is positive semi-definite.
  	\begin{align*}
  	v^T Kv &= \sum_{i, j=1}^n v_i v_j K_{ij}= \sum_{i, j=1}^n v_i v_j \bangle{h(\varphi_i), h(\varphi_j)}\\
  	&=  \left\langle\sum_{i=1}^n v_ih(\varphi_i),\sum_{j=1}^n v_jh(\varphi_j)\right\rangle \\
  	&=\left\|\sum_{i=1}^n v_ih(\varphi_i)\right\|^2\geq 0\,,
  	\end{align*}
  	which implies that $K$ is positive semi-definite.
  \end{proof}

The previous proposition, in virtue of the following well known theorem characterizing kernels \cite{foundationsML2018}, guarantees that we have  defined a proper kernel. For completeness, we also report a proof of this last theorem. 

\begin{theorem}[Characterization of kernels, e.g.~\cite{foundationsML2018}]\label{kernels characterization}
  	A function $\kr:X\times X\to\reals$ which is either continuous or has a finite domain, can be written as
  	$$\kr(x,z)=\bangle{\phi(x),\phi(z)}_{\mathbb{H}}\,,$$
  	where $\phi$ is a feature map into a Hilbert space $\mathbb{H}$, if and only if it satisfies the finitely positive semi-definite property.
  \end{theorem}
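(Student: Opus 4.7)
The plan is to prove both directions separately, with the easy direction being a direct consequence of the positivity of the Hilbert space norm and the hard direction being the classical Moore--Aronszajn construction of a reproducing kernel Hilbert space (RKHS) associated with $\kr$.

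For the forward implication, assume $\kr(x,z)=\bangle{\phi(x),\phi(z)}_{\mathbb{H}}$ for some feature map $\phi:X\to\mathbb{H}$. Then for any finite tuple $x_1,\ldots,x_n\in X$ and any $v\in\reals^n$, the Gram matrix $K_{ij}=\kr(x_i,x_j)$ satisfies $v^T K v = \bigl\|\sum_{i=1}^n v_i\phi(x_i)\bigr\|_{\mathbb{H}}^2\geq 0$, exactly as in Proposition~\ref{kernel matrices characterization}. So $\kr$ is finitely positive semi-definite.

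For the reverse implication I would carry out the standard RKHS construction in the following steps. First, define the pre-Hilbert space $\mathbb{H}_0$ of all finite linear combinations $f=\sum_{i=1}^n\alpha_i\kr(x_i,\cdot)$, viewed as real-valued functions on $X$. On $\mathbb{H}_0$ I would define the candidate inner product $\bangle{f,g}:=\sum_{i,j}\alpha_i\beta_j\kr(x_i,z_j)$ for $f=\sum_i\alpha_i\kr(x_i,\cdot)$ and $g=\sum_j\beta_j\kr(z_j,\cdot)$. Second, show this is well-defined independently of the representation of $f$ and $g$: rewriting $\bangle{f,g}=\sum_j\beta_j f(z_j)=\sum_i\alpha_i g(x_i)$ shows the value depends only on $f$ and $g$, not on the particular expansions. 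Bilinearity and symmetry are immediate, and positive semi-definiteness of the quadratic form $\bangle{f,f}=\sum_{i,j}\alpha_i\alpha_j\kr(x_i,x_j)\geq 0$ follows by hypothesis. Third, prove the \emph{reproducing property} $\bangle{f,\kr(x,\cdot)}=f(x)$ for every $f\in\mathbb{H}_0$, which is a direct calculation. Fourth, promote the semi-definite form to a true inner product by showing $\bangle{f,f}=0$ implies $f\equiv 0$: the Cauchy--Schwarz inequality, which already holds for positive semi-definite symmetric bilinear forms, yields $|f(x)|^2=|\bangle{f,\kr(x,\cdot)}|^2\leq\bangle{f,f}\,\kr(x,x)=0$ for every $x$, so $f$ vanishes pointwise. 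Fifth, complete $\mathbb{H}_0$ in the induced norm to a Hilbert space $\mathbb{H}$, and set $\phi(x):=\kr(x,\cdot)$. Then $\bangle{\phi(x),\phi(z)}_{\mathbb{H}}=\kr(x,z)$ by construction.

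The main obstacle, and the step I would devote most care to, is the completion together with the verification that after completion the elements of $\mathbb{H}$ can still be identified with bona fide functions on $X$ (so that the reproducing property extends). The argument is that for Cauchy sequences $\{f_n\}\subset\mathbb{H}_0$, the pointwise bound $|f_n(x)-f_m(x)|\leq\|f_n-f_m\|\sqrt{\kr(x,x)}$ shows that $\{f_n(x)\}$ is Cauchy in $\reals$ for each $x$, so the limit is a well-defined function; this is where the continuity assumption on $\kr$ (equivalently, boundedness of $x\mapsto\kr(x,x)$ on compact sets) or the finite-domain assumption is used to keep the construction well-behaved. Once this is done, the reproducing property extends by continuity, and the theorem follows. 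The finite-domain case is essentially trivial since $\mathbb{H}_0$ is already finite-dimensional and hence complete.
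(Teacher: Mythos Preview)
Your proposal is correct and follows essentially the same route as the paper: both carry out the standard Moore--Aronszajn RKHS construction (pre-Hilbert space of finite span of $\kr(x_i,\cdot)$, reproducing property, Cauchy--Schwarz to get definiteness, pointwise-limit completion, feature map $\phi(x)=\kr(x,\cdot)$). Your version is in fact slightly more careful than the paper's in explicitly checking that the inner product is well defined independently of the representation; note, however, that neither the paper's proof nor the Moore--Aronszajn argument actually requires the continuity/finite-domain hypothesis, so your attempt to locate its role in the completion step is not really needed.
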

  
  \begin{proof}
  	Firstly, let us observe that if $\kr(x,z)=\bangle{\phi(x),\phi(z)}$, then it satisfies the finitely positive semi-definite property for the Proposition \ref{kernel matrices characterization}. The difficult part to prove is the other implication.
  	
  	Let us suppose that $\kr$ satisfies the finitely positive semi-definite property. We will construct the Hilbert space $F_\kr$ as a function space. We recall that $F_\kr$ is a Hilbert space if it is a vector space with an inner product that induces a norm that makes the space complete.
  	
  	Let us consider the function space
  	$$\mathcal F=\left\{\sum_{i=1}^n\alpha_i\kr(x_i,\cdot)\,:\,n\in\naturals,\ \alpha_i\in\reals,\ x_i\in X,\ i=1,...,n\right\}\,.$$
  	The sum in this space is defined as
  	$$(f+g)(x)=f(x)+g(x)\,,$$
  	which is clearly a close operation. The multiplication by a scalar is a close operation too. Hence, $\mathcal F$ is a vector space.
  	
  	We define the inner product in $\mathcal F$ as follows. Let $f, g\in\mathcal F$ be defined by
  	$$f(x) = \sum_{i=1}^n\alpha_i\kr(x_i,x)\,, g(x) = \sum_{i=1}^m\beta_i\kr(z_i,x)\,,$$
  	so the inner product is defined as
  	$$\bangle{f,g}:=\sum_{i=1}^n\sum_{j=1}^m\alpha_i\beta_j\kr(x_i,z_j) = \sum_{i=1}^n\alpha_i g(x_i)=\sum_{j=1}^m\beta_i f(z_j)\,,$$
  	where the last two equations follows from the definition of $f$ and $g$.
  	This map is clearly symmetric and bilinear. So, in order to be an inner product, it suffices to prove
  	$$\bangle{f,f}\geq 0\text{ for all } f\in\mathcal F\,,$$
  	and that
  	$$\bangle{f,f} = 0 \iff f \equiv 0\,.$$
  	If we define the vector $\alpha = (\alpha_1,...,\alpha_n)$ we obtain
  	$$\bangle{f,f}=\sum_{i=1}^n\sum_{j=1}^n\alpha_i\alpha_j\kr(x_i,x_j)=\alpha^T K\alpha\geq 0\,,$$
  	where $K$ is the kernel matrix constructed over $x_1,...,x_n$ and the last equality holds because $\kr$ satisfies the finite positive semi-definite property.
  	
  	It is worth to notice that this inner product satisfies the property
  	$$\bangle{f,\kr(x,\cdot)} = \sum_{i=1}^n\alpha_i \kr(x_i,x)=f(x)\,.$$
  	This property is called \emph{reproducing property} of the kernel.
  	
  	From this property it follows also that, if $\bangle{f,f}=0$ then
  	$$f(x) = \bangle{f,\kr(x,\cdot)}\leq \|f\|\kr(x,x) = 0\,,$$
  	applying the Cauchy-Schwarz inequality and the definition of the norm. The other side of the implication, i.e. $$f\equiv 0\implies\bangle{f,f} =0\,,$$ follows directly from the definition of the inner product.
  	
  	It remains to show the completeness property. Actually, we will not show that $\mathcal F$ is complete, but we will use $\mathcal F$ to construct the space $F_\kr$ of the enunciate. Let us fix $x$ and consider a Cauchy sequence $\{f_n\}_{n=1}^\infty$. Using the reproducing property we obtain
  	$$(f_n(x)-f_m(x))^2=\bangle{f_n-f_m,\kr(x,\cdot)}^2\leq\|f_n-f_m\|^2\kr(x,x)\,.$$
  	where we applied the Cauchy-Schwarz inequality. So, for the completeness of $\reals$, $f_n(x)$ has a limit, that we call $g(x)$. Hence we define $g$ as the punctual limit of $f_n$ and we define $F_\kr$ as the space obtained by the union of $\mathcal F$ and the limit of all the Cauchy sequence in $\mathcal F$, i.e.
  	$$F_\kr = \overline{\mathcal F}\,,$$
  	which is the closure of $\mathcal F$. Moreover, the inner product in $\mathcal F$ extends naturally in an inner product in $F_\kr$ which satisfies all the desired properties.
  	
  	In order to complete the proof we have to define a map $\phi:X\to F_\kr$ such that
  	$$\bangle{\phi(x),\phi(z)}=\kr(x,z)\,.$$
  	The map $\phi$ that we are looking for is $\phi(x) = \kr(x,\cdot)$. In fact
  	$$\bangle{\phi(x),\phi(z)}=\bangle{\kr(x,\cdot),\kr(z,\cdot)}=\kr(x,z)\,.$$
  \end{proof}

\section{PAC bounds}\label{sec:app:pac}
Probably Approximate Correct (PAC) bounds provide learning guarantees giving probabilistic bounds on the error committed, i.e. providing conditions under which the error is small with high probability. These conditions typically depend on the number of samples and some index measuring the complexity of the class of functions forming the hypothesis space. The following treatment is taken from \cite{foundationsML2018}.

\subsection{Rademacher Complexity}
One way to measure the complexity of a class of functions $\mathcal{G}$ is the Rademacher Complexity, which for a fixed dataset $D=\{z_1,\ldots,z_m\}$ of size $m$ is defined as 
\begin{equation}
    \label{eqn:rademacher}
    R_D(\mathcal{G}) = \mathbb{E}_{\boldsymbol{\sigma}}\left[\sup_{g\in\mathcal{G}} \frac{1}{m}\sum_{i=1}^m \sigma_i g(z_i)\right].    
\end{equation}
Note that in our scenario each $z_i$ is a pair consisting of a STL formula $\varphi$ and the quantity $y_i$ which we want to predict (typically robustness, expected robustness, satisfaction probability or Boolean satisfaction), while the function $g$ encodes the loss  associated with a predictor $h\in\mathcal{H}$, where $\mathcal{H}$ is the hypothesis space of possible models for the function  mapping each $\phi$ to its associated output $y$. Typically, the loss  is either the square loss (for regression), hence $g((\varphi,y)) = (y-h(\varphi))^2$ or the 0-1 loss (for classification), hence $g((\varphi,y)) = \mathbb{1}(y \neq h(\varphi))$. Additionally, $\boldsymbol{\sigma}$ in the equation above is distributed according to the Rademacher distribution, giving probability 0.5 to each element in $\{-1,1\}$.

Rademacher complexity can also be defined w.r.t. a data generating distribution $p_{data}$, which turns each dataset $D$ into a random variable, hence for a fixed sample size $m$:
\[ R_m(\mathcal{G}) = \mathbb{E}_{D \sim p_{data} }[ R_D(\mathcal{H})]. \]

It is possible to prove that,  with probability $1-\delta/2$ with respect to $p_{data}$, $R_m(\mathcal{G}) \leq R_D(\mathcal{G}) + \sqrt{\frac{\log \frac{2}{\delta}}{2m}}$, where $D \sim p_{data}$. 

In most of the scenarios, rather than discussing the Rademacher complexity of the loss functions $\mathcal{}{G}$, it is preferrable to express the complexity of the hypothesis space $\mathcal{H}$. In the binary classification case, for the 0-1 loss, the corresponding Rademacher complexities are essentially equivalent modulo a constant: \[ R_D(\mathcal{G}) = \frac{1}{2}R_D(\mathcal{H}). \] 

In case of kernel based methods, given a positive definite kernel $k$, we can consider the corresponding Reproducing Kernel Hilbert Space  (RKHS) $\mathbb{H}$ of functions, which however can have infinite dimension. To obtain a finite bound for Rademacher complexity, we need to restrict $\mathbb{H}$ by picking functions with a bounded norm $\Lambda$: 
\[ \mathcal{H}_{\Lambda} = \{ h\in \mathbb{H}~|~\|h\|_{\mathbb{H}}\leq \Lambda \},\] 
where $\|h\|_{\mathbb{H}} = \sqrt{<h,h>_{\mathbb{H}}}$ is the norm in the Hilbert space defined by the scalar product associated with $k$. 

In such a scenario, one can prove a bound on the Rademacher complexity of such space $\mathcal{H}_{\Lambda}$ as:
\begin{equation}
    R_D(\mathcal{H}_{\Lambda})\leq \sqrt{\frac{\Lambda^2r^2}{m}}
\end{equation}
where $D$ is a sample of size $m$ and $r^2$ is a uniform upper bound on the kernel evaluated in each point $\varphi$ of the formula space, i.e. $k(\varphi,\varphi)\leq r^2$ for each $\varphi$. Note that in our case, for the normalized and exponential kernel, we have $r=1$.

In order to effectively enforce the constraint defining $\mathcal{H}_{\Lambda}$, we need to evaluate the scalar product in $\mathbb{H}$. Following \cite{foundationsML2018}, the computation is easy if we restrict to the pre-Hilbert space $\mathbb{H}_0$, which is dense in $\mathbb{H}$ and consists of functions of the form 
\[ h(\varphi) = \sum_{i=1}^n \alpha_i k(\varphi,\varphi_i) \]
for $\alpha_i in \mathbb{R}$. For two such functions $h$ and $h'$, it holds that
\[ <h,h'>_{\mathbb{H}} = \sum_{i,j} \alpha_i\alpha'_j k(\varphi_i,\varphi'_j), \] which for a single function $h$ can be rewritten as the  quadratic form
\[ <h,h>_{\mathbb{H}} = \boldsymbol{\alpha}^T K \boldsymbol{\alpha}, \]
where $K$ is the Gram matrix for input points $\varphi_1,\ldots,\varphi_n$ and $\boldsymbol{\alpha}$ is the vector of all coefficients $\alpha_i$.

\subsection{PAC bounds with Rademacher Complexity}

PAC bounds for the zero-one loss classification problem are stated in terms of the risk $L$ and the empirical risk $\hat{L}$, namely the expected value of the loss for a given hypothesis averaged over the data generating distribution $p_{data}$ (the true risk) or the empirical distribution induced by a finite sample $D$ of size $m$ (the empirical risk). Formally,
\[L(h) = \mathbb{E}_{\varphi \sim p_{data}}\left[ h(\varphi) \neq y(\varphi) \right] \;\;\;\text{and}\;\;\; \hat{L}_D(h) = \frac{1}{m}\sum_{i=1}^m (h(\varphi_i) \neq y(\varphi_i)), \]
where $y(\varphi)$ is the true value of $\varphi$. 

The PAC bound for hypothesis space $\mathcal{H}$ states that, for any $\delta>0$, with probability at least $1-\delta$ over a sample $D$ of size $m$ drawn according to $p_{data}$,  for any $h\in \mathcal{H}$:
\begin{equation}
\label{PACboundClassification}
 L(h) \leq \hat{L}_D(h) + R_D(\mathcal{H}) + 3\sqrt{\frac{\log\frac{2}{\delta}}{2m}}.   
\end{equation}

In case of a regression task, if we consider the square loss and the corresponding risk $L^2$ and empirical risk $\hat{L}^2_D$, defined by 
\[L^2(h) = \mathbb{E}_{\varphi \sim p_{data}}\left[ (h(\varphi) \- y(\varphi))^2 \right] \;\;\;\text{and}\;\;\; \hat{L}^2_D(h) = \frac{1}{m}\sum_{i=1}^m (h(\varphi_i) - y(\varphi_i))^2,\]
then we have the following pac bound, holding for any $h\in\mathcal{H}$ and $\$\delta>0$, with probability at least $1-\delta$ on $D\sim p_{data}$:
\begin{equation}
\label{PACboundRegression}
L^2(h) \leq \hat{L}^2_D(h) + 4 M R_D(\mathcal{H}) + 3 M \sqrt{\frac{\log\frac{2}{\delta}}{2m}},  
\end{equation}
where $M$ is a an upper bound independent of $h$ on the difference $|h(\varphi) - y|$. 

Note that both bounds essentially give an upper bound in probability on the generalization error in terms of the training set error and the complexity of the hypothesis class - assuming the learning framework is based on the minimization of the empirical risk.

\subsection{PAC bounds for the STL kernel}

Combining the bounds on the Rademacher complexity for kernels and the PAC bounds for regression and classification, we can easily prove the following

\begin{theorem*}[{\bf \ref{th:PAC}}]
Let $k$ be a kernel (e.g.  normalized, exponential) for STL formulae  $\mathcal{P}$, and let $\mathbb{H}$ be the associated Reproducing Kernel Hilbert space on $\mathcal{P}$ defined by $k$. Fix $\Lambda > 0$ and consider the hypothesis space $\mathcal{H}_{\Lambda} = \{ f\in \mathbb{H}~|~\|f\|_{\mathbb{H}}\leq \Lambda \}$. Let $y:\mathcal{P}\rightarrow \mathbb{R}$ be a target function to learn as a regression task, and assume that there is $M>0$ such that for any $h$, $|h(\varphi)-y(\varphi)|\leq M$. Then for any $\delta >0$ and $h\in \mathcal{H}_{\Lambda}$, with probability at least $1-\delta$ it holds that

\begin{equation}
\label{PACboundRegressionSTL}
    L^2(h) \leq \hat{L}^2_D(h) + 4 M \frac{\Lambda}{\sqrt{m}} + 3 M \sqrt{\frac{\log\frac{2}{\delta}}{2m}}
\end{equation} 

In case of a classification problem $y:\mathcal{P}\rightarrow \{-1,1\}$, the bound becomes: 
\begin{equation}
\label{PACboundClassificationSTL}
 L(h) \leq \hat{L}_D(h) + \frac{\Lambda}{\sqrt{m}} + 3\sqrt{\frac{\log\frac{2}{\delta}}{2m}}.   
\end{equation}
\end{theorem*}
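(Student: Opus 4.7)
The plan is to derive the bound by composing two well-known building blocks from statistical learning theory, both recalled in Appendix~\ref{sec:app:pac}: (i) a generic PAC bound for the zero-one loss expressed in terms of the Rademacher complexity of the hypothesis class, and (ii) a specific Rademacher-complexity bound for norm-bounded balls in a Reproducing Kernel Hilbert Space. Since the theorem is essentially a plug-and-play instantiation of these results to the particular kernel $k$ over $\mathcal{P}$, my goal is to verify that the hypotheses of each piece are satisfied and then chain them.

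First I would fix the data-generating distribution $p_{data}$ on $\mathcal{P} \times \{-1,1\}$ implicit in $L$ and $\hat{L}_D$, and the hypothesis space $\mathcal{H}_\Lambda = \{h \in \mathbb{H} : \|h\|_{\mathbb{H}} \leq \Lambda\}$, where $\mathbb{H}$ is the RKHS associated with $k$ (its existence is guaranteed by Theorem~\ref{kernels characterization}, whose proof constructs $\mathbb{H}$ explicitly). Then I would invoke the standard Rademacher-based PAC bound for classification with the zero-one loss (equation~(\ref{PACboundClassification})), obtaining for every $h \in \mathcal{H}_\Lambda$, with probability at least $1-\delta$ over $D \sim p_{data}^m$:
\[ L(h) \leq \hat{L}_D(h) + R_D(\mathcal{H}_\Lambda) + 3\sqrt{\frac{\log(2/\delta)}{2m}}. \]

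Next I would bound $R_D(\mathcal{H}_\Lambda)$ using the kernel-specific estimate
\[ R_D(\mathcal{H}_\Lambda) \leq \sqrt{\frac{\Lambda^2 r^2}{m}}, \]
where $r^2$ is any uniform upper bound on $k(\varphi,\varphi)$. Here the proof is essentially an application of Cauchy–Schwarz and Jensen's inequality to the definition~(\ref{eqn:rademacher}): for each $h \in \mathcal{H}_\Lambda$ one uses the reproducing property $h(\varphi_i) = \langle h, k(\varphi_i, \cdot)\rangle_{\mathbb{H}}$ to push the supremum onto the norm of $\sum_i \sigma_i k(\varphi_i, \cdot)$, then takes expectations to obtain $\sqrt{\sum_i k(\varphi_i, \varphi_i)/m^2}$ times $\Lambda$. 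For the normalized and exponential kernels $k_0$ and $k$ of Corollary~\ref{cor:norm_exp_kernel}, the diagonal identity $k_0(\varphi,\varphi) = 1$ (by definition of the normalization) and the fact that the exponential version also gives $k(\varphi,\varphi) = 1$ yield $r = 1$, so the bound becomes $\Lambda/\sqrt{m}$. Substituting into the PAC inequality yields exactly~(\ref{PACboundClassificationSTL}).

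The main subtlety, and where I would be most careful, is confirming $r=1$ for the specific kernels the theorem allows (the normalized and exponential variants), since for the raw kernel $k'$ the diagonal could be arbitrarily large and the proof would need a different constant. The rest is assembly: the two ingredients are off-the-shelf, and no new probabilistic argument is required beyond what is already summarized in Appendix~\ref{sec:app:pac}. I would close by noting that the regression counterpart (equation~(\ref{PACboundRegressionSTL})) follows by exactly the same recipe, swapping~(\ref{PACboundClassification}) for~(\ref{PACboundRegression}) and paying the extra factor $M$ that arises from bounding the square loss.
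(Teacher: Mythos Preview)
Your proposal is correct and follows essentially the same route as the paper: the proof in Appendix~\ref{sec:app:pac} is precisely the combination of the generic Rademacher-based PAC bounds (\ref{PACboundClassification}) and (\ref{PACboundRegression}) with the RKHS ball estimate $R_D(\mathcal{H}_\Lambda)\leq \Lambda r/\sqrt{m}$, specialized via $r=1$ for the normalized and exponential kernels. Your identification of the only nontrivial check---that the diagonal $k(\varphi,\varphi)$ is uniformly bounded (hence $r$ is a fixed constant, equal to $1$ for $k_0$)---is exactly the point the paper singles out.
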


The previous theorem gives us a way to control the learning error, provided we can restrict the full hypothesis space. 
This itself requires to bound the norm in the Hilbert space generated by the kernel $k$. In case of Kernel (ridge) regression, this requires us to minimize w.r.t. coefficients $\alpha$ the  quadratic objective function
\[ (y - \alpha K)(y - \alpha K)^T \]
subject to the quadratic constraint $\alpha K \alpha^T < \Lambda$, which can be solved by introducing a Lagrange multiplier for the constraint and resorting to KKT conditions \cite{murphy2012machine}. As the so obtained objective function will be quadratic, the problem remains convex and admits a unique solution. 

Note that, practically, one typically uses a soft penalty term on the 2-norm of $\alpha$, thus obtaining ridge regression.  This penalty can be added to the objective above, and if the solution of the unconstrained problem for a given dataset has norm smaller than $\Lambda$, then this is also the solution of the constrained problem, due to its convexity. Hence, a practical approach to evaluate the PAC bound is to solve the unconstrained regularized problem, then compute the norm in the Hilbert space of the so obtained solution, and use any $\Lambda$ greater than this norm in the bound. 

Note also that using the bound for regression may not be trivial, given that it depends on a constant $M$ bounding both robustness and the functions in the hypothesis space. While imposing bounds on robustness may not be problematic, finding upper bounds on the values of $h\in\mathcal{H}_{\Lambda}$ is far less trivial.  On the other hand, the bound for classification is more easily computable. In such a case, we run some experiments to estimate the constant $\Lambda$, and obtained a median value of roughly 40, with a range of values from 10 to 1000, and first and second quartiles equal to 25 and 65. With these values, taking the median value as reference, and fixing our confidence at 95\%, the bound predicts at least 650k samples to obtain an accuracy bounded by the accuracy on the training set plus 0.05, which is much larger than training set sizes for which we observe good performances in practice.

\section{Experiments}\label{ap:exp}
As notation we use: MSE= Mean Square Error, MAE= Mean Absolute Error, MRE= Mean Relative Error, AE = Absolute Error, RE= Relative Error.

\subsection{Setting}\label{app:setting}

\parag{Syntax-tree random growing scheme} is designed as follow:
  \begin{enumerate}
    \item We start from root, forced to be an operator node. For each node, with probability $p_{\mathit{leaf}}$ we make it an atomic predicate, otherwise it will be an internal node. 
    \item In each internal (operator) node, we sample its type using a uniform distribution, then recursively sample its child or children. 
    \item We consider atomic predicates of the form $x_i \leq \theta$ or $x_i \geq \theta$. We sample randomly the variable index (dimension of the signals is a fixed parameter), the type of inequality, and sample $\theta$ from a standard Gaussian distribution $\mathcal{N}(0,1)$.
    \item For temporal operators, we sample the right bound of the temporal interval uniformly from $\{1,2,\ldots,t_{\mathit{max}}\}$, and fix the left bound to zero. 
  \end{enumerate}
In the experiments we run in the paper, we fix $p_{\mathit{leaf}} = 0.5$ 
and $t_{\mathit{max}} = 10$, see also the paragraph on hyperparameters below. 

\subsection*{Comparison of different regressor models}
We compare the performance of the regression models: {\it Nadaraya-Watson estimator}, {\it K-Nearest Neighbors regression}, {\it Support Vector Regression} (SVR) and {\it Kernel Ridge Regression} (KRR) \cite{murphy2012machine}
We compare the Mean Square Error (MSE) as a function of the bandwidth $\sigma$  of the Gaussian kernel, for the prediction of the expected robustness and the satisfaction probability with respect the base measure $\mu_0$, using different regressors. The errors are computed training the regressors on $300$ different train sets made up of 400 samples and averaging the error over 300 test sets (one different per train set) made of 100 samples.
 \begin{table}[h!]
  	\begin{center}
  	\caption{MSE for the expected robustness and satisfaction probability using different regressors on different kernels.}
  		\label{tab:regressors comparison}
  		\hspace*{0cm}\begin{tabular}{r|ccccc|ccccc}
  		\hline
  		{} & \multicolumn{5}{c}{expected robustness} & \multicolumn{5}{c}{satisfaction probability} \\
  		{}	& Original &	$\sigma=0.05$ &$\sigma=0.1$ & $\sigma=0.5$ & $\sigma=1$  &  Original& $\sigma=0.05$ &$\sigma=0.1$ & $\sigma=0.5$ & $\sigma=1$ \\
  			\hline
  			NW   & -&0.31 & 0.52 & 1.8  & 3.1 & -&0.0058 &0.00088 & 0.0029 & 0.029   \\
  			KNN  & 0.34 &0.32 & 0.34 & 0.34 & 0.31  & 0.0018&0.0018 & 0.0018  & 0.0016 & 0.001\\
  			SVR  &1.5& 3.1  & 0.51 & 0.29 & 0.77  &0.0018& 0.067  & 0.033   & 0.0051 & 0.0044 \\
  			KRR  &1.4& 2.6  & 0.6  & 0.32 & 0.69 & 0.25&0.16   & 0.08    & 0.0023 & 0.00047\\
  		\end{tabular}
  	\end{center}
  \end{table}
  
  \begin{figure}[t]
    \centering
    \includegraphics[width=0.53\linewidth]{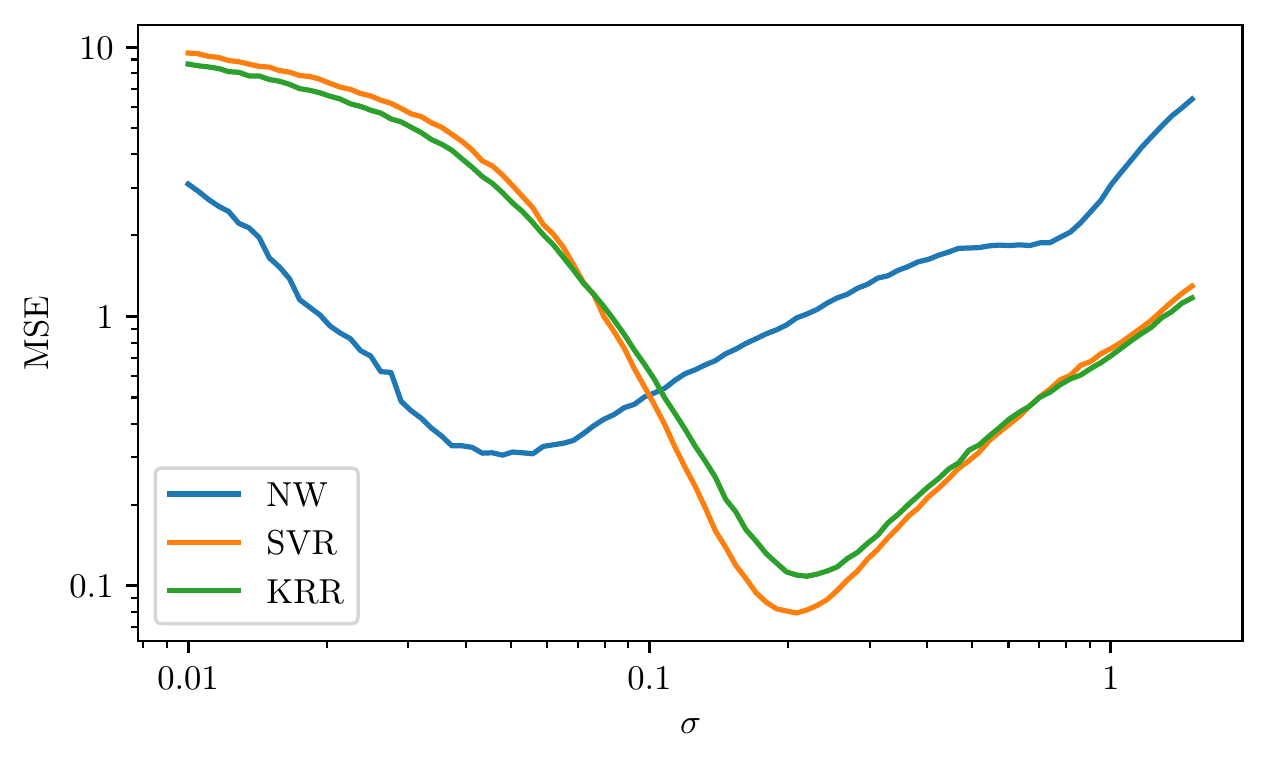}
    \includegraphics[width=0.46\linewidth]{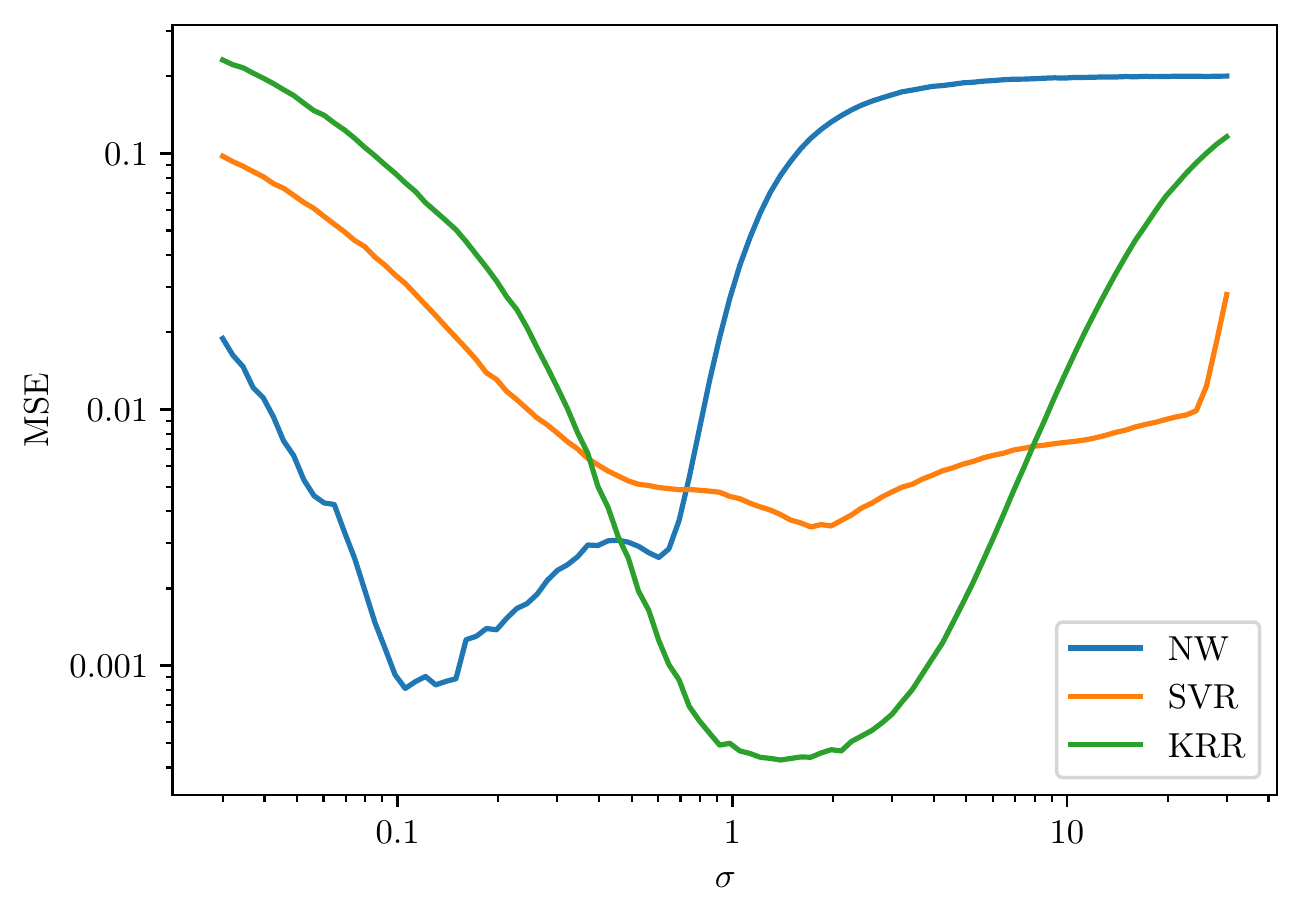}
    \caption{
    MSE as a function of the bandwidth $\sigma$ of the Gaussian kernel, for the prediction of the expected robustness (left) and the satisfaction probability (right), using different regressors. Both the axis are in logarithmic scale.}
    \label{fig:compare_sigma}
\end{figure}
  
  From Table \ref{tab:regressors comparison}, we can see that the best performance for the prediction of the expected robustness is achieved by the SVR, using the Gaussian kernel with $\sigma=0.5$. A more precise estimation of the best $\sigma$ for the Gaussian kernel is given by Figure \ref{fig:compare_sigma}~(left) . That plots confirms that SVR is the better performing regressor and the minimum regression error for the expected robustness is given by the kernel with $\sigma=0.2$. 

\subsection*{Hyperparameters}

We vary some hyperparameters of the model, testing how they impact on errors and accuracy. We test performance of KRR using the exponential kernel (setting its scale parameter $\sigma$ by cross validation when not differently specified) on the expected robustness of a formula w.r.t. the base measure $\mu_0$.

\parag{Time bounds or unbound timed operators.} 
We compare the performance on predicting the expected robustness considering formula with bound or unbound timed operators, i.e temporal operators with time intervals of the form $[0,T]$ for $T<\infty$ or $[0,\infty)$.
Results are displayed  in Fig.~\ref{fig:error_time_unbound}, and they show that the addition of time bounds has no significant impact on the performances in terms of errors. Computational times are comparable with the time bounded version slightly faster.
The mean over 100 experiments of the computational time to train and test $1000$ formulas are $5.30\pm0.023$
and $3.546\pm0.020$ seconds for the bound version and $6.61\pm0.032$ and  $4.612\pm0.026$ seconds for the unbounded version.
 \begin{figure}
    \centering
   \includegraphics[width=1.\linewidth]{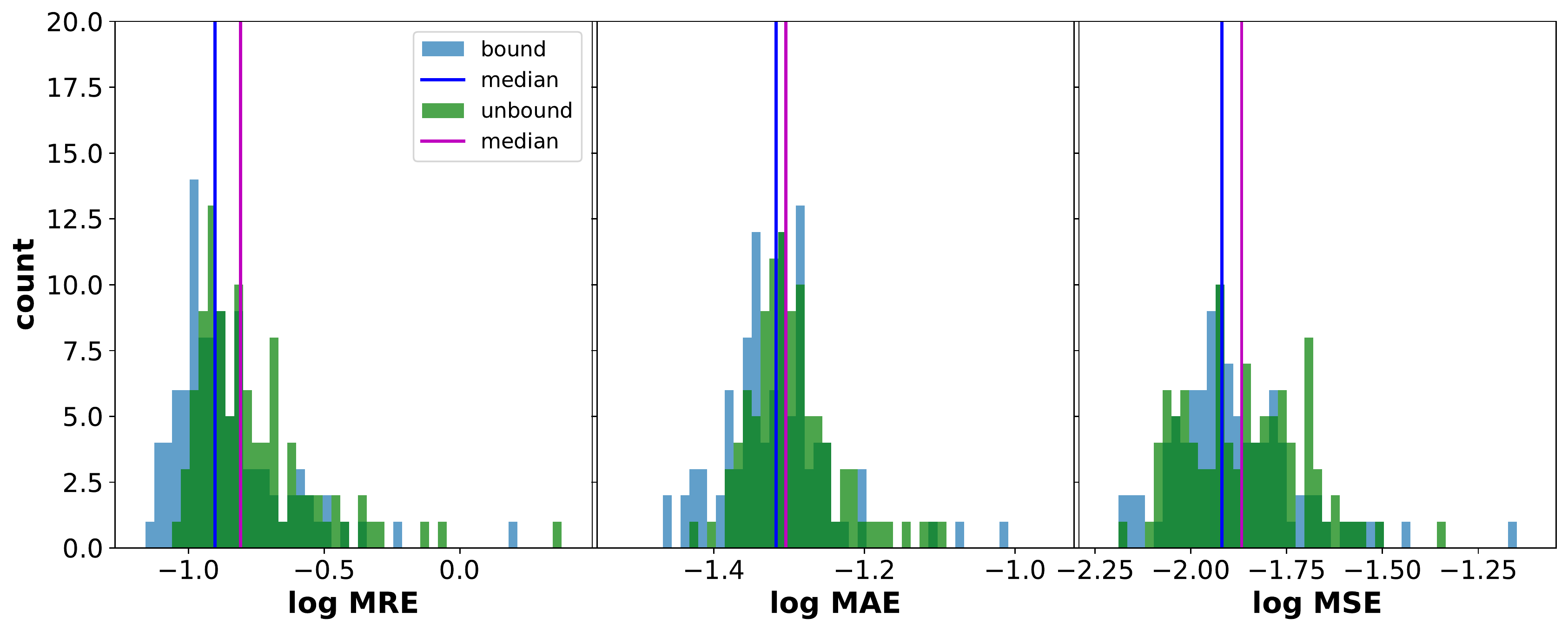}
    \caption{log MRE, MAE and MSE considering formulas with time bound or unbound operators.}
    \label{fig:error_time_unbound}
\end{figure}

\parag{Time Integration} Integrating signals $R(\varphi,t)$ w.r.t. time vs using only robustness at time zero $R(\varphi,0)$ for the definition of the kernel. In Fig.~\ref{fig:error_integration} we plot MRE, MAE and MSE for 100 experiments, we can see that using the integration gives only a very small improvement in performance (<10\%). Instead computational time for $R(\varphi,t)$ are much higher. The mean over 100 experiments of the computational time  to train and test $1000$ formulas are $5.30\pm0.023$
and $3.562\pm0.020$ seconds for $R(\varphi,0)$, and $26.6\pm0.075$ and $20.62\pm0.065$ seconds for $R(\varphi,t)$.
 \begin{figure}
    \centering
   \includegraphics[width=1.\linewidth]{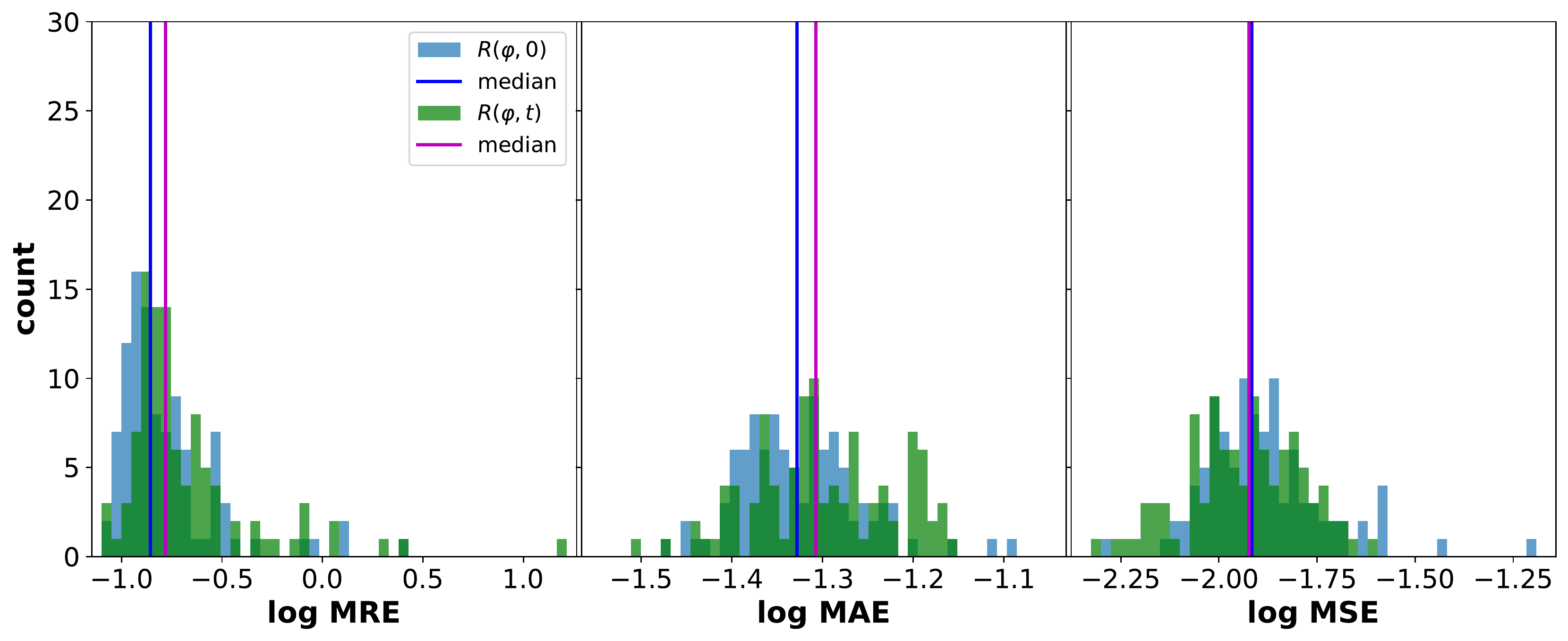}
    \caption{log MRE, MAE and MSE for using only robustness at time zero or integrating the whole signal.}
    \label{fig:error_integration}
\end{figure}

\parag{Size of the Training Set}
We analyse the performance of our kernel for different size of the training set. Results are reported in Fig.~\ref{fig:error_vs_train_size} (left). For size = [20, 200, 400, 1000], we have MRE = [0.739, 0.311, 0.247, 0.196], and MAE=[0.202, 0.087,  0.069, 0.050].
 \begin{figure}
    \centering
   \includegraphics[width=.43\linewidth]{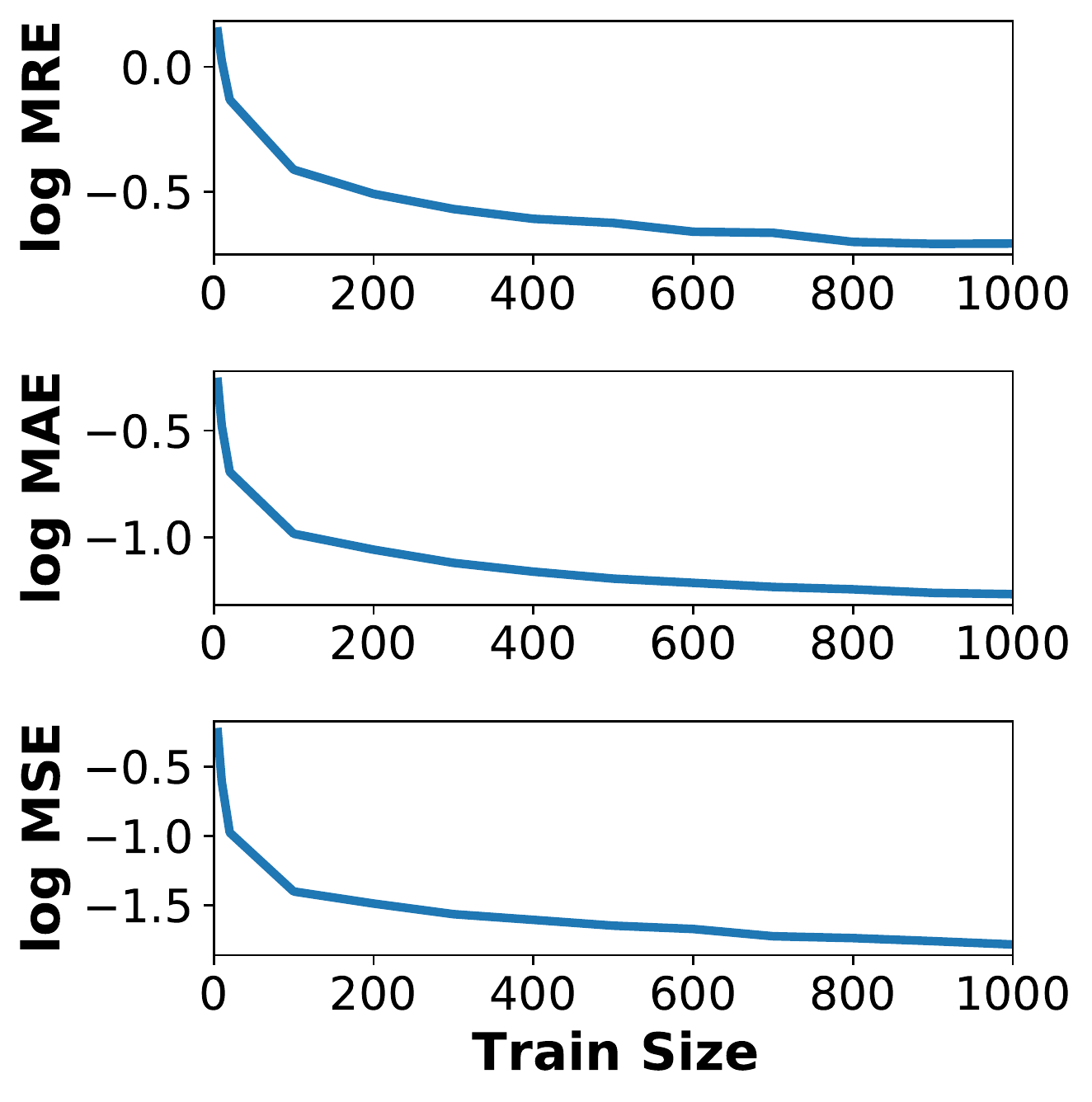}
      \includegraphics[width=.5\linewidth]{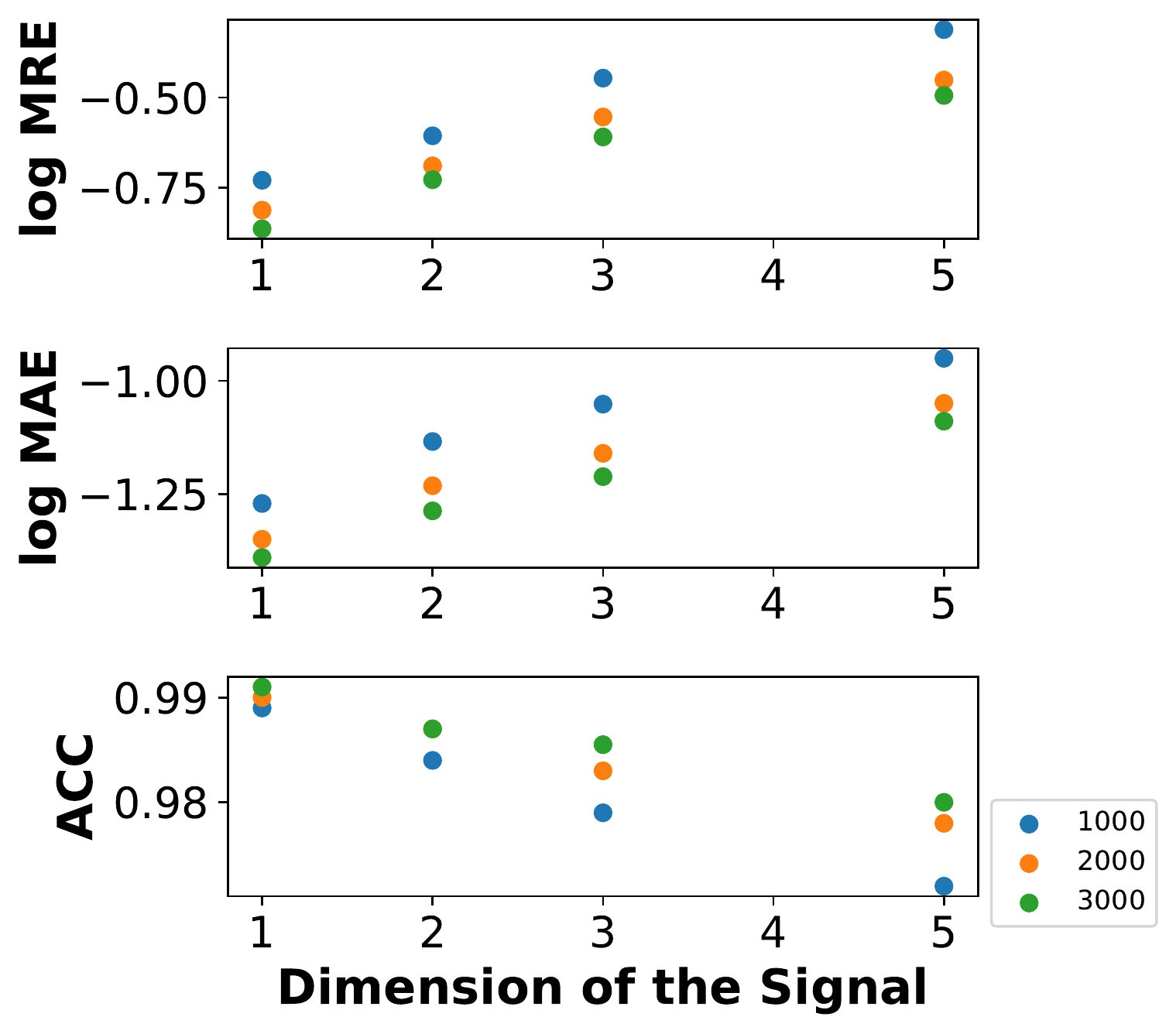}
    \caption{Median of log MRE, MAE and ACC for 100 experiments varying  (left) the size of the train set with trajectory with 1 dimension (right) the dimensionality of the signal for size of the train set n = 1000, 2000, 3000. }
    \label{fig:error_vs_train_size}
\end{figure}

\parag{Dimensionality of Signals}
We explore the error with respect different dimensionality of the signal from 1 to 5 dimension, considering train set with 1000, 2000, 3000 formulae. Results are shown in Fig.~\ref{fig:error_vs_train_size} (right). Error tends to have a linear increase, with median accuracy still over 97 for signal with dimension equal to 5.

\parag{Size of formulas}
We vary the parameter $p_{\mathit{leaf}}$ in the formula generating algorithm in the range $[0.2,0.3,0.4,0.5]$ (average formula size around $[100,25,10,6]$ nodes in the syntax tree). We observe only  a slight increase in the median relative error, see Table \ref{tab:quantile_varying_p} and Fig. \ref{fig:error_vs_train_size_P}. Also, median accuracy in predicting satisfiability with normalized robustness for a training set of size 1000 ranges in $[0.9902,0.9921,0.9931,0.9946]$ for $p_{\mathit{leaf}}$  in $[0.2,0.3,0.4,0.5]$.

 \begin{figure}
    \centering
      \includegraphics[width=.48\linewidth]{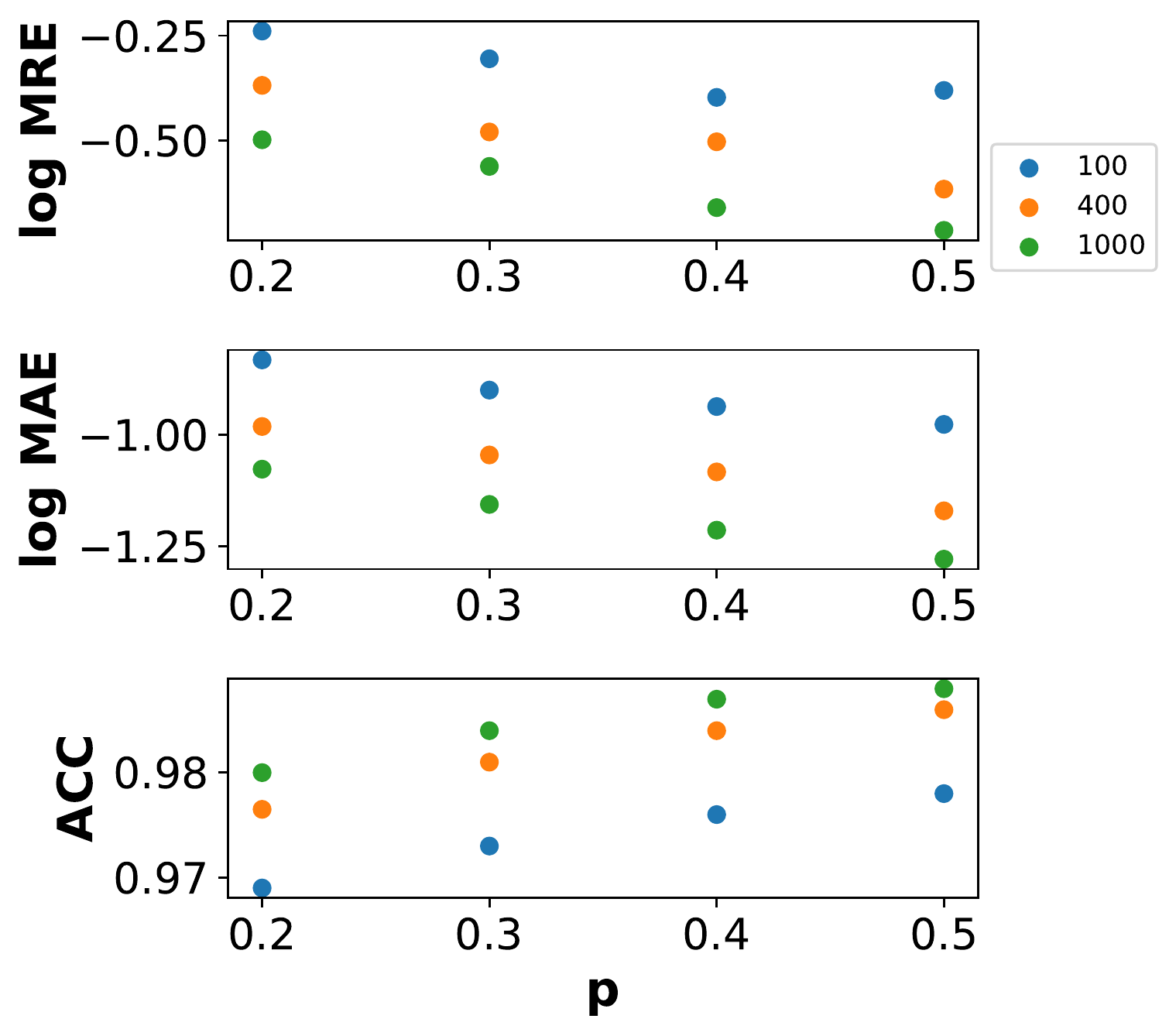}
    \includegraphics[width=.48\linewidth]{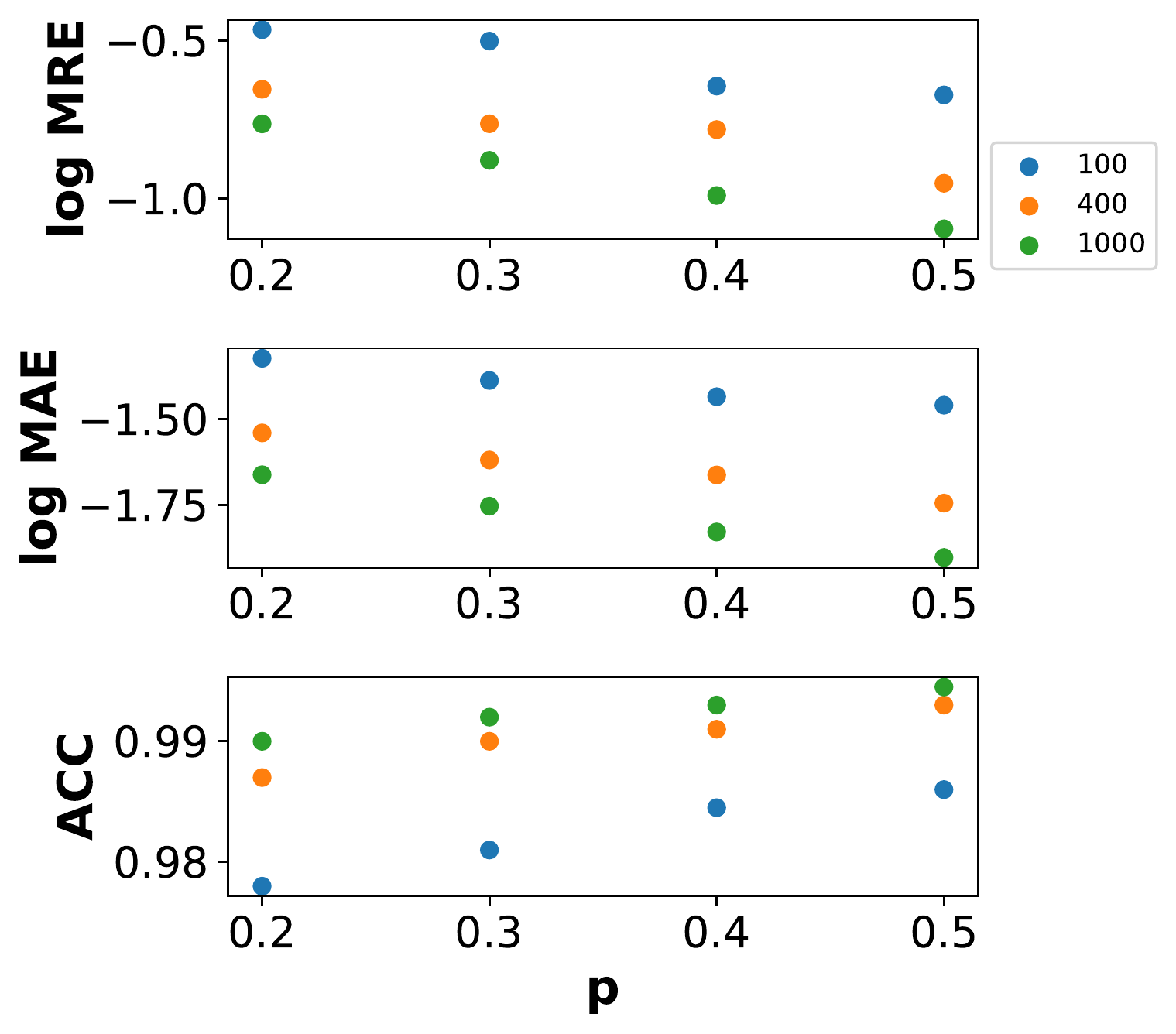} 
    \caption{Median of log MRE, MAE and ACC for 100 experiments varying the probability p for size of the train set n = 100, 400, 1000, for standard (left) and normalized (right) robustness. }
    \label{fig:error_vs_train_size_P}
\end{figure}

   \begin{table}[h!]
  	\begin{center}
  		\hspace*{0cm}
\begin{tabular}{llrrrr|rrrrr}
\toprule
  		{}& & \multicolumn{5}{c}{relative error (RE) } & \multicolumn{4}{c}{absolute error (AE)} \\
  		\midrule
{} &&    1quart &   median &   3quart &      99perc &  1quart &   median &   3quart &   99perc \\
\midrule
{$n=100$}&p=0.2  & 0.0361 & 0.0959 & 0.263 & 6.91&  0.0295 & 0.0802 & 0.200 & 0.986 \\
        &p=0.3 & 0.0303 & 0.0775 & 0.214 & 6.53& 0.0241 & 0.0640& 0.166 & 0.984 \\
        &p=0.4  & 0.0247& 0.0608 & 0.175 &  5.05& 0.0204 & 0.0523 & 0.145 & 0.974 \\
         &p=0.5  & 0.0226 & 0.0542 & 0.153 &4.92& 0.0193 & 0.047 & 0.127 & 0.967 \\
\midrule
{$n=400$}&p=0.2 & 0.0238 & 0.0602 & 0.174 &4.87 & 0.0201 & 0.0525 & 0.130& 0.873 \\
        &p=0.3 & 0.0185 & 0.0442 & 0.132 &4.26  & 0.0155 & 0.038 & 0.101 & 0.841 \\
        &p=0.4& 0.0165 & 0.0377& 0.109 & 3.58& 0.0138 & 0.0339 & 0.0872 & 0.823 \\
         &p=0.5  & 0.0164 & 0.0349 & 0.0902 &  3.04& 0.0123 & 0.0284 & 0.0694 & 0.701 \\
         \midrule
{$n=1000$}&p=0.2& 0.0181 & 0.0444 & 0.127 & 4.030 & 0.01620 & 0.0400& 0.0957 & 0.739 \\
        &p=0.3& 0.0163 & 0.0368 & 0.101 &  3.38& 0.0138& 0.0326 & 0.0775 & 0.683 \\
        &p=0.4 & 0.0147 & 0.0313 & 0.0822 & 2.863 & 0.0122 & 0.0271 & 0.0642 & 0.623 \\
         &p=0.5  & 0.0131 & 0.0276 & 0.0703& 2.45& 0.0100 & 0.0222 & 0.0518 & 0.560 \\
\bottomrule
\end{tabular}
\vspace{0.3cm}
  		\caption{Mean of quantiles for RE and AE over 100 experiments for prediction of  robustness}
  		\label{tab:quantile_varying_p}

  	\end{center}
  \end{table}


\newpage
 \subsection{Satisfiability and Robustness on Single Trajectories} \label{app:subsec_single}
Further results on experiment for prediction of Boolean satisfiability of a formula using as a discriminator the sign of the robustness and base measure $\mu_0$. Note that, for how we design our method and experiments, we never predict a robustness exactly equal to zero, so it could never happen that we classify as true a formula for which the robustness is zero but the trajectory does not satisfy the formula. We plot in Fig. \ref{fig:mae_single}, \ref{fig:mse_single}, the distribution of the $\log_{10}$ of MAE and MSE over 1000 experiments for the standard and normalized robustness respectively. Table~\ref{tab:median_traj_trainsize} reports the median values of accuracy, MSE, MAE, and MRE distribution. Mean values of the quantiles for AE and RE are reported in table~\ref{tab:result_traj_trainsize}. 
Distribution of AE and RE for a randomly picked experiments are shown in Fig.~\ref{fig:absrel_single}. In Fig. \ref{fig:robvsabs}, instead, we plot AE versus robustness for a random run for the standard (left) and normalized (right) robustness.

 \begin{table}[h]
  	\begin{center}
  		\hspace*{0cm}
\begin{tabular}{lcccccccc}
\toprule
    & \multicolumn{2}{c}{MSE} & \multicolumn{2}{c}{MAE} & \multicolumn{2}{c}{MRE} & \multicolumn{2}{c}{ACC} \\
    &   $\rho$ &  $\rhon$ &   $\rho$ &  $\rhon$ & $\rho$ &  $\rhon$ &$\rho$ &  $\rhon$ \\
\midrule
n=1 &  0.0165 &   0.000945 & 0.0537 &   0.0126 &  0.189 &   0.086 &  0.989 &   0.995 \\
n=2 &  0.0219 &   0.00248 & 0.0718 &   0.0236 &  0.271 &   0.157 &  0.984 &   0.989 \\
n=3 &  0.0873 &   0.0331 &0.0873 &   0.0331 &  0.340 &   0.214 &  0.980 &   0.985 \\
\bottomrule
\end{tabular}
\vspace{0.3cm}
\caption{Median of accuracy MSE, MAE, and MRE distribution of 1000 experiments for prediction of the standard and normalised robustness on single trajectories  for dimensionality of signal $n=1,2,3$ sampled according to $\mu_0$.
}  		\label{tab:median_traj_trainsize}
  	\end{center}
  \end{table}
  
 \begin{table}[]
  	\begin{center}
  		\hspace*{0cm}
\begin{tabular}{cccc|cc|cc|cc|cc}
\toprule
{} &{} & \multicolumn{2}{c}{5perc} & \multicolumn{2}{c}{1quart} & \multicolumn{2}{c}{median} & \multicolumn{2}{c}{3quart} & \multicolumn{2}{c}{95perc} \\
{} &{} &   $\rho$ &  $\rhon$ &   $\rho$ &  $\rhon$ &   $\rho$ &  $\rhon$ &   $\rho$ &  $\rhon$ &   $\rho$ &  $\rhon$ \\
\midrule
&n=1 & 0.00266 & 0.000689 & 0.0128 & 0.00333 & 0.0271 & 0.00816 & 0.0689 & 0.0252 & 0.479 & 0.223 \\
RE&n=2 & 0.00298 & 0.000928 & 0.0147 & 0.00477 & 0.0350 & 0.0137 & 0.105 & 0.0507 & 0.700 & 0.412 \\
&n=3 & 0.00349 & 0.00115 & 0.0175 & 0.00618 & 0.0445 & 0.0194 & 0.141 & 0.0748 & 0.870 & 0.564 \\
\midrule
&n=1 & 0.00217 & 0.000487 & 0.0105 & 0.00233 & 0.0233 & 0.00541 & 0.0545 & 0.0129 & 0.199 & 0.0611 \\
AE&n=2 & 0.00256 & 0.000635 & 0.0127 & 0.00322 & 0.0306 & 0.00856& 0.0786 & 0.0262 & 0.286 & 0.111 \\
&n=3 & 0.00310 & 0.000774 & 0.0155 & 0.00415 & 0.0387 & 0.0119 & 0.105 & 0.0389 & 0.345 & 0.149 \\
\bottomrule
\end{tabular}
\vspace{0.3cm}
\caption{Mean of  1000 experiments of the quantiles for Relative Error (RE)  and Absolute Error (AE) for prediction of the standard and normalised robustness on  single trajectories  for dimensionality of signal $n=1,2,3$. 
}
  		\label{tab:result_traj_trainsize}
  	\end{center}
  \end{table}
  
 \begin{figure}
    \centering
   \includegraphics[width=.48\linewidth]{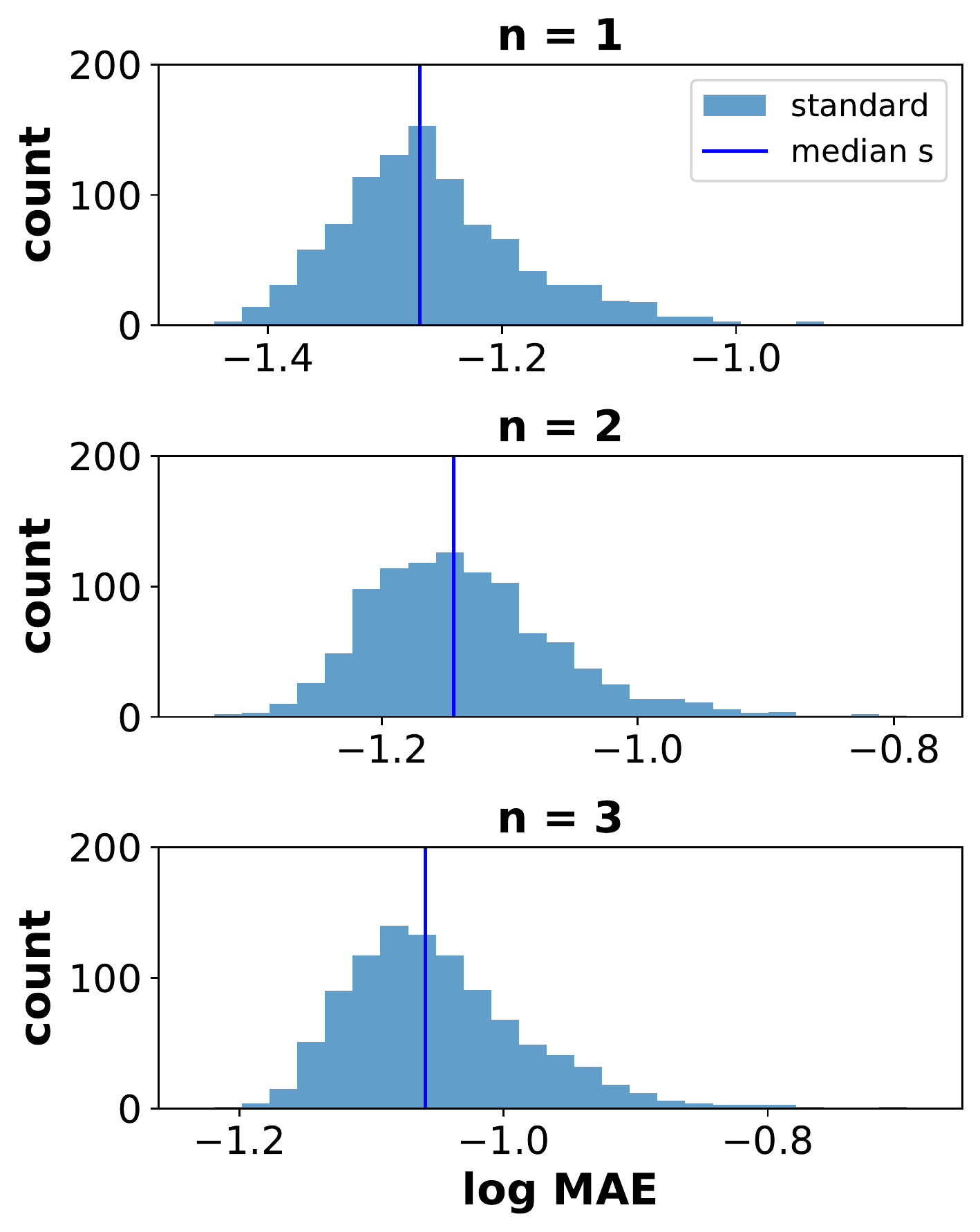}
   \includegraphics[width=.48\linewidth]{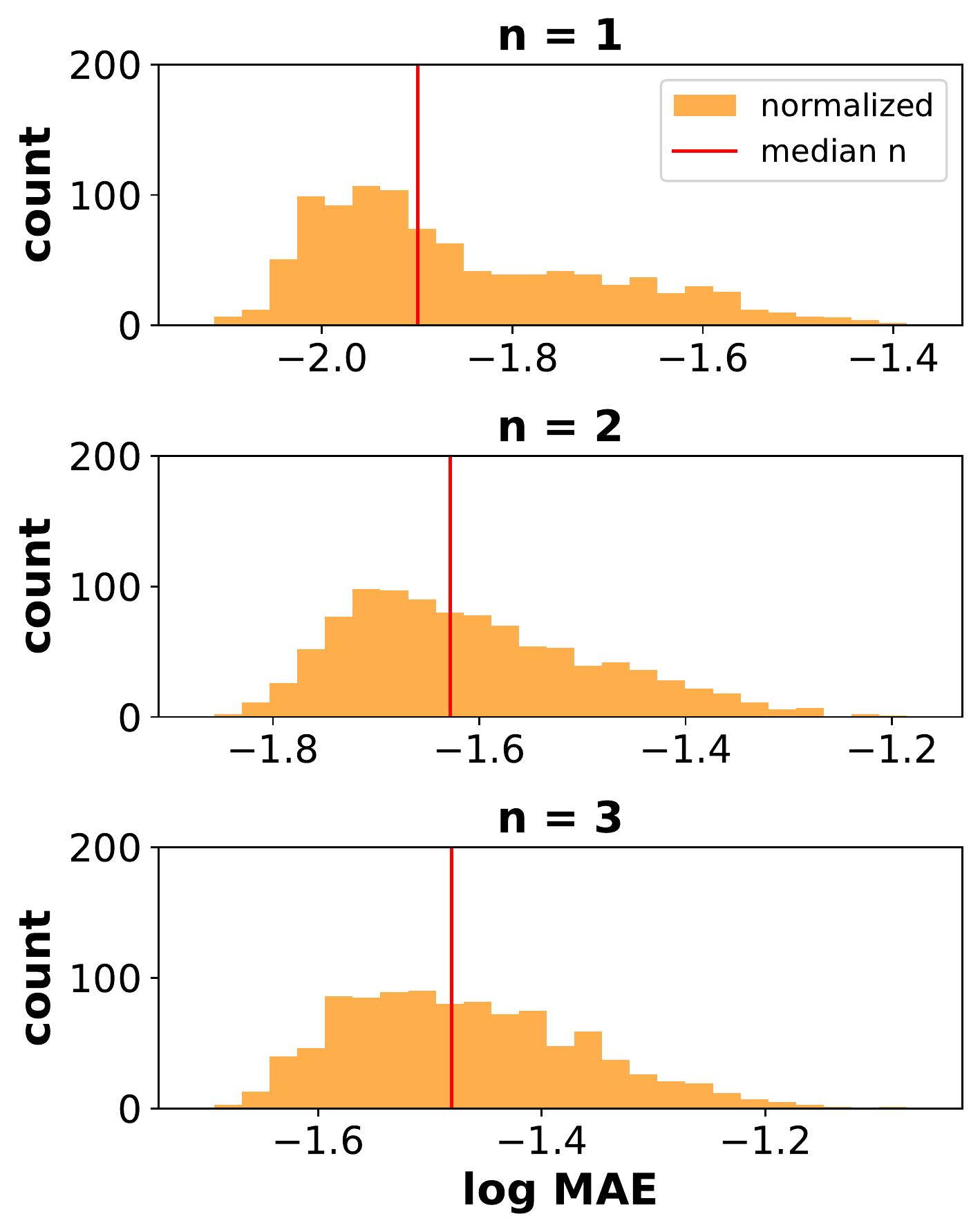}
    \caption{MAE over all 1000 experiments for standard and normalized robustness with trajectories sample from $\mu_0$ with dimensionality of signals $n=1,2,3$}
    \label{fig:mae_single}
\end{figure}

\begin{figure}
    \centering
    \includegraphics[width=.48\linewidth]{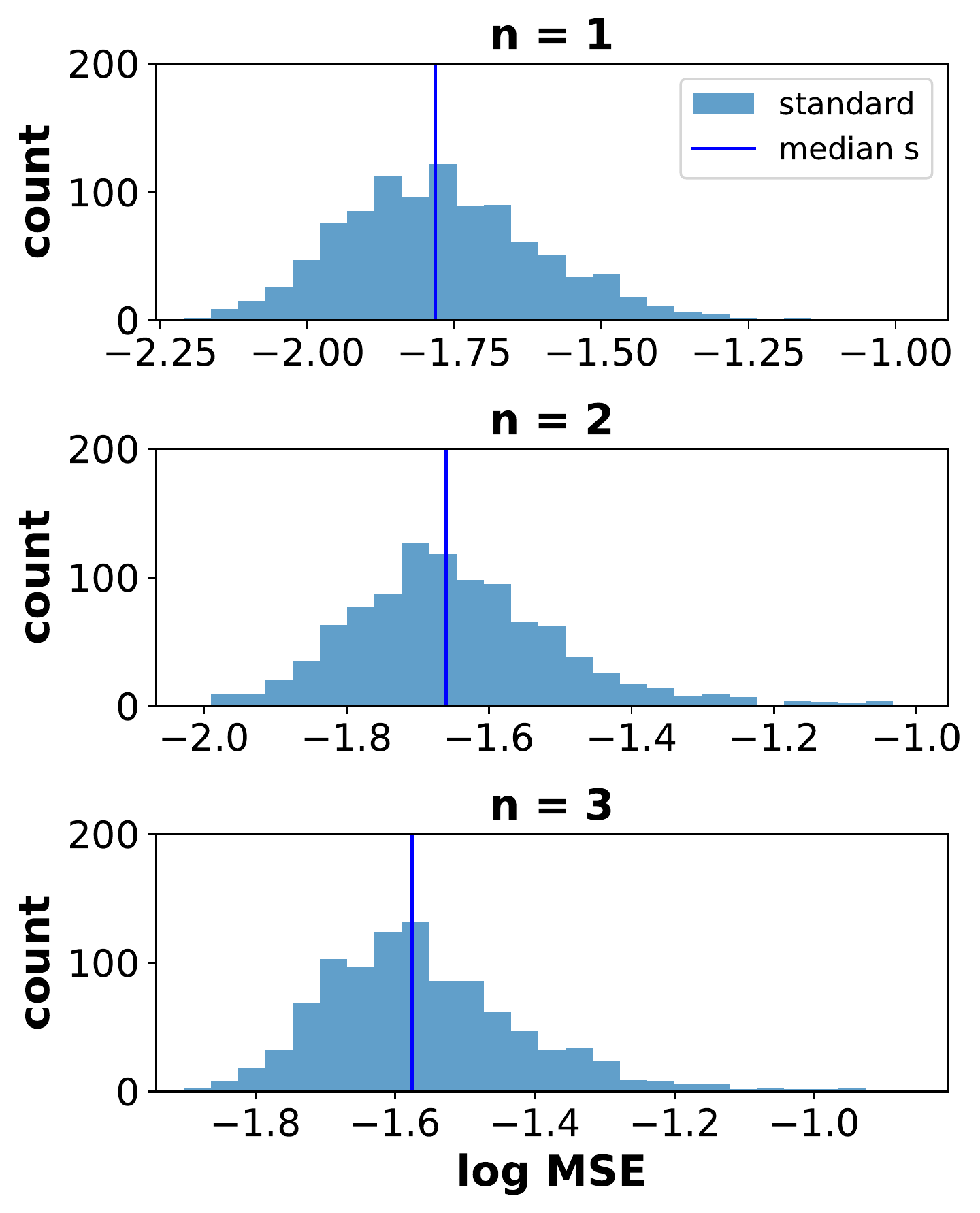}
     \includegraphics[width=.48\linewidth]{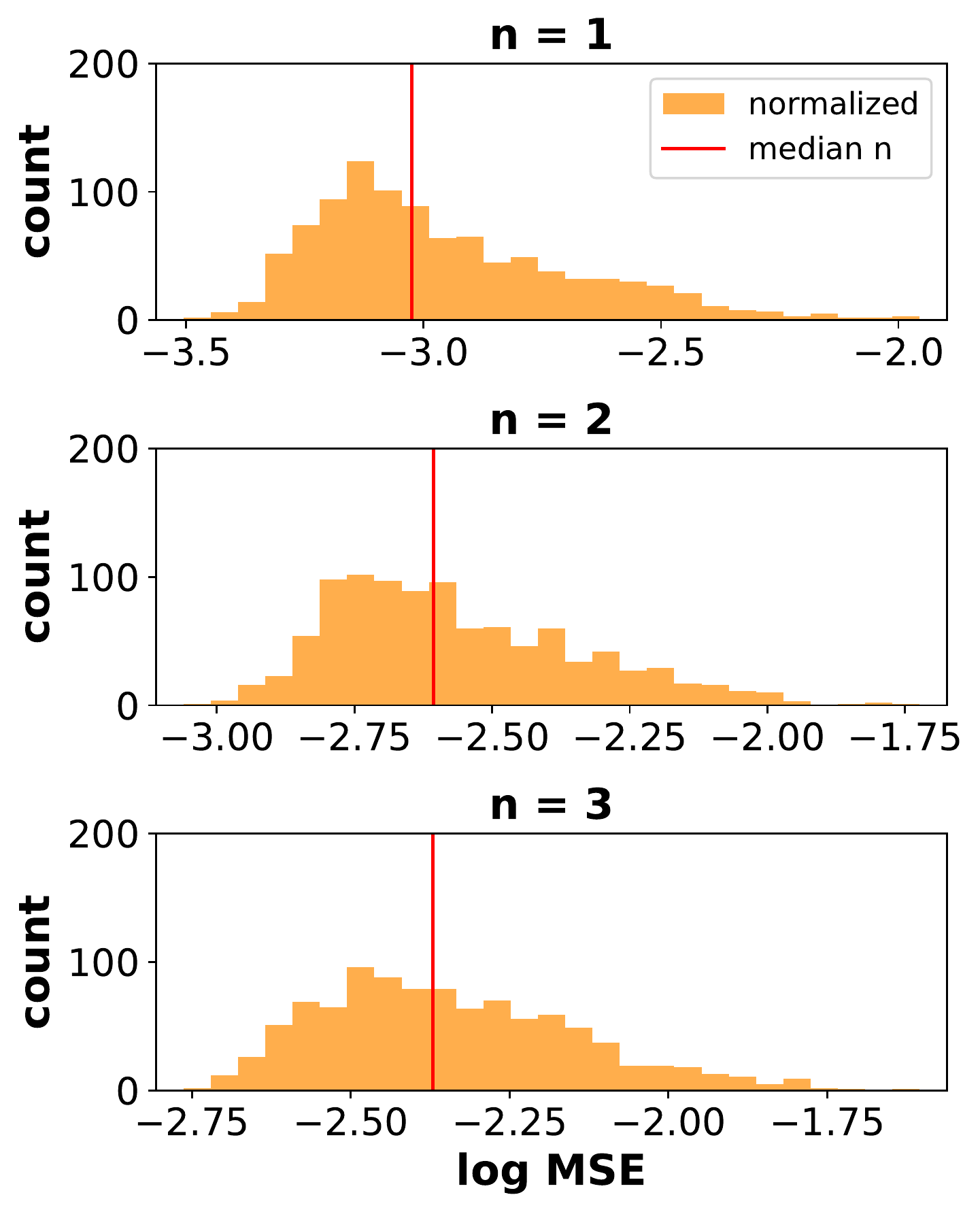}
    \caption{MSE over all 1000 experiments standard and normalized robustness for samples from $\mu_0$ with dimensionality of signals $n=1,2,3$}
    \label{fig:mse_single}
\end{figure}

\begin{figure}
    \centering
     \includegraphics[width=0.48\linewidth]{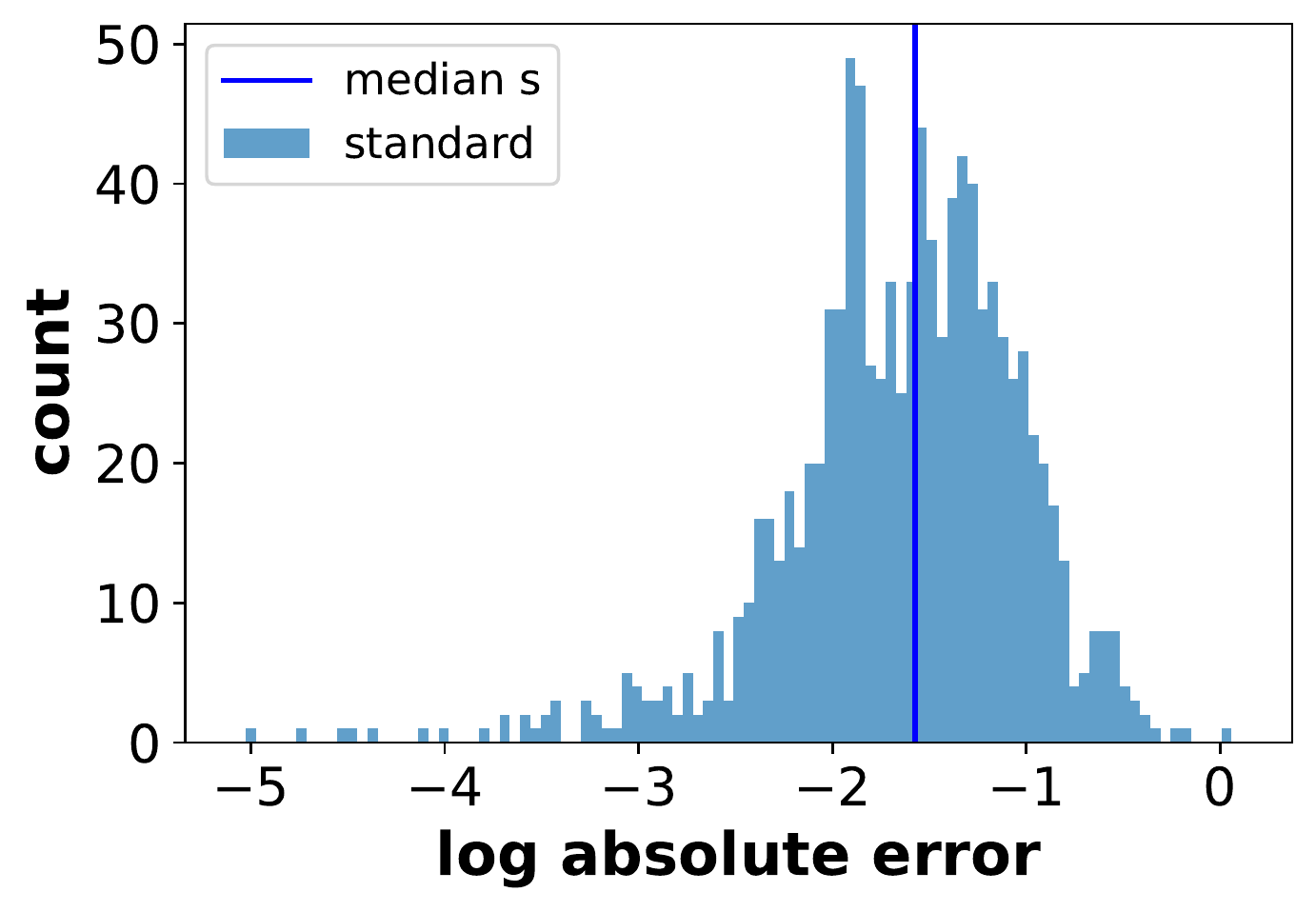}
    \includegraphics[width=0.48\linewidth]{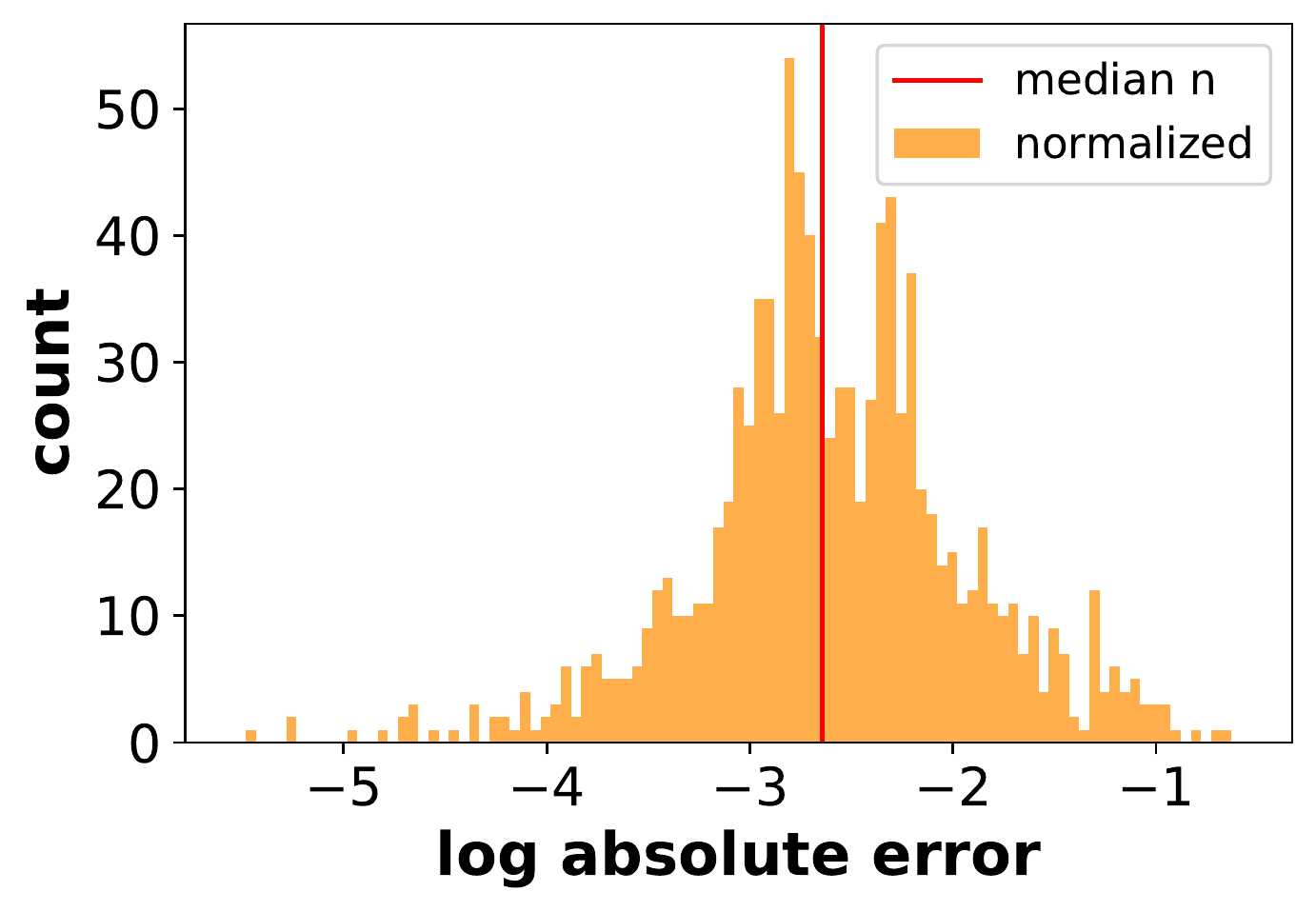}
    \caption{AE in predicting  standard (left) and normalized (right) robustness for a random experiment, with trajectories sample from $\mu_0$ with dimensionality of signals $n=1$
    }
    \label{fig:absrel_single}
\end{figure}

\begin{figure}
  	\vspace*{-1.1cm}
    \centering
    \includegraphics[width=0.49\linewidth]{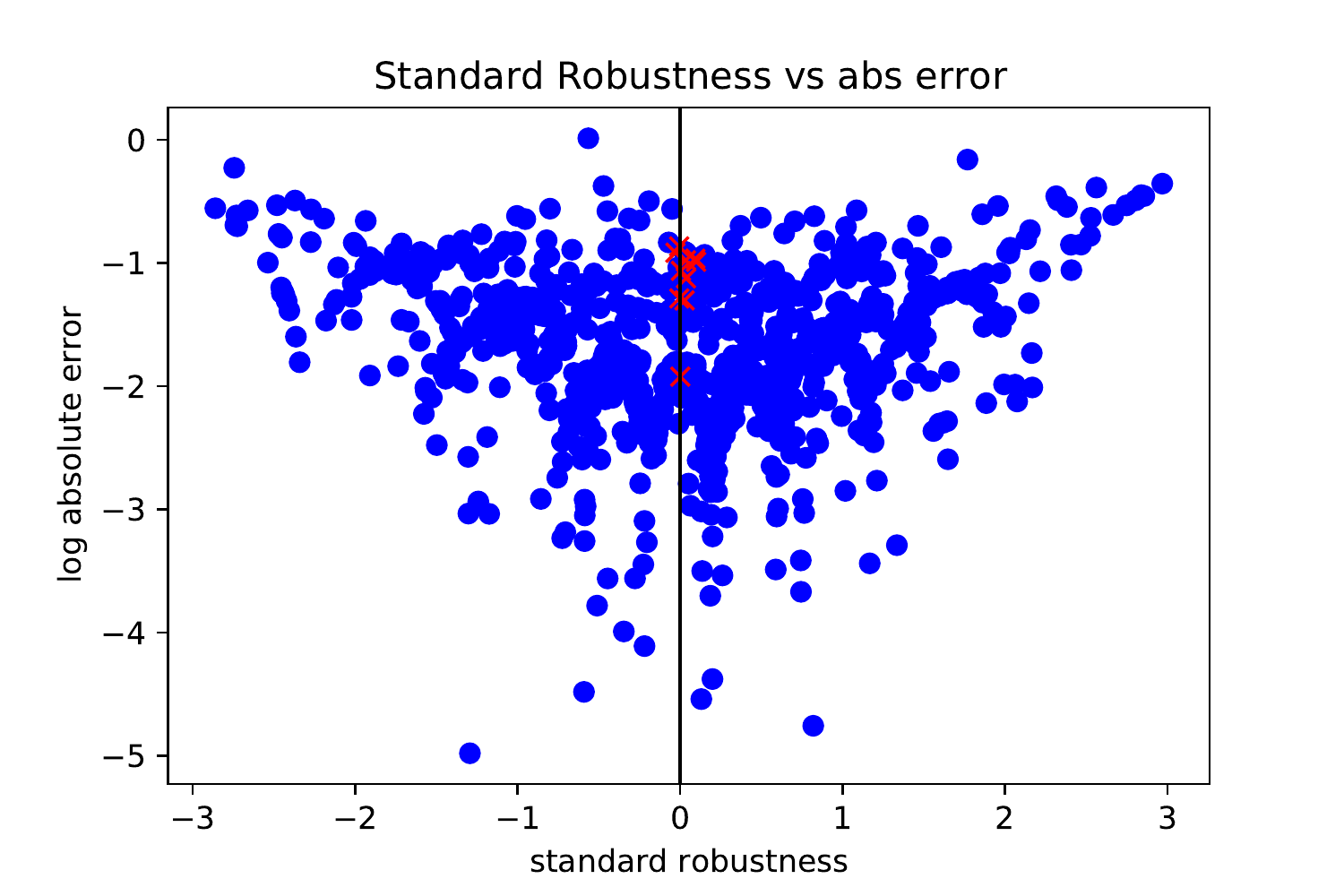}
    \includegraphics[width=0.49\linewidth]{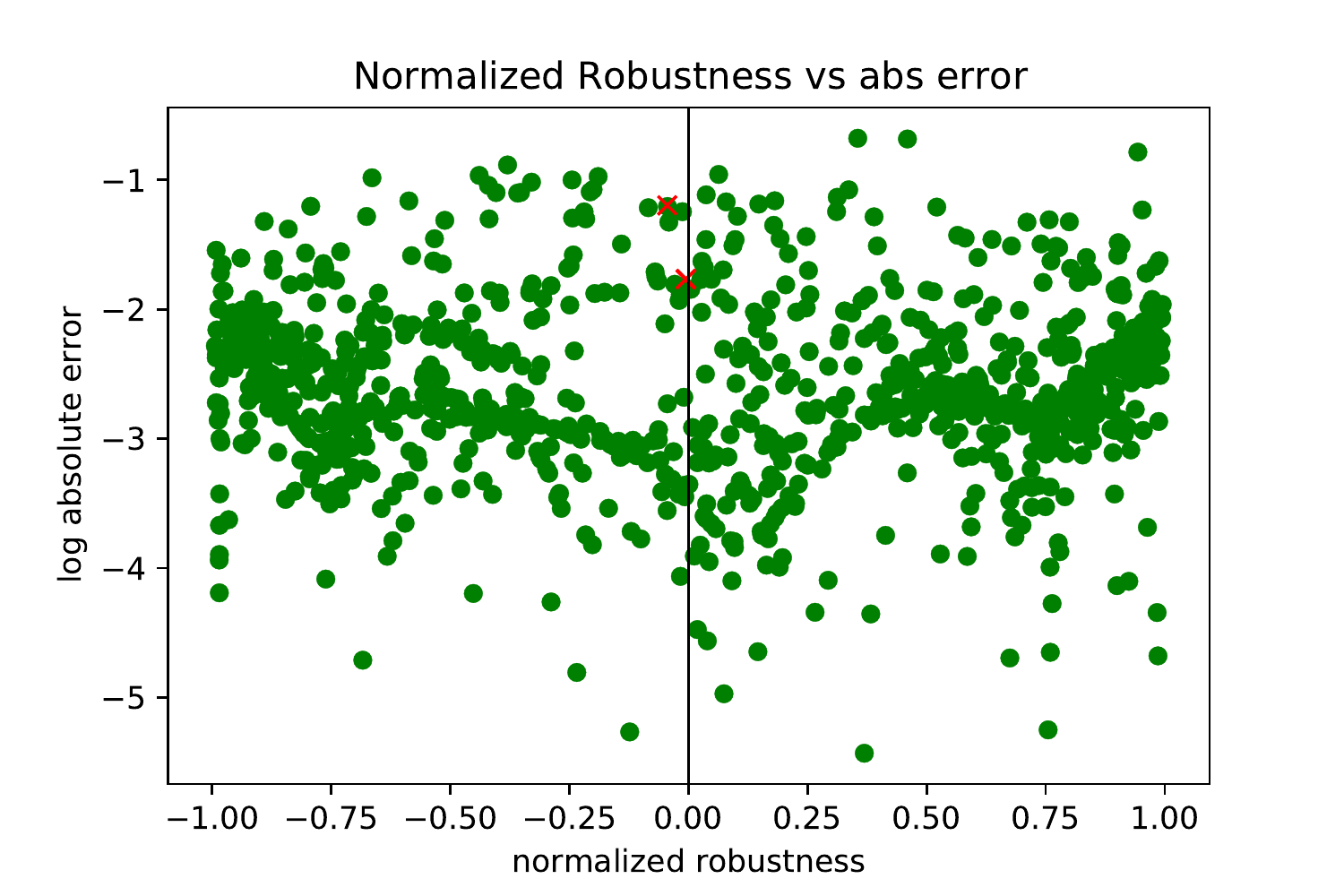}
    \caption{Robustness vs log AE for a single experiment  for prediction of the standard (left) and normalized (right) robustness 
     on  single trajectories sampled from $\mu_0$. The misclassified formulae are the red crossed.}
    \label{fig:robvsabs}
\end{figure}
We also test some properties of the ARCH-COMP 2020 \cite{ARCH20} to show that it works well even on real formulae. In particular, we consider the  properties AT1 and AT51 of the Automatic Transmission (AT) Benchmark, and  the  properties AFC27 of the Fuel Control of an Automotive Powertrain (AFC). We obtained an accuracy always equal to 1, median AE = $0.0229$, median  RE = $0.0316$ for AT1, median AE = $0.00814$, median RE = $0.00859$ for AT51, and median AE = $0.0146$, median RE = $0.0160$  for AFC27. 

\newpage
\subsection{Expected Robustness}
Further results on experiment for predicting the expected robustness.
In terms of error on the robustness itself, we plot in Fig. \ref{fig:avrob_absrel_all} and \ref{fig:avrob_absrel_all_normtrue} the distribution of MAE and MRE for the standard and normalize expected robustness respectively. 
In table \ref{tab:quantile_average_rob_mu0} we report the mean of the quantiles for AE and RE of 500 experiments.
  Distributions of these quantities over the test set for a randomly picked run are shown in Fig.  \ref{fig:avrob_absrel_normtrue}. 
In Fig. \ref{fig:avgrobvsrelerr}, instead, we plot relative error versus expected robustness for a random run.

\begin{figure}
    \centering
    \includegraphics[width=0.49\linewidth]{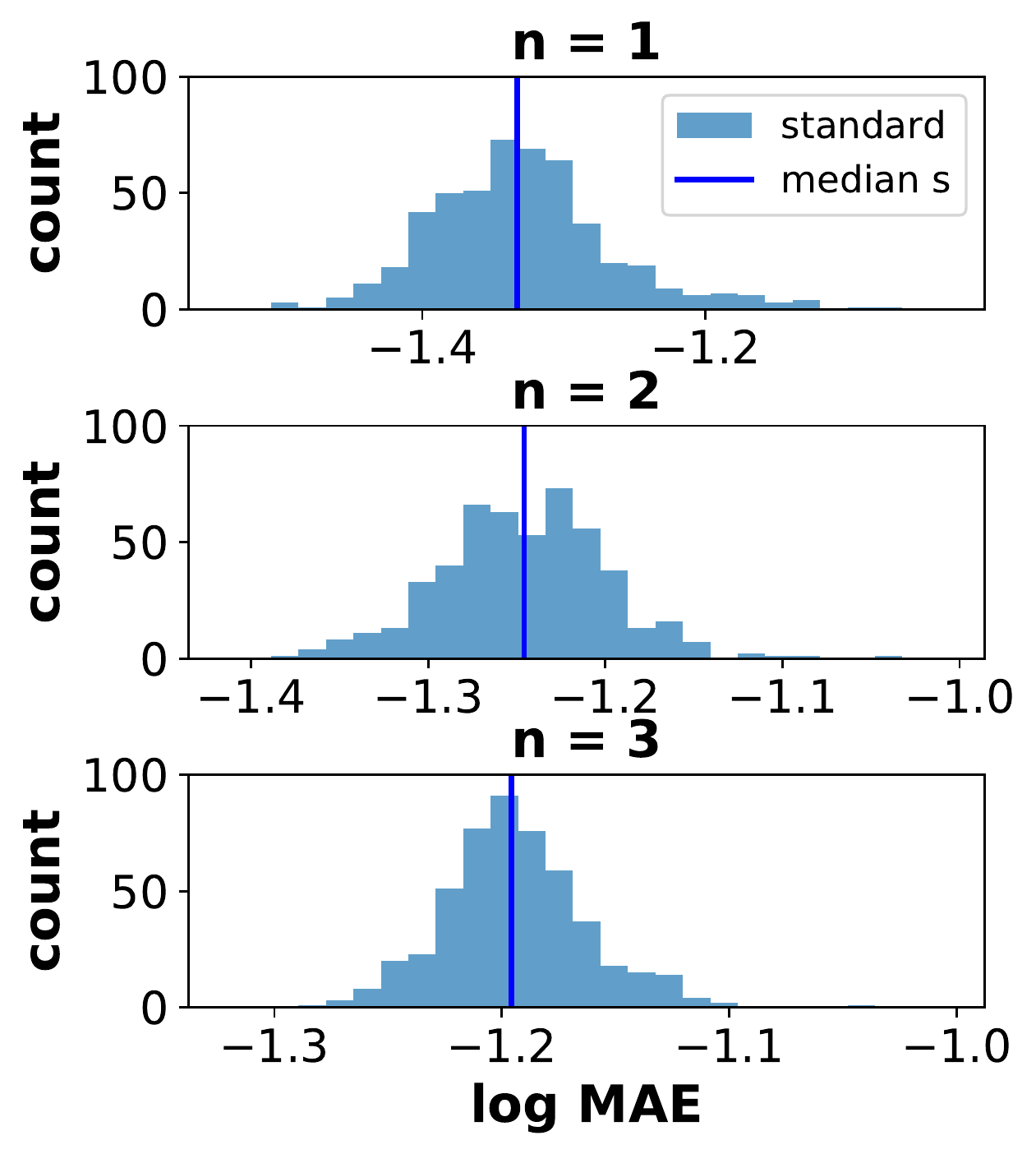}
    \includegraphics[width=0.48\linewidth]{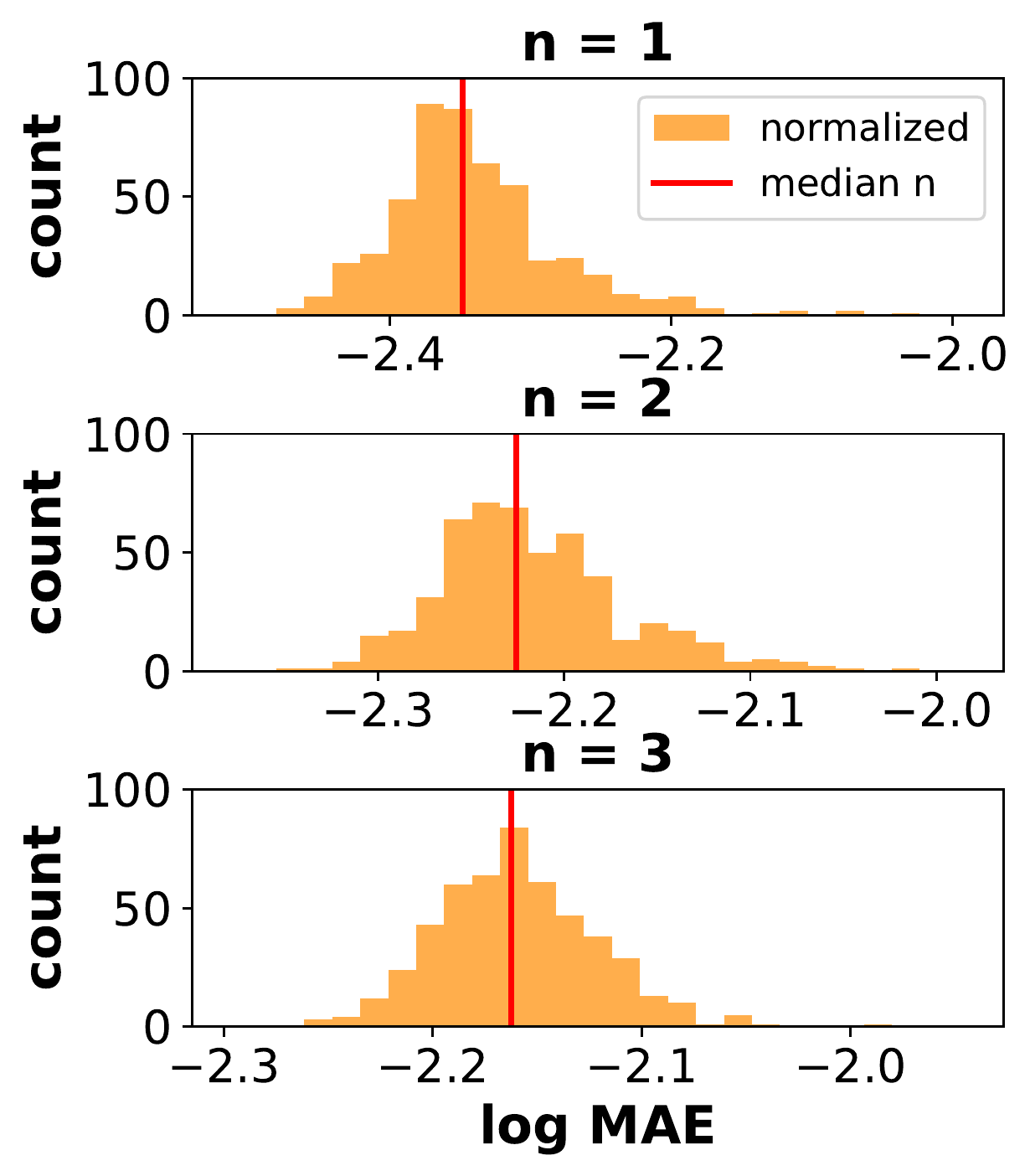}
    \caption{Absolute mean errors for robustness over all experiments for standard and normalized expected robustness with trajectories sample from $\mu_0$}
    \label{fig:avrob_absrel_all}
\end{figure}
\begin{figure}
    \centering
    \includegraphics[width=0.48\linewidth]{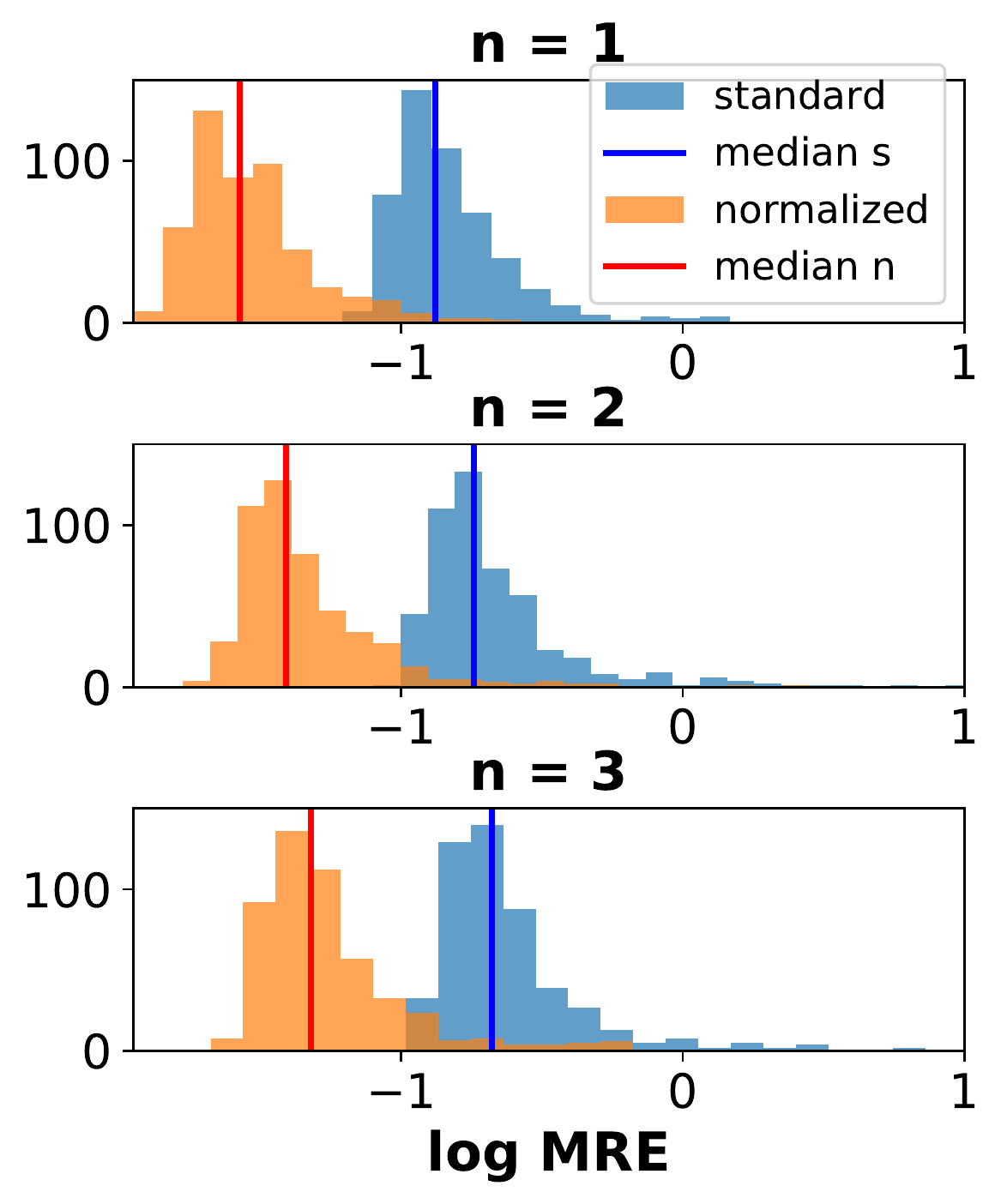}
    \includegraphics[width=0.48\linewidth]{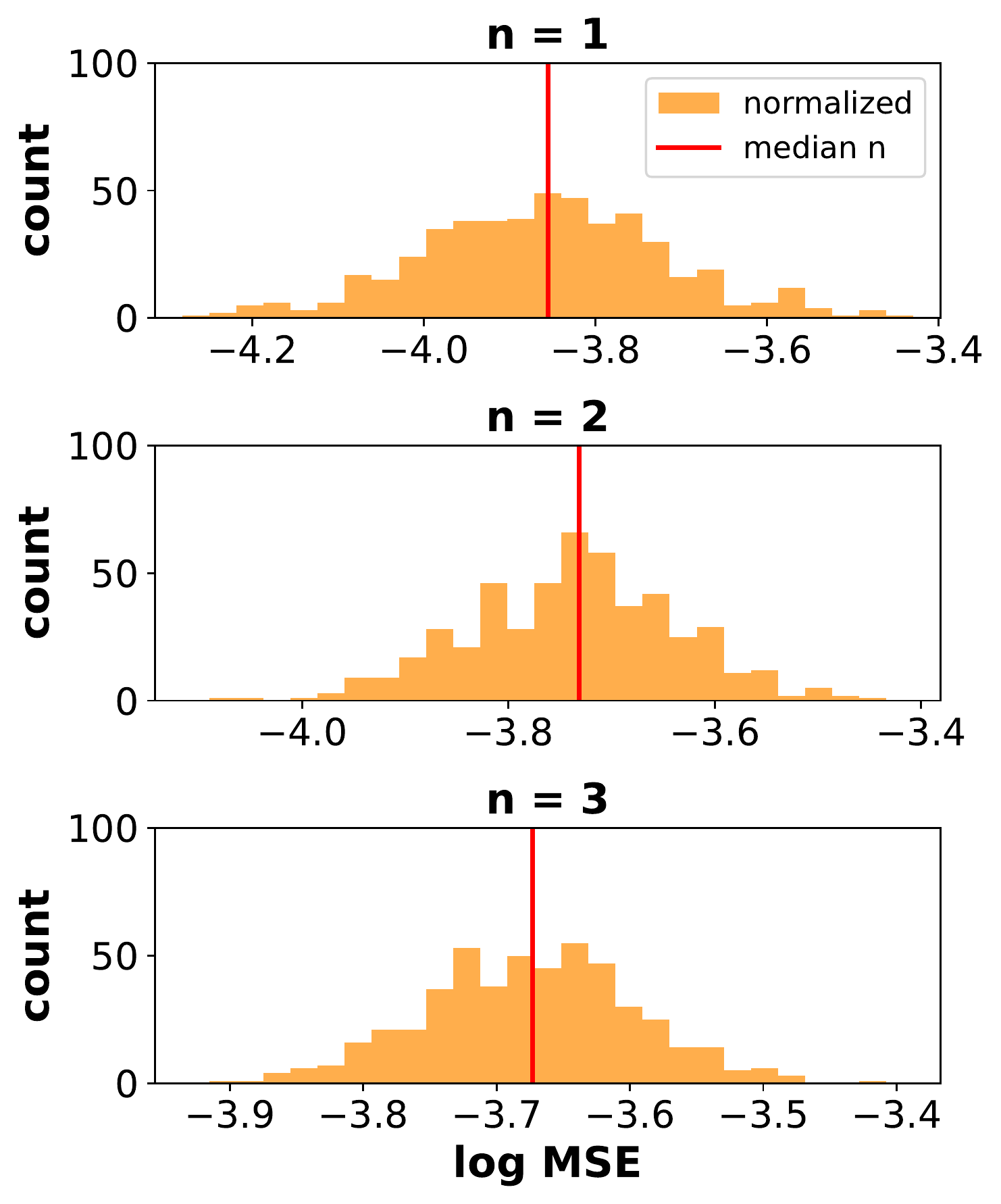}
    \caption{(left) MRE over all 500 experiments for standard and normalized expected robustness for samples from $\mu_0$ (right) MSE for the normalized robustness}
    \label{fig:avrob_absrel_all_normtrue}
\end{figure}

 \begin{table}[]
  	\begin{center}
  		\hspace*{0cm}
\begin{tabular}{lllllll}
\toprule
{} & \multicolumn{2}{l}{MSE} & \multicolumn{2}{l}{MAE} & \multicolumn{2}{l}{MRE} \\
{} &  $\rho$& $\rhon$ & $\rho$ & $\rhon$ &  $\rho$ & $\rhon$ \\
\midrule
n=1 &  0.0126 &   0.000140 &  0.0464 &   0.00448 &  0.132 &   0.0267 \\
n=2 &  0.0143 &   0.000185 &  0.0568 &   0.00593 &  0.181 &   0.0392 \\
n=3 &  0.0150 &   0.000212 &  0.0637 &   0.00688 &  0.210 &   0.0478 \\
\bottomrule
\end{tabular}
    \vspace{0.3cm}
\caption{Median for MSE, MAE and MRE of 500 experiments for prediction of the standard $\rho$ and normalised $\rhon$ expected robustness on trajectories sampled according to $\mu_0$}
  		\label{tab:result_average_rob}
  	\end{center}
  \end{table}


\begin{figure}
    \centering
       \includegraphics[width=0.48\linewidth]{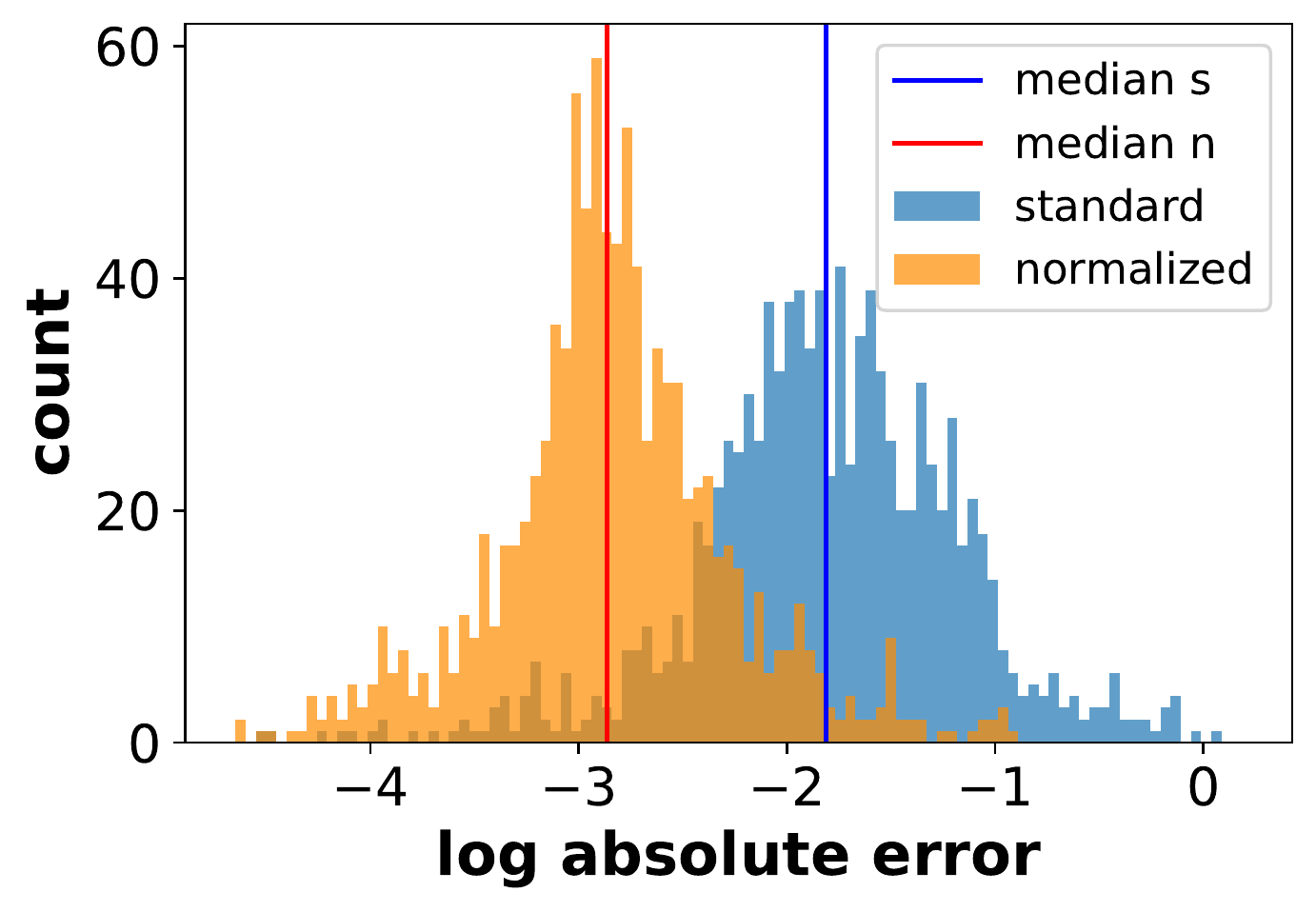}
    \includegraphics[width=0.48\linewidth]{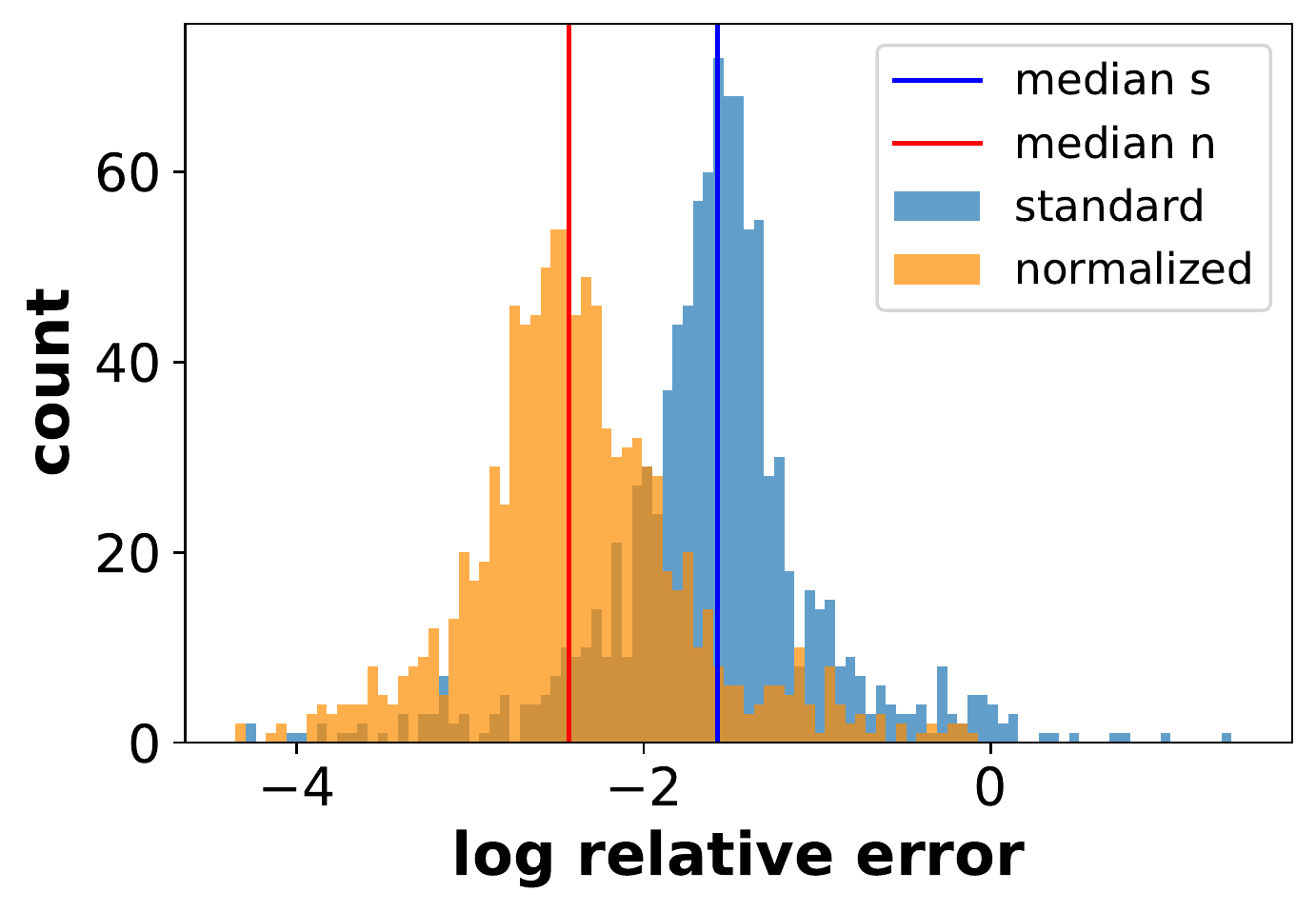}
    \vspace{0.3cm}
    \caption{Absolute (left) and relative (right) errors in predicting average standard and normalized robustness for a random experiment}
    \label{fig:avrob_absrel_normtrue}
\end{figure}

\begin{figure}
    \centering
    \includegraphics[width=1.\linewidth]{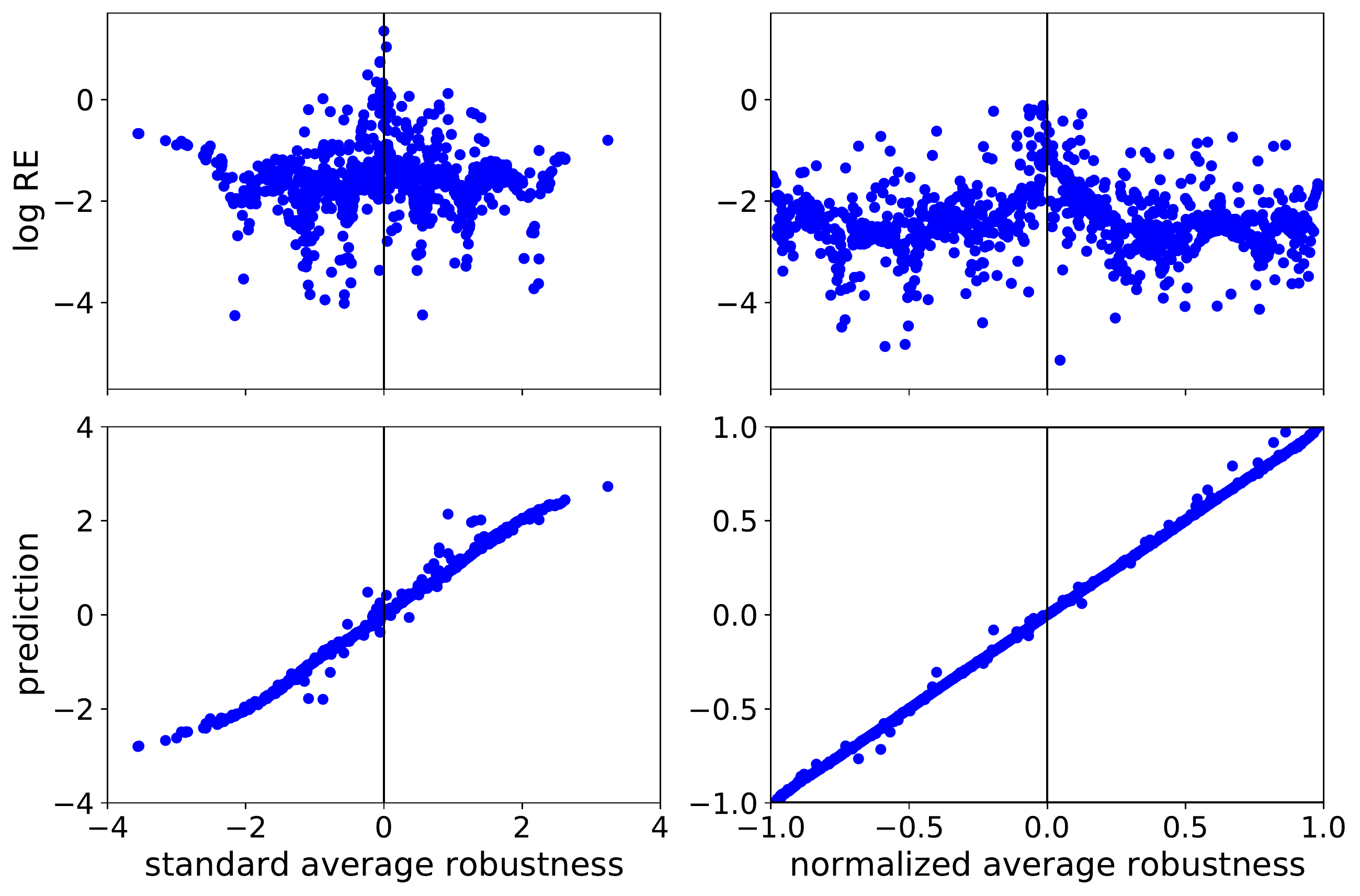}
    \caption{expected robustness vs relative error and predicted value for a single experiment for standard (left) and normalized (right) expected robustness}
    \label{fig:avgrobvsrelerr}
\end{figure}

   \begin{table}[]
  	\begin{center}
  		\hspace*{0cm}
\begin{tabular}{c|cc|cc|cc|cc|cc}
\toprule
{} & \multicolumn{2}{c}{5perc} & \multicolumn{2}{c}{1quart} & \multicolumn{2}{c}{median} & \multicolumn{2}{c}{3quart} & \multicolumn{2}{c}{95perc} \\
{} &   $\rho$ &  $\rhon$ &   $\rho$ &  $\rhon$ &   $\rho$ &  $\rhon$ &   $\rho$ &  $\rhon$ &   $\rho$ &  $\rhon$ \\
\midrule
n=1 & 0.00341 & 0.000521 & 0.0164 & 0.00249 & 0.0319 & 0.00522 & 0.0638 & 0.0111 & 0.344 & 0.0740 \\
n=2 & 0.00392 & 0.000543 & 0.0186 & 0.00269 & 0.0377 & 0.00607 & 0.0839 & 0.0159 & 0.484 & 0.115 \\
n=3 & 0.00445 & 0.000570 & 0.0211 & 0.00287 & 0.0439 & 0.00669 & 0.103 & 0.0196 & 0.548 & 0.133 \\
\midrule
n=1 & 0.00189 & 0.000188 & 0.00916 & 0.000890 & 0.0196 & 0.00189 & 0.0463 & 0.00413 & 0.159 & 0.0161 \\
n=2 & 0.00219 & 0.000196 & 0.011 & 0.000984 & 0.0247 & 0.00223 & 0.0585 & 0.00583 & 0.210 & 0.0249 \\
n=3 & 0.00250 & 0.000208 & 0.0125 & 0.00105 & 0.0293 & 0.00254 & 0.0704 & 0.00708 & 0.240 & 0.0286 \\
\bottomrule
\end{tabular}
    \vspace{0.3cm}
\caption{Mean of quantiles for RE and AE over 500 experiments for prediction of the standard $\rho$ and normalised $\rhon$ expected robustness on  5000 trajectories sampled according to $\mu_0$}
  		\label{tab:quantile_average_rob_mu0}
  	\end{center}
  \end{table}

 \newpage 
\subsection{Satisfaction Probability}
Further results on experiment for predicting the satisfaction probability.
In terms of error on the probability itself, we plot in Fig. \ref{fig:satprob_absrel_all}  the distribution of the MAE and MRE.  The Median and $95\%$ Confidence interval for MAE and MRE of 500 experiments are also reported in table \ref{tab:result_satprob}. 
In Table \ref{tab:quantile_sat_prob_mu0} we report the estimated median and quantiles for AE and RE on 500 experiments.

\begin{figure}
    \centering
    \includegraphics[width=.5\linewidth]{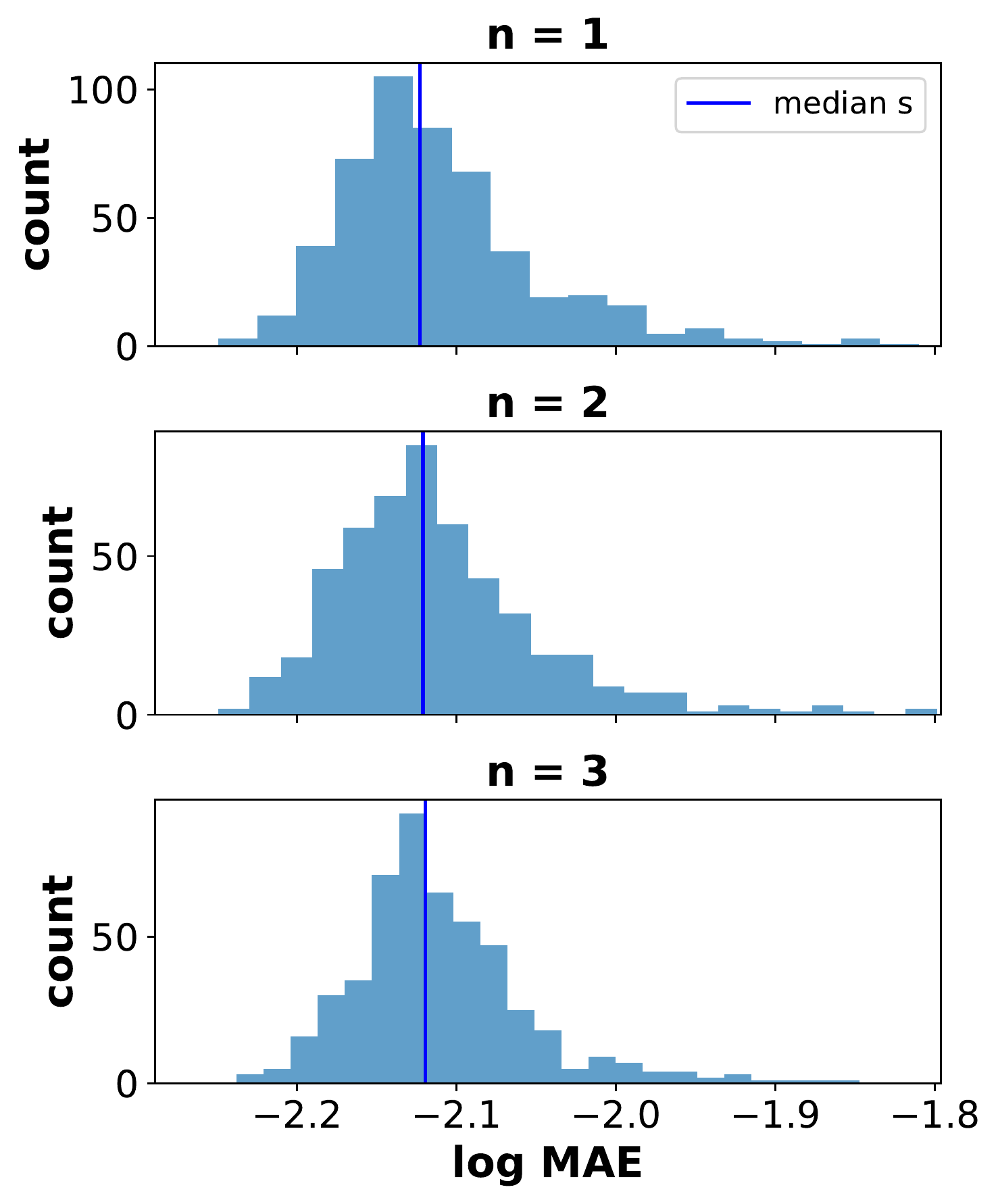}
        \includegraphics[width=.48\linewidth]{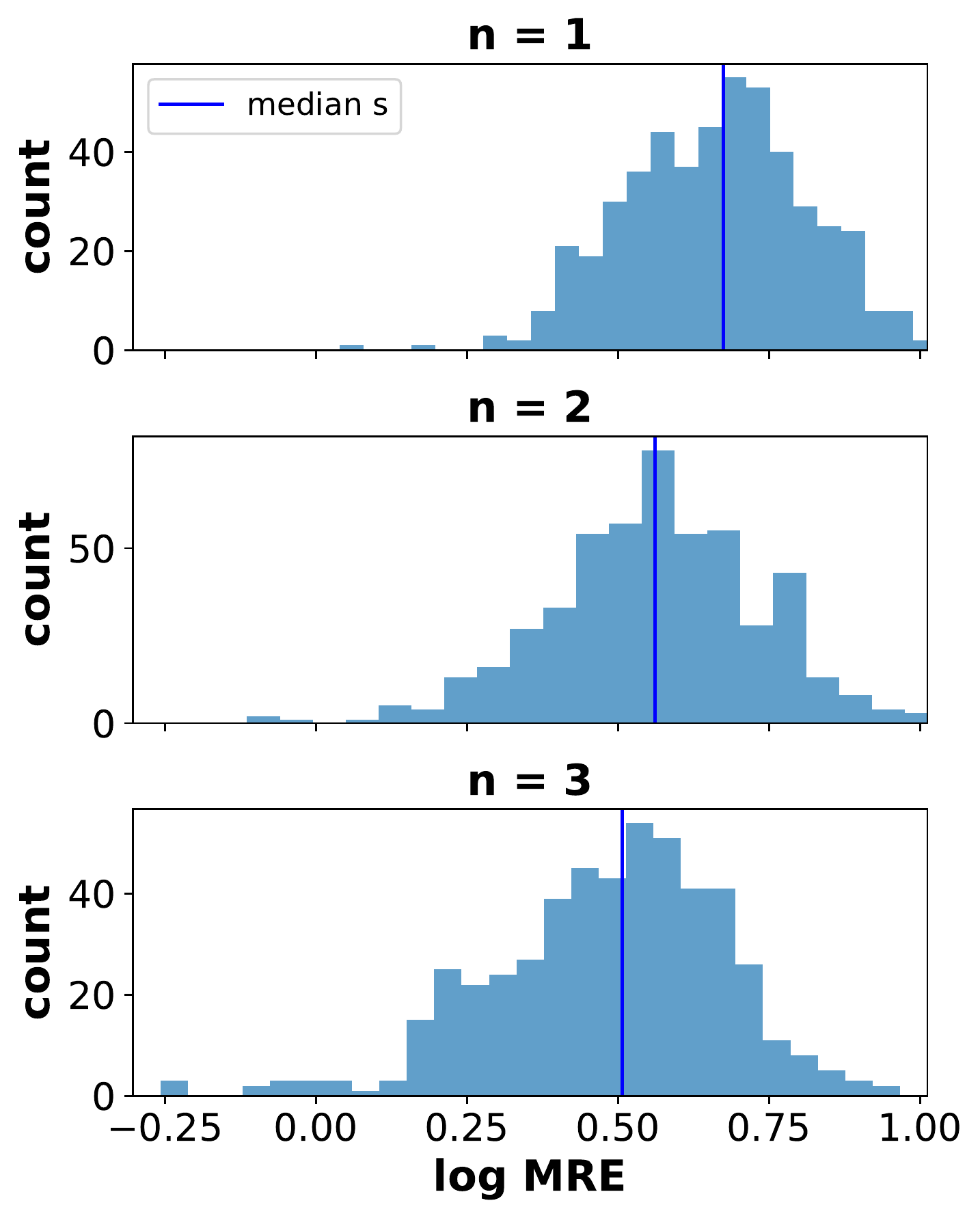}
    \caption{Absolute mean errors for satisfaction probability over all 500 experiments for standard robustness with trajectories sample from $\mu_0$ for trajectories with dimensionality $n=1,2,3$.}
    \label{fig:satprob_absrel_all}
\end{figure}

 \begin{table}[h!]
  	\begin{center}
  		\hspace*{0cm}
\begin{tabular}{lrrr}
\toprule
{} &      MSE &      MAE &       MRE \\
\midrule
n=1 &  0.000268 &  0.007534 &  4.722656 \\
n=2 &  0.000260 &  0.007568 &  3.640625 \\
n=3 &  0.000247 &  0.007591 &  3.214844 \\
\bottomrule
\end{tabular}
    \vspace{0.3cm}
\caption{Median for MSE MAE and MRE of 500 experiments for prediction of the satisfaction probability on trajectories sampled according to $\mu_0$}
  		\label{tab:result_satprob}
  	\end{center}
  \end{table}


   \begin{table}[]
  	\begin{center}
  		\hspace*{0cm}
\begin{tabular}{lrrrrrr|rrrrr}
\toprule
  		{} & \multicolumn{6}{c}{RE} & \multicolumn{4}{c}{AE} \\
  		\midrule
{} &    5perc &   1quart &   median &   3quart &   95perc &     99perc&    1quart &   median &   3quart &       99perc \\
\midrule
n=1 & 0.000189 & 0.00297 & 0.00755 & 0.0326 & 1.51 & 176 & 0.00112 &  0.00295 & 0.00647 &  0.0725 \\
n=2 & 0.000355 & 0.00289 & 0.00745 & 0.0299 & 0.876 & 130&  0.00123 &  0.00299 & 0.00669 & 0.0739 \\
n=3 & 0.000449 & 0.00309 & 0.00795 & 0.0309 & 0.586 &  81.8 & 0.00135 & 0.00322 & 0.00725 & 0.0722\\
\bottomrule
\end{tabular}
    \vspace{0.3cm}
\caption{Mean of quantiles for RE and AE over 500 experiments for prediction of the probability satisfaction with trajectories sample from $\mu_0$}
  		\label{tab:quantile_sat_prob_mu0}
  	\end{center}
  \end{table}

\subsection{Kernel Regression on  other stochastic processes}\label{app:exp:models}
Figure~\ref{fig:stoch_trajectories} reports the mean and standard deviation of 50 trajectories for the stochastic models: \emph{Immigration} (1 dim), \emph{Polymerase} (1 dim), \emph{Isomerization} (2 dim) and \emph{Transcription Intermediate} (3 dim), simulated using the Python library StochPy. \cite{maarleveld2013stochpy}.    
\begin{figure}[h!]
  	\begin{center}
  		\hspace*{-0.4cm}
  		\includegraphics[scale=0.35]{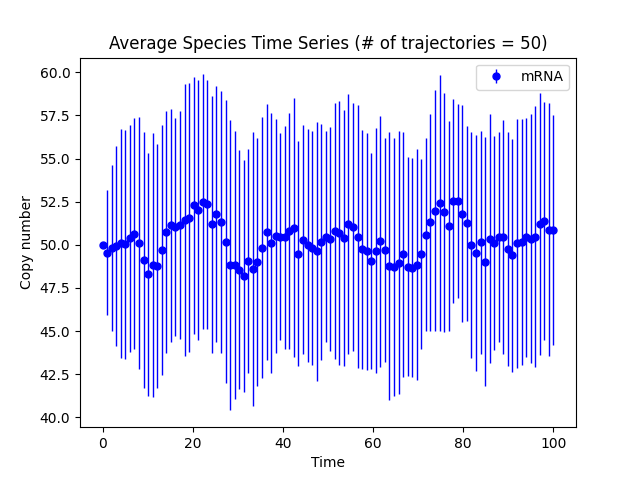}
  		\includegraphics[scale=0.35]{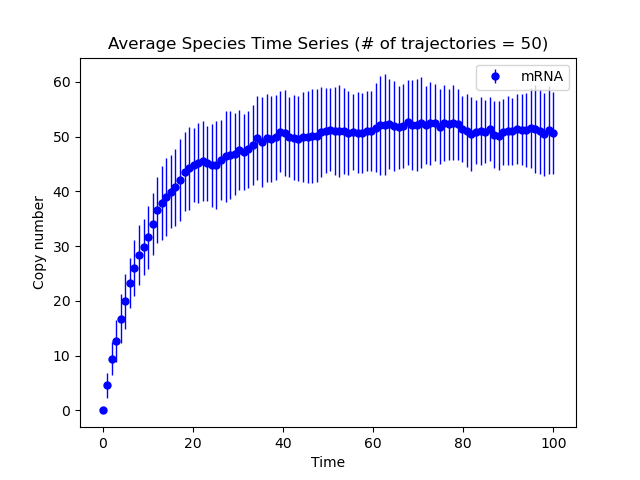}
  		\includegraphics[scale=0.35]{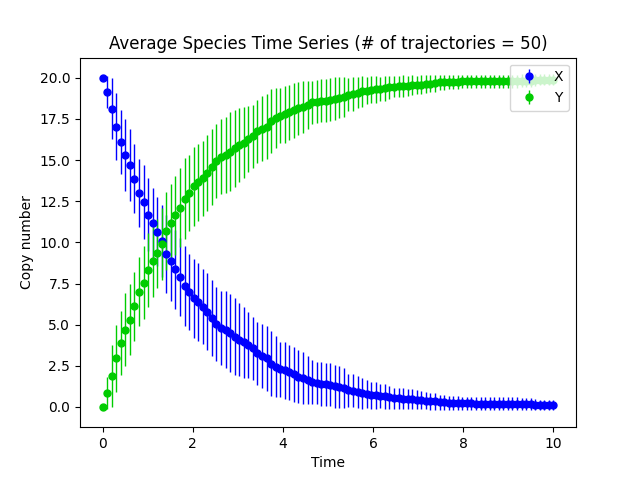}
  		\includegraphics[scale=0.35]{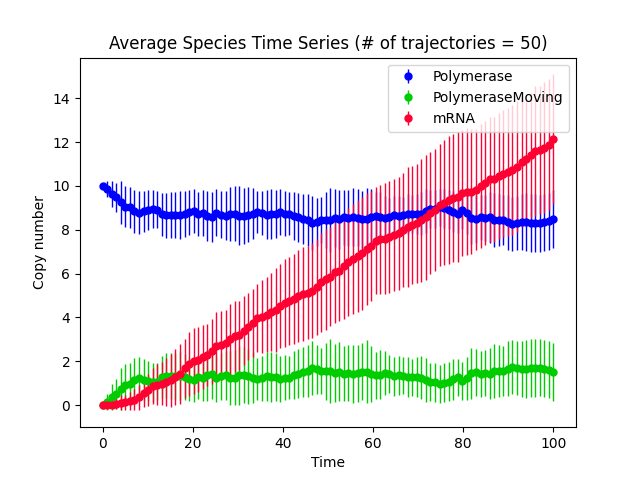}
  	\end{center}
  	\vspace*{-0cm}\caption{\label{fig:stoch_trajectories}From left to right, expected and standard deviation of 50 trajectories generated by the \emph{Immigration}, \emph{Polymerase},\emph{Isomerization} and \emph{Trancription Intermediate} model.}
  \end{figure}

\subsubsection{Robustness on single trajectories}
\label{app:subsec_stoch_single}
Further results on experiment for prediction of Boolean satisfiability of a formula using as a discriminator the sign of the robustness sampling trajectories on different stochastic models. 
Figures~\ref{fig:accuracy_rel_othermodels}, \ref{fig:mse_single_othermodels}, and \ref{fig:mae_single_othermodels} report the accuracy, MRE, MAE and MSE over all experiments for standard and normalized robustness for sample from \emph{Immigration} (1 dim), \emph{Isomerization} (2 dim) and \emph{Transcription} (3 dim). 
In table \ref{tab:quantile_rob_othermodels} we report the mean of the quantiles for AE and RE of 500 experiments.

\begin{figure}
    \centering
            \includegraphics[width=0.51\linewidth]{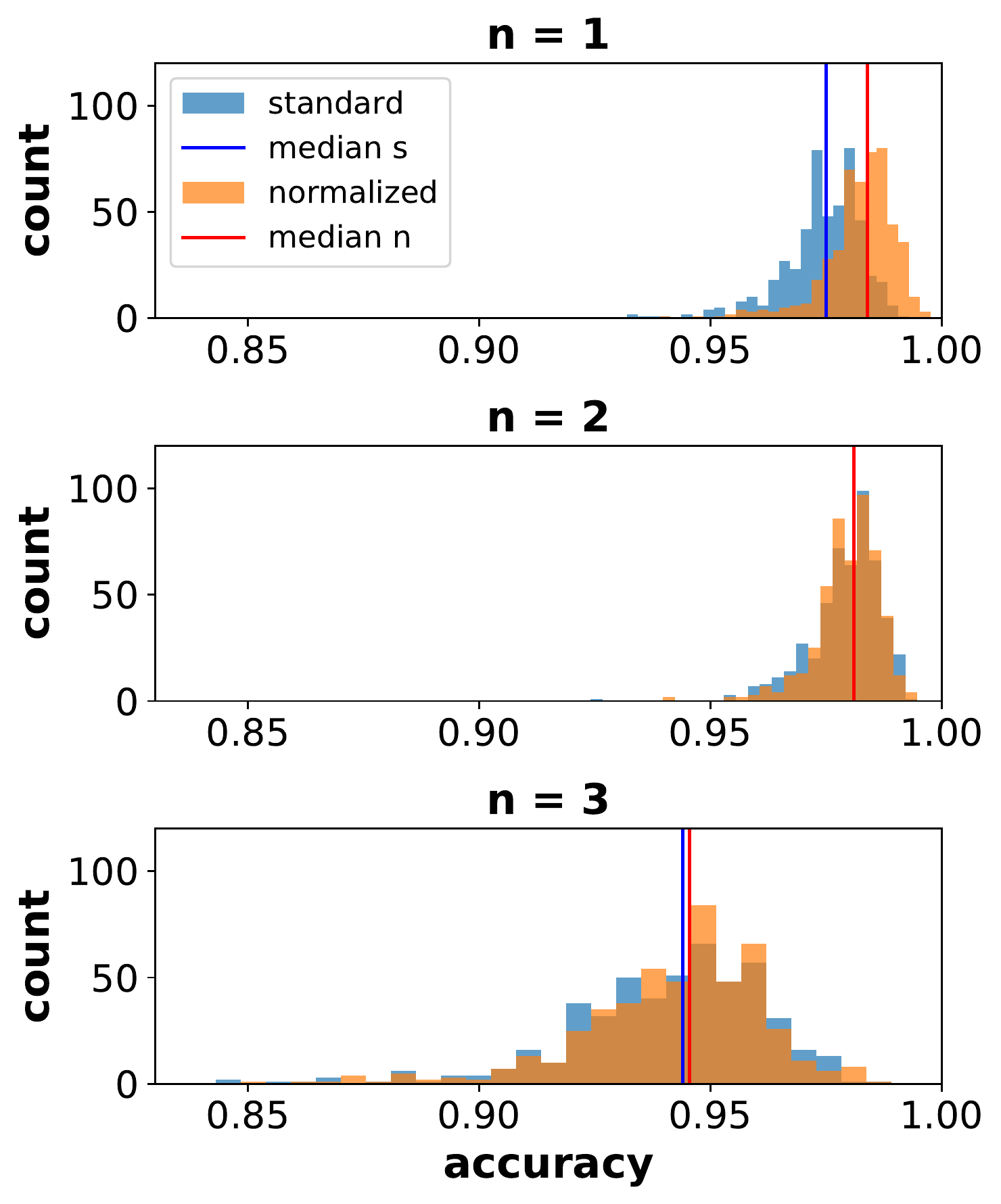}
        \includegraphics[width=0.48\linewidth]{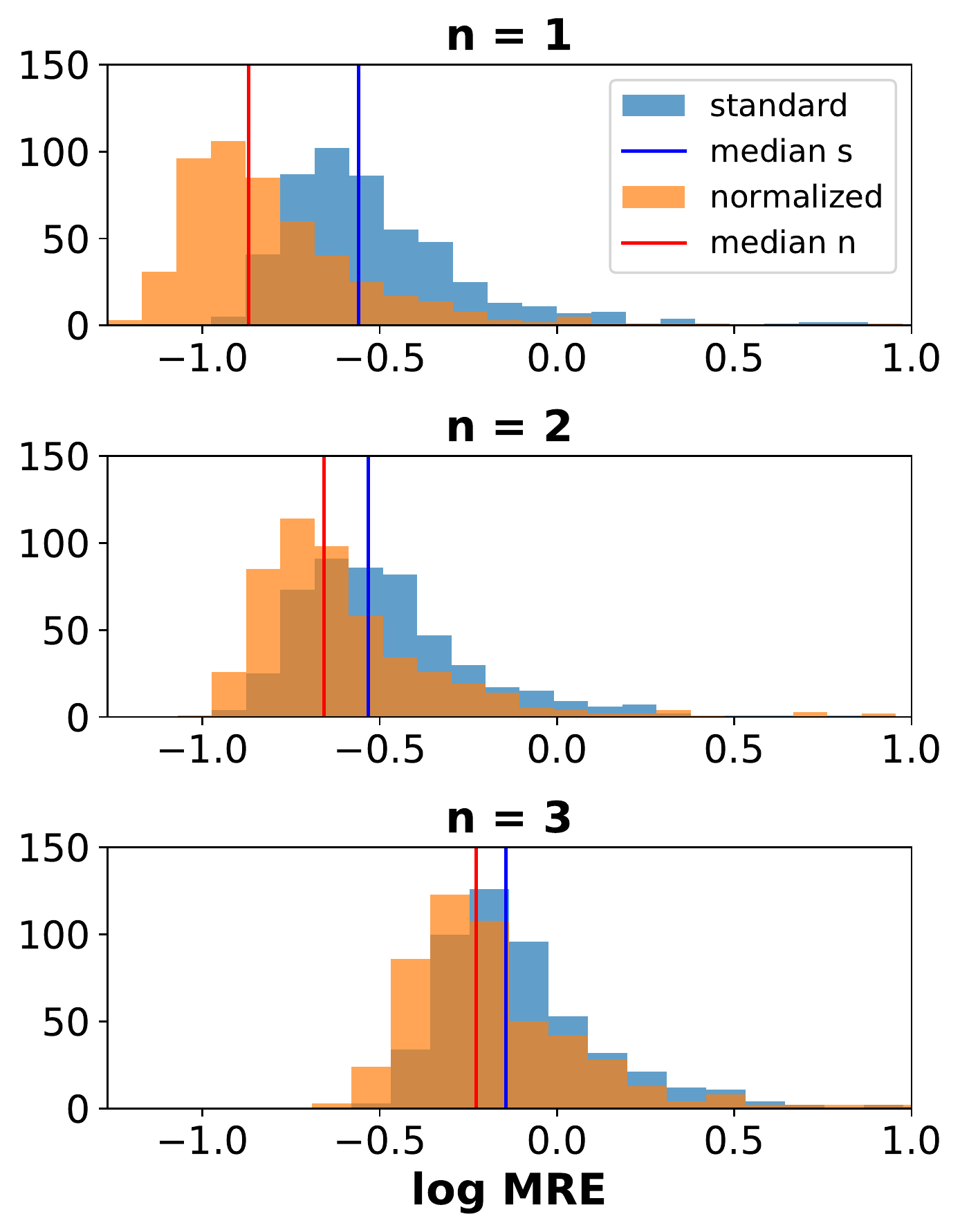}   
    \caption{Accuracy of satisfiability prediction (left)  and  $log_{10}$ of MRE   over all experiments for standard and normalized robustness for sample from \emph{Immigration} (1 dim), \emph{Isomerization} (2 dim) and \emph{Transcription} (3 dim)}
    \label{fig:accuracy_rel_othermodels}
\end{figure}

\begin{figure}[h!]
    \centering
   \includegraphics[width=.50\linewidth]{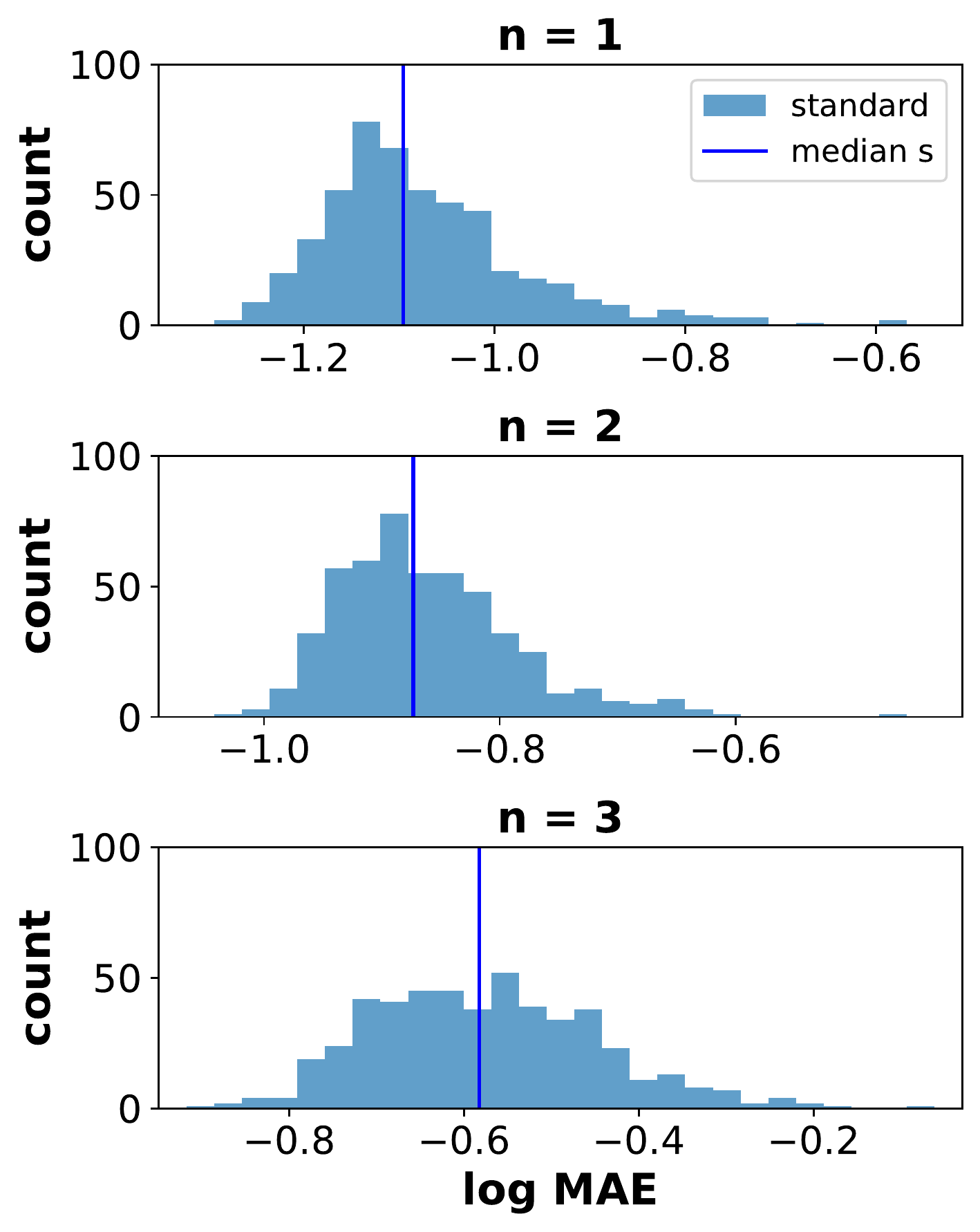}
   \includegraphics[width=.48\linewidth]{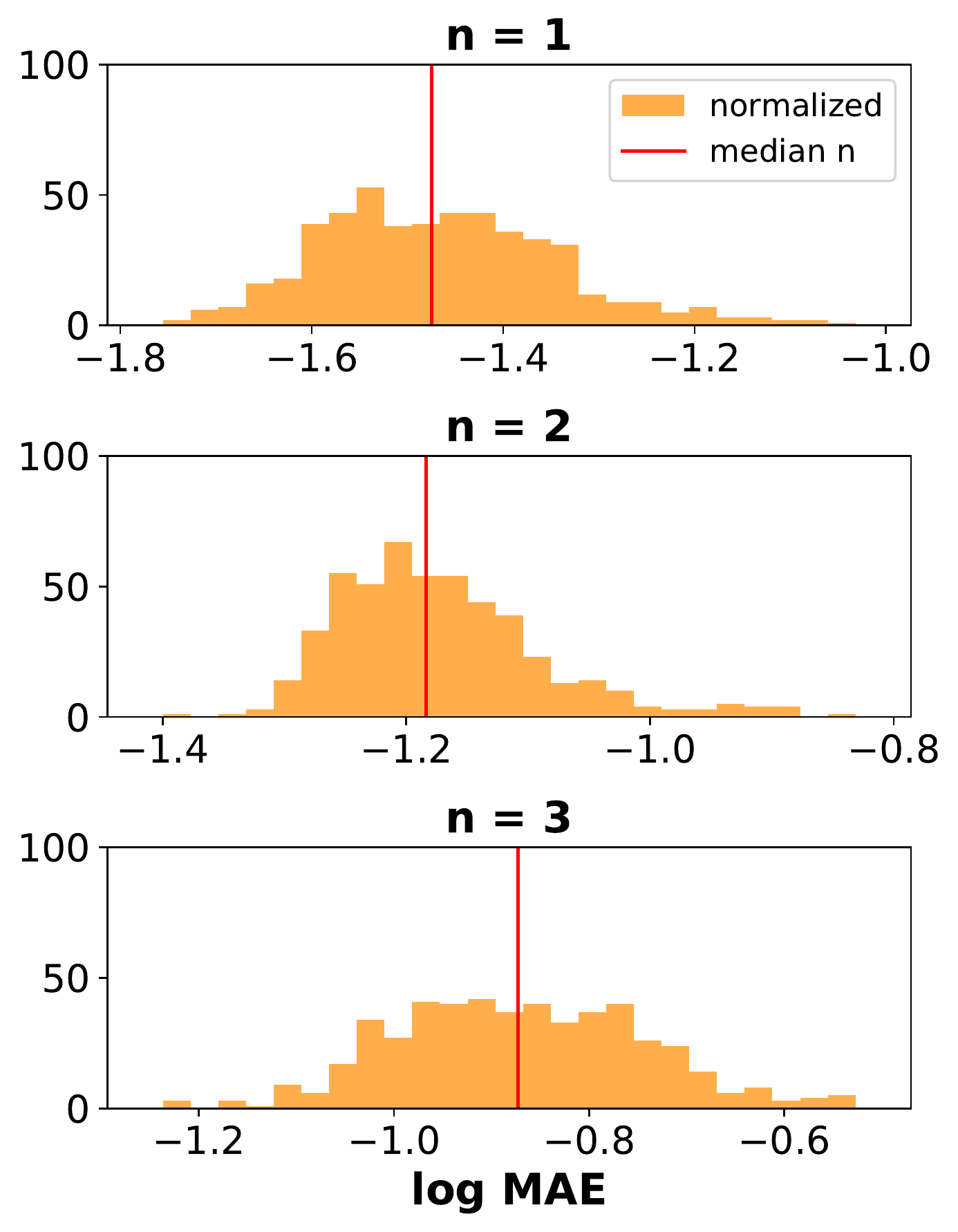}
    \caption{MAE over all experiments for standard and normalized robustness with trajectories sample from \emph{Immigration} (1 dim), \emph{Isomerization} (2 dim) and \emph{Transcription} (3 dim)}
    \label{fig:mae_single_othermodels}
\end{figure}

\begin{figure}[H]
    \centering
       \includegraphics[width=.50\linewidth]{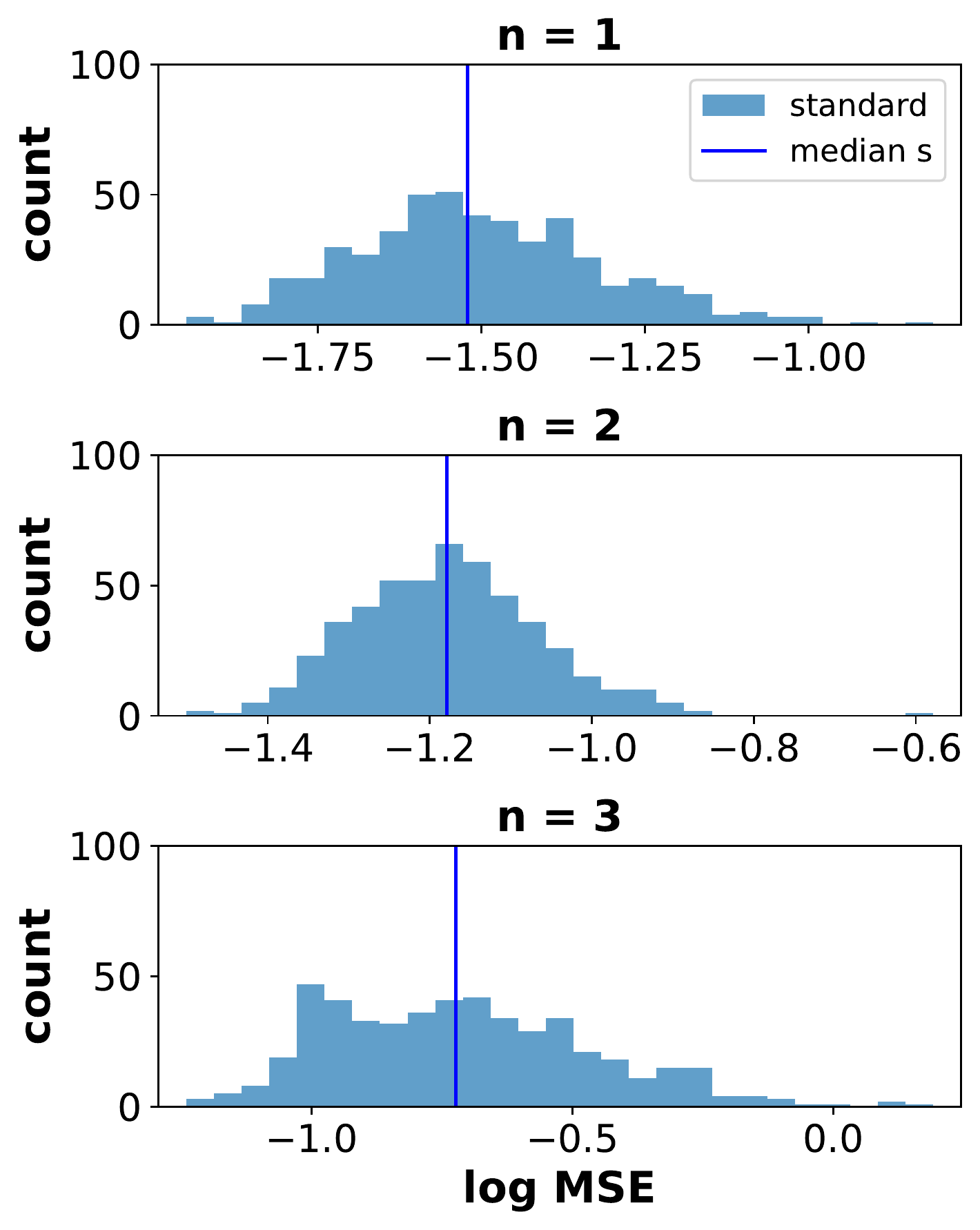}
   \includegraphics[width=.48\linewidth]{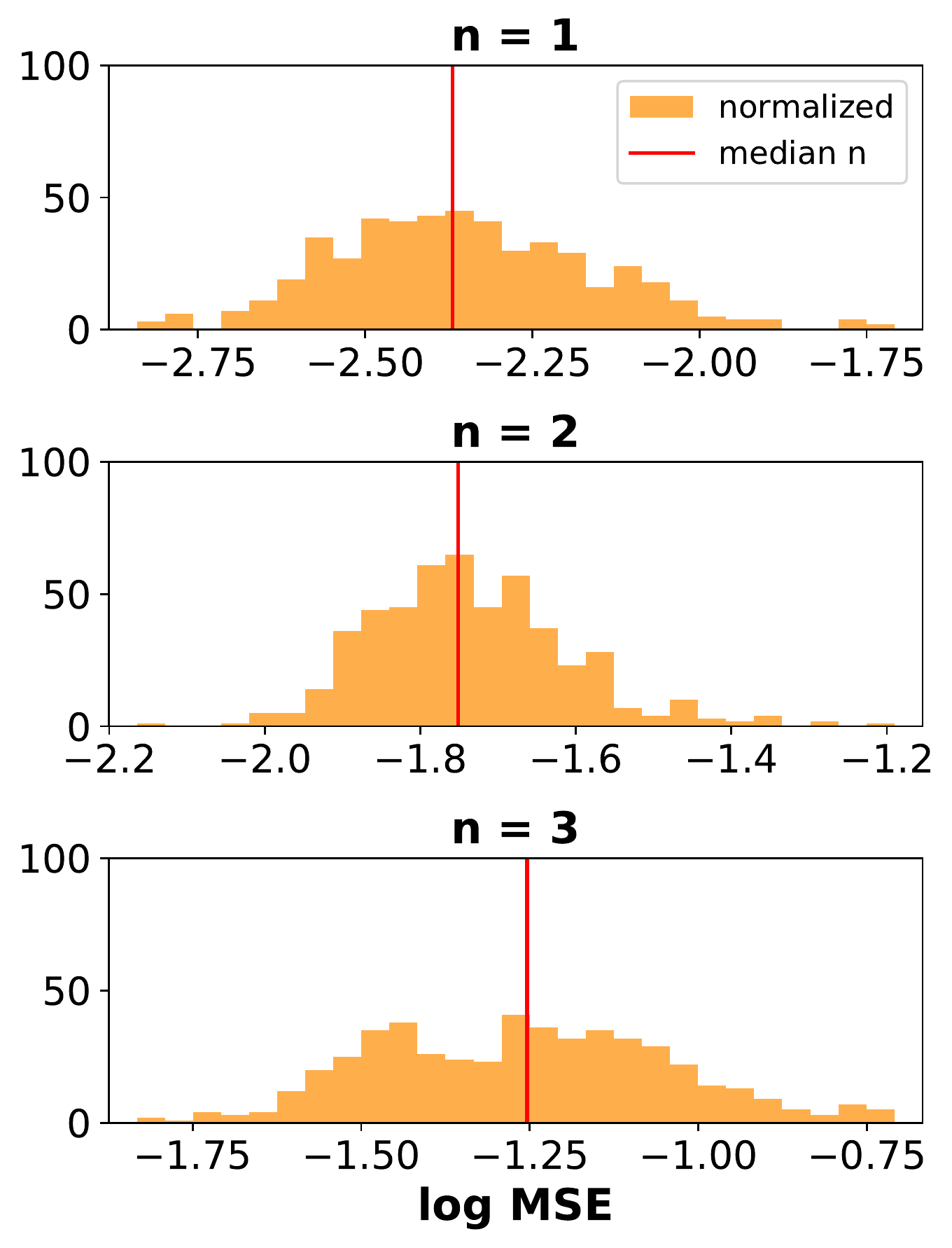}
    \caption{MSE over all 500 experiments standard and normalized robustness on trajectories sample from  \emph{Immigration} (1 dim), \emph{Isomerization} (2 dim) and \emph{Transcription} (3 dim)}
    \label{fig:mse_single_othermodels}
     	\vspace*{-0.2cm}
\end{figure}

  	\vspace*{-0.5cm}
  	\begin{table}[h!]
  	\begin{center}
  		\hspace*{0cm}
\begin{tabular}{l|cc|cc|cc|cc}
\toprule
{} & \multicolumn{2}{c}{MSE} & \multicolumn{2}{c}{MAE} & \multicolumn{2}{c}{MRE} & \multicolumn{2}{c}{ACC} \\
{} & $\rho$ &  $\rhon$ &  $\rho$ &  $\rhon$ &  $\rho$ &  $\rhon$ &  $\rho$ &  $\rhon$ \\
\midrule
immigration   &  0.0302 &   0.00427 &  0.0803 &   0.0335 &  0.404 &   0.278 &  0.975 &   0.984 \\
isomerization &  0.0663 &   0.0177 &  0.134 &   0.0654 &  0.333 &   0.228 &  0.981 &   0.981 \\
transcription &  0.189 &   0.056 &  0.261 &   0.134 &  0.957 &   0.635 &  0.944 &   0.9453 \\
\bottomrule
\end{tabular}
    \vspace{0.3cm}
\caption{Median for MSE, MAE, MRE and ACC of 500 experiments for standard and normalized robustness on trajectories sample from \emph{Immigration} (1 dim), \emph{Isomerization} (2 dim) and \emph{Transcription} (3 dim)}
  		\label{tab:result_single_othermodels}
  	\end{center}
  \end{table}

   \begin{table}[H]
        	\vspace*{-0.3cm}
  	\begin{center}
  		\hspace*{0cm}
\begin{tabular}{l|cc|cc|cc|cc|cc}
\toprule
{} & \multicolumn{2}{c}{5perc} & \multicolumn{2}{c}{1quart} & \multicolumn{2}{c}{median} & \multicolumn{2}{c}{3quart} & \multicolumn{2}{c}{95perc} \\
{} &   $\rho$ &  $\rhon$ &   $\rho$ &  $\rhon$ &   $\rho$ &  $\rhon$ &   $\rho$ &  $\rhon$ &   $\rho$ &  $\rhon$ \\
\midrule
immigration   & 0.00531 & 0.00257& 0.0266 & 0.0124 & 0.0635 & 0.0317 & 0.171 & 0.0982 & 1.05 & 0.706 \\
isomerization & 0.00297 & 0.00207 & 0.0149 & 0.0109 & 0.0388 & 0.0305 & 0.118 & 0.106 & 0.831 & 0.632 \\
transcription & 0.00721 & 0.00536 & 0.0418 & 0.0313 & 0.127 & 0.0965 & 0.400 & 0.304 & 2.37 & 1.62 \\
\midrule
immigration   & 0.00386 & 0.00162 & 0.0186 & 0.00758 & 0.0422 & 0.0166 & 0.0940 & 0.0373 & 0.300 & 0.137 \\
isomerization & 0.00546 & 0.00192 & 0.0273 & 0.00999 & 0.065 & 0.0266 & 0.157 & 0.0753 & 0.522 & 0.275 \\
transcription & 0.0105 & 0.00445 & 0.0571 & 0.0251 & 0.150 & 0.069 & 0.354 & 0.172 & 0.973 & 0.530 \\
\bottomrule
\end{tabular}
    \vspace{0.3cm}
\caption{Mean of quantiles for RE and AE of 500 experiments for prediction of the standard and normalised robustness on trajectories sample from \emph{Immigration} (1 dim), \emph{Isomerization} (2 dim) and \emph{Transcription} (3 dim)}
  		\label{tab:quantile_rob_othermodels}
  	\end{center}
  \end{table}

\begin{figure}[H]
    \centering
        \includegraphics[width=0.48\linewidth]{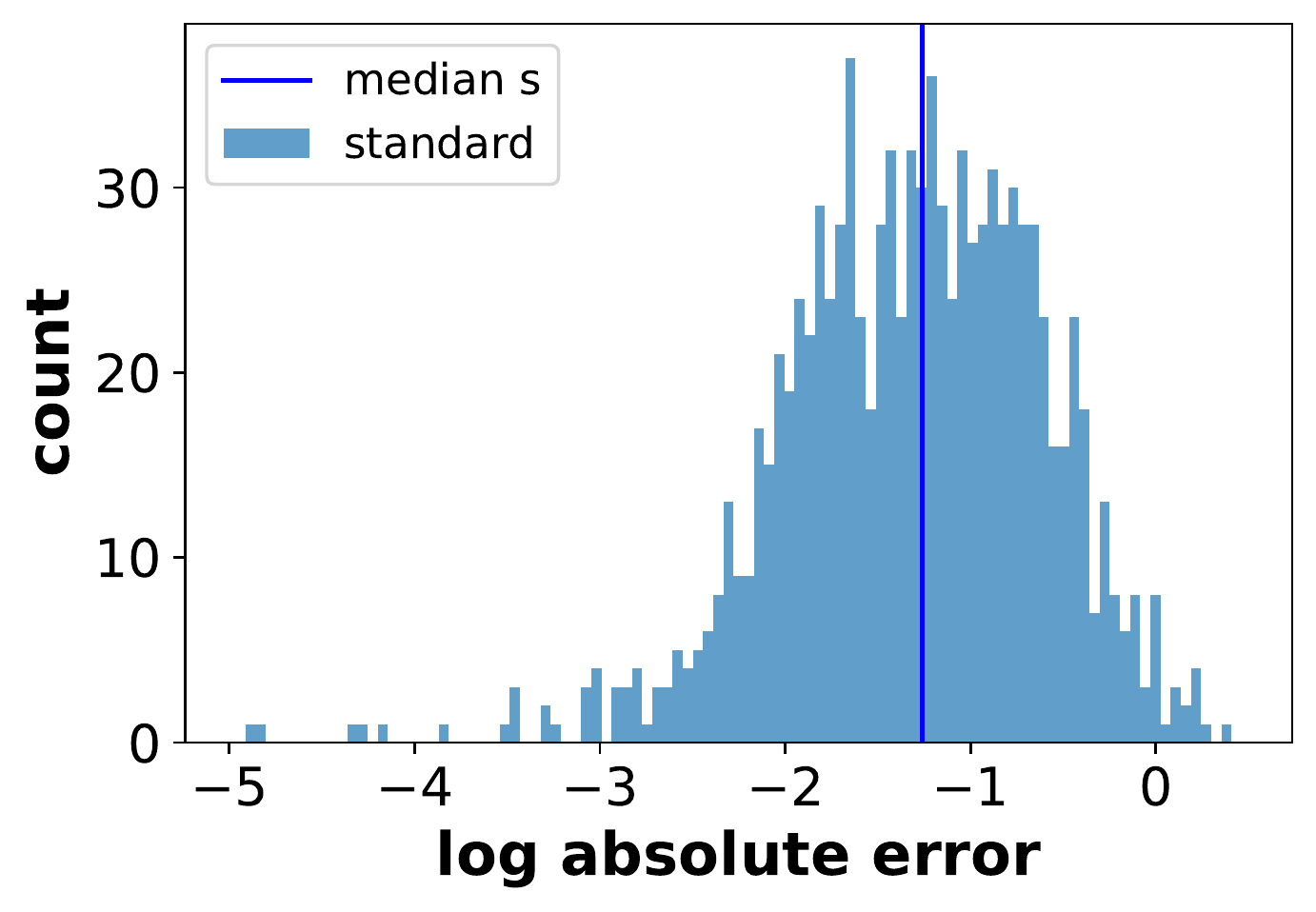}
    \includegraphics[width=0.48\linewidth]{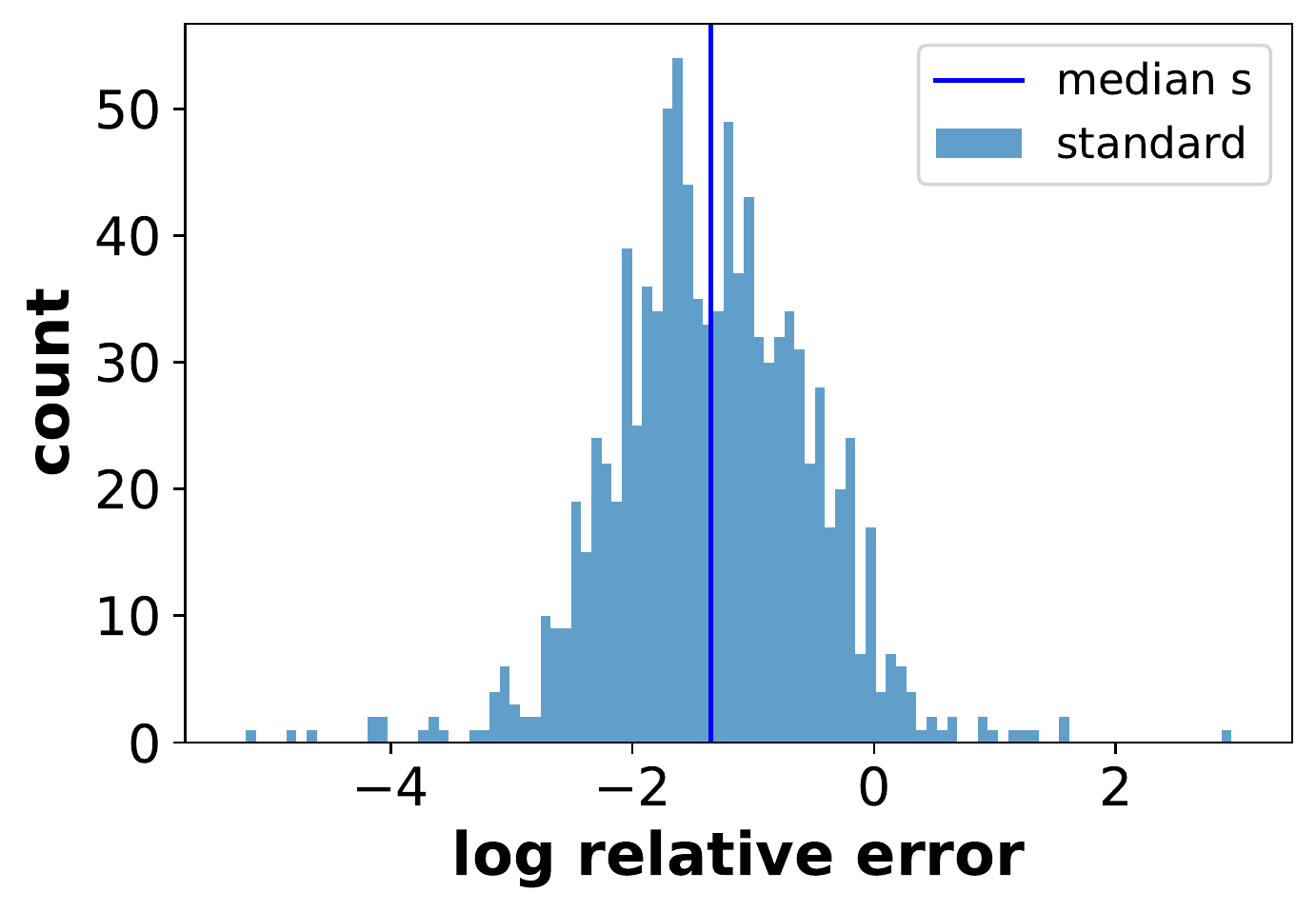}
    \caption{Absolute (left) and relative (right) errors in predicting robustness for a random experiment, with trajectories sample from \emph{Transcription}}
    \label{fig:absrel_single_s}
\end{figure}

\begin{figure}[H]
    \centering
        \includegraphics[width=0.48\linewidth]{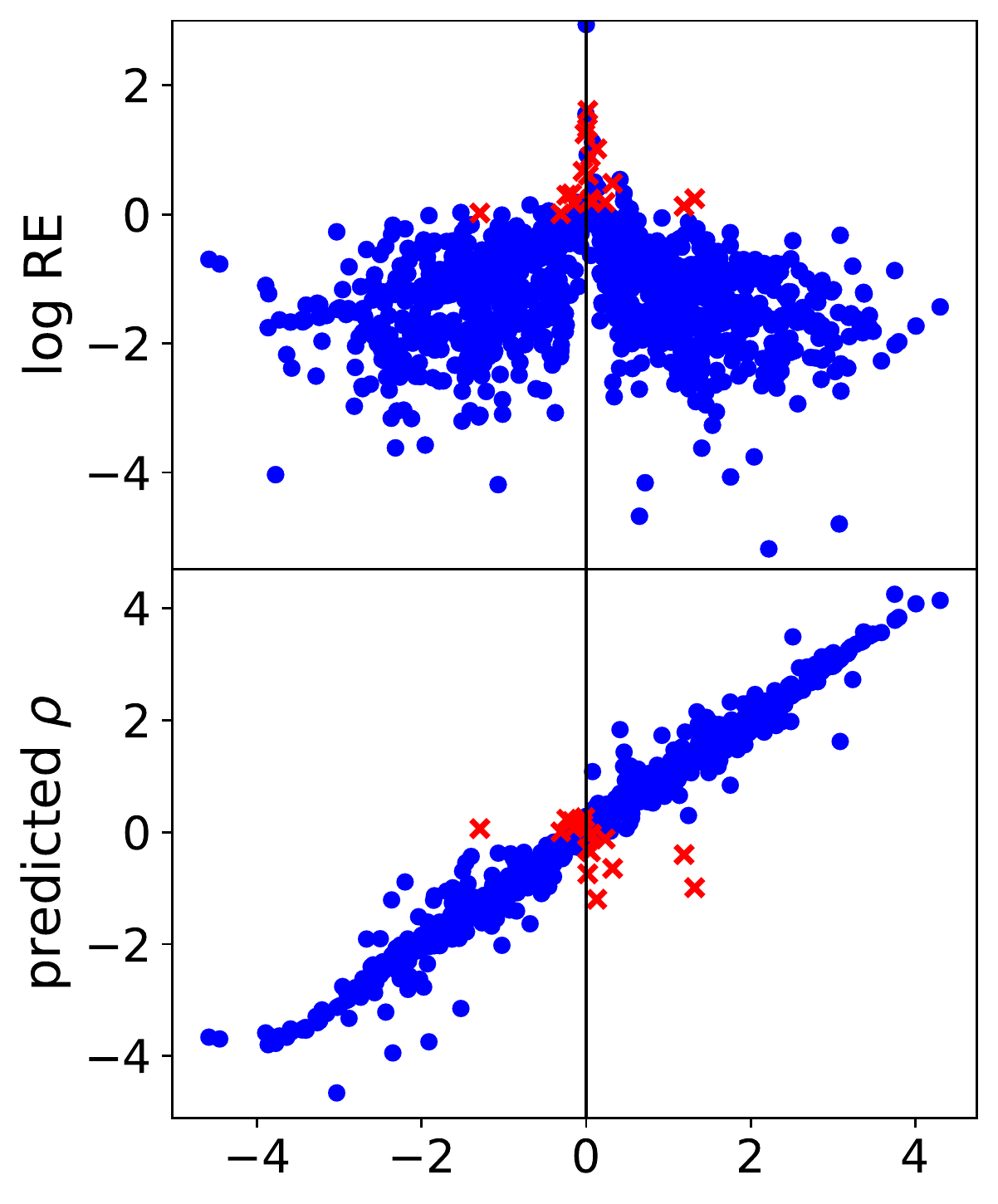}
    \includegraphics[width=0.48\linewidth]{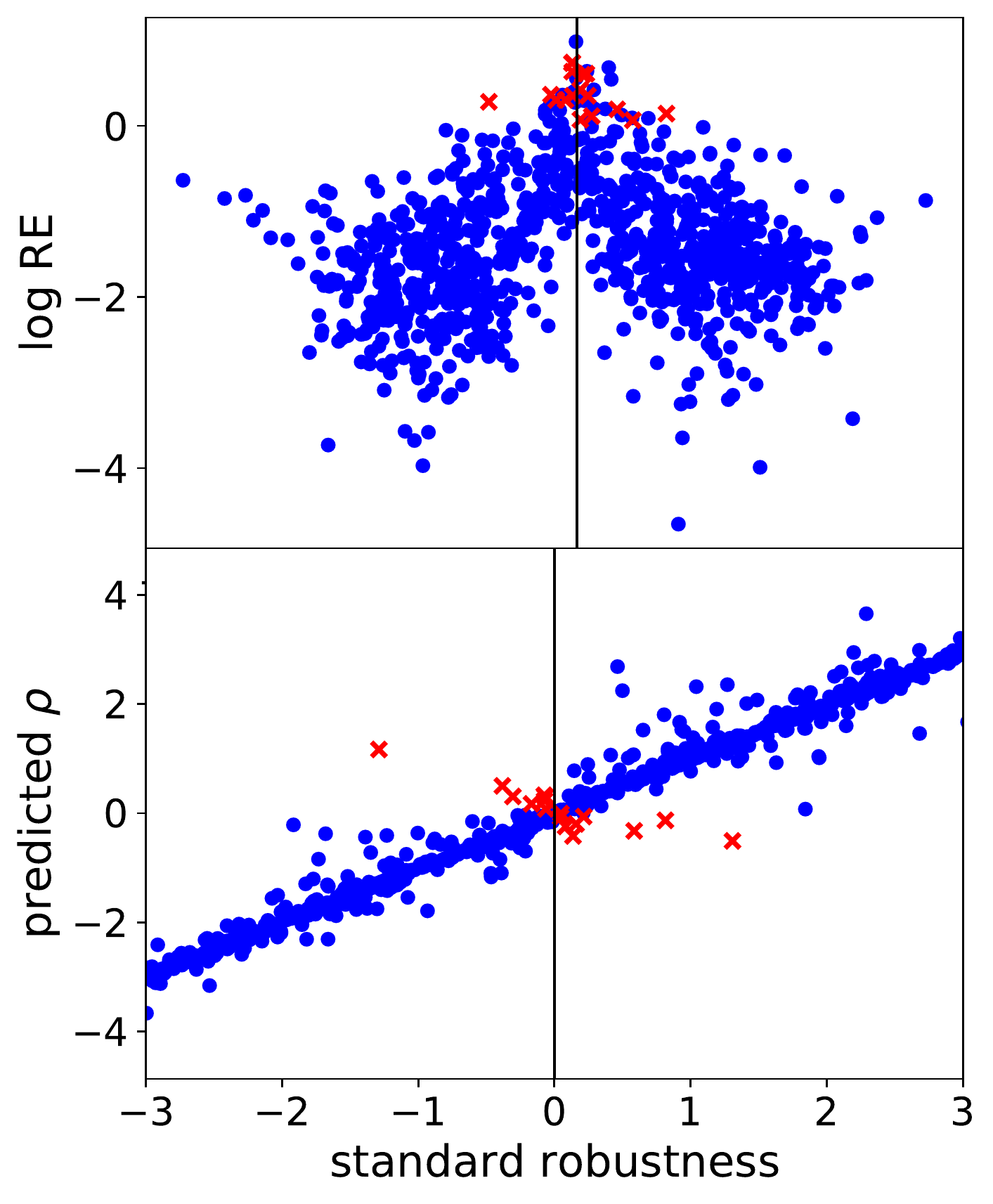}
    \caption{true robustness vs predicted values and RE for a random experiment, with trajectories sample from \emph{Transcription} (left) and \emph{Isomeration} (right)}
    \label{fig:rob_vs_pred_single_s}
\end{figure}

\begin{figure}[H]
    \centering
    \includegraphics[width=0.49\linewidth]{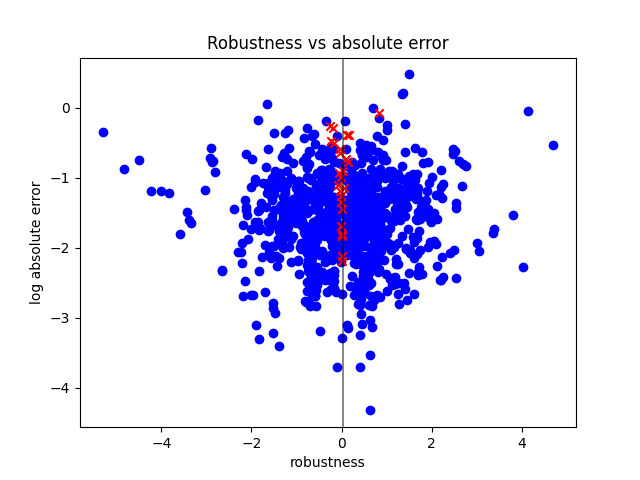}
    \includegraphics[width=0.49\linewidth]{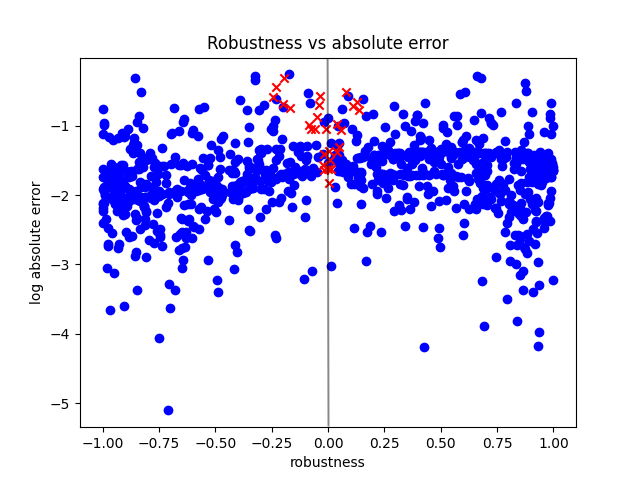}
    \caption{Robustness vs absolute error for a single experiment     for prediction of the standard (left) and normalized (right) robustness 
     on  single trajectories sample from \emph{Immigration}}
    \label{fig:robvsrelerr}
\end{figure}

   \begin{table}[]
  	\begin{center}
  		\hspace*{0cm}
\begin{tabular}{c|cc|cc|cc|cc|cc}
\toprule
{} & \multicolumn{2}{c}{5perc} & \multicolumn{2}{c}{1quart} & \multicolumn{2}{c}{median} & \multicolumn{2}{c}{3quart} & \multicolumn{2}{c}{95perc} \\
{} &   $\rho$ &  $\rhon$ &   $\rho$ &  $\rhon$ &   $\rho$ &  $\rhon$ &   $\rho$ &  $\rhon$ &   $\rho$ &  $\rhon$ \\
\midrule
immigration   & 0.00414 & 0.00149 & 0.0197 & 0.00666 & 0.0447 & 0.0162 & 0.112 & 0.0486 & 0.737 & 0.360 \\
isomerization & 0.00248 & 0.00179 & 0.0124 & 0.00922 & 0.0310 & 0.0257 & 0.103 & 0.0906 & 0.745 & 0.569 \\
transcription & 0.00588 & 0.00412 & 0.0321 & 0.0229 & 0.095 & 0.0712 & 0.305 & 0.240 & 1.82& 1.49 \\
\midrule
immigration   & 0.00270 & 0.000800 & 0.0130 & 0.00370 & 0.0290 & 0.00843 & 0.0615 & 0.0187 & 0.214 & 0.0683 \\
isomerization & 0.00388 & 0.00158 & 0.0193 & 0.00806 & 0.0454 & 0.0208 & 0.117 & 0.0573 & 0.432 & 0.211 \\
transcription & 0.00811 & 0.00339 & 0.0425 & 0.0183 & 0.106 & 0.0487 & 0.252 & 0.122 & 0.704 & 0.376 \\
\bottomrule
\end{tabular}
\caption{Mean of quantiles for RE and AE of 500 experiments for prediction of the standard $\rho$ and normalised $\rhon$ expected robustness on trajectories sample from\emph{Immigration} (1 dim), \emph{Isomerization} (2 dim) and \emph{Transcription} (3 dim)}
  		\label{tab:quantile_rob_othermodels_exp}
  	\end{center}
  \end{table}
\subsubsection{Expected Robustness}
Further results on experiment for prediction of the expected robustness  sampling trajectories on different stochastic models. In terms of error on the robustness itself, we plot in Fig. \ref{fig:mre_exp_othermodels}, \ref{fig:mae_exp_othermodels}, and \ref{fig:mse_exp_othermodels} the distribution of MAE, MRE and MSE for standard and normalized expected robustness on trajectories sample from\emph{Immigration} (1 dim), \emph{Isomerization} (2 dim) and \emph{Transcription} (3 dim) (right) expected robustness vs the predicted one and RE on trajectories sample from  \emph{Isomerization}.
In table \ref{tab:quantile_rob_othermodels_exp} we report the mean of the quantiles for AE and RE of 500 experiments.

\begin{figure}
    \centering
    \includegraphics[width=.48\linewidth]{fig/Histogram_logMRE_othermodels.pdf}
    \includegraphics[width=.48\linewidth]{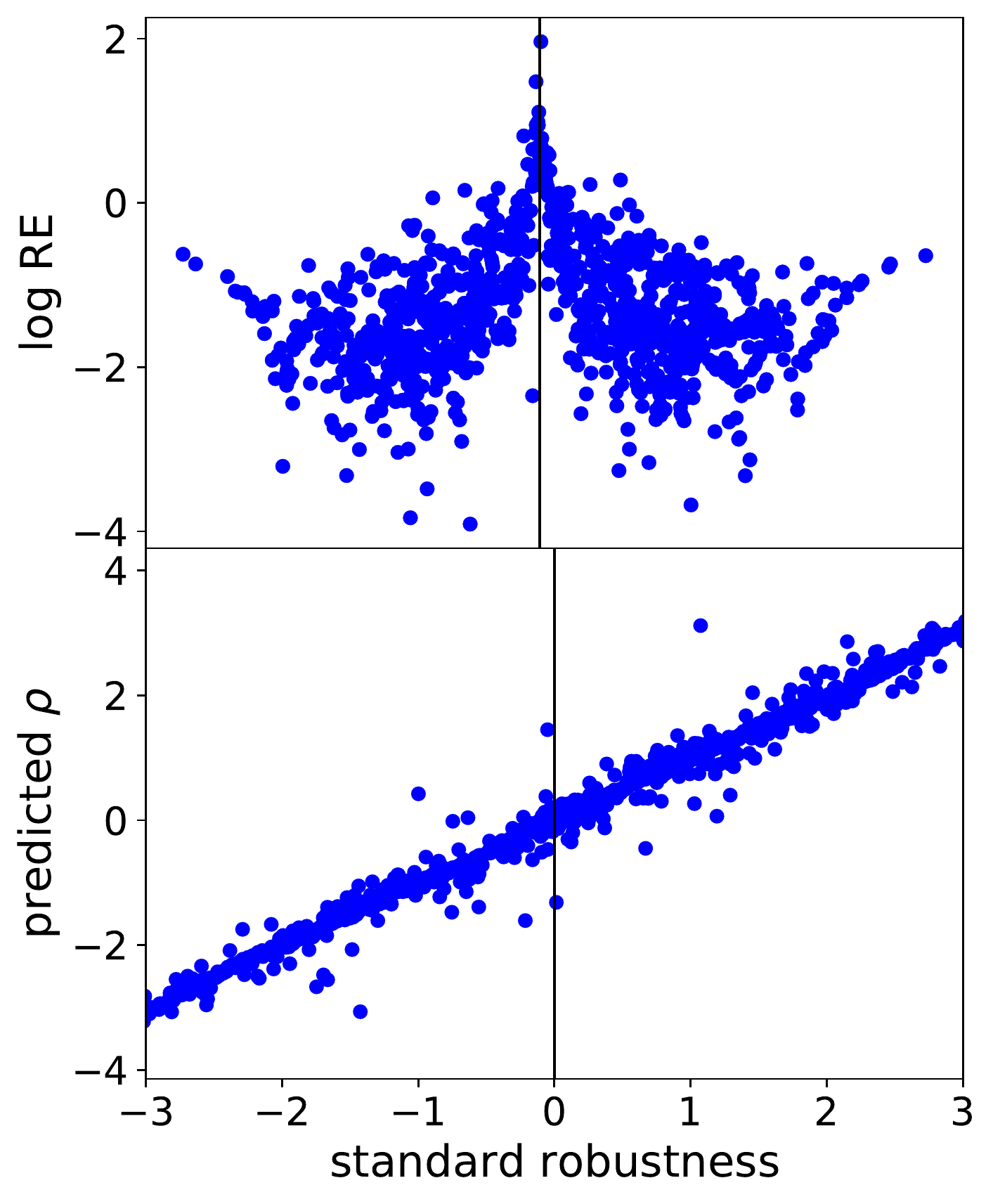}  
    \caption{(left) MRE over all experiments for standard and normalized expected robustness on trajectories sample from\emph{Immigration} (1 dim), \emph{Isomerization} (2 dim) and \emph{Transcription} (3 dim) (right) expected robustness vs the predicted one and RE on trajectories sample from  \emph{Isomerization}}
    \label{fig:mre_exp_othermodels}
\end{figure}

\begin{figure}[H]
    \centering
   \includegraphics[width=.50\linewidth]{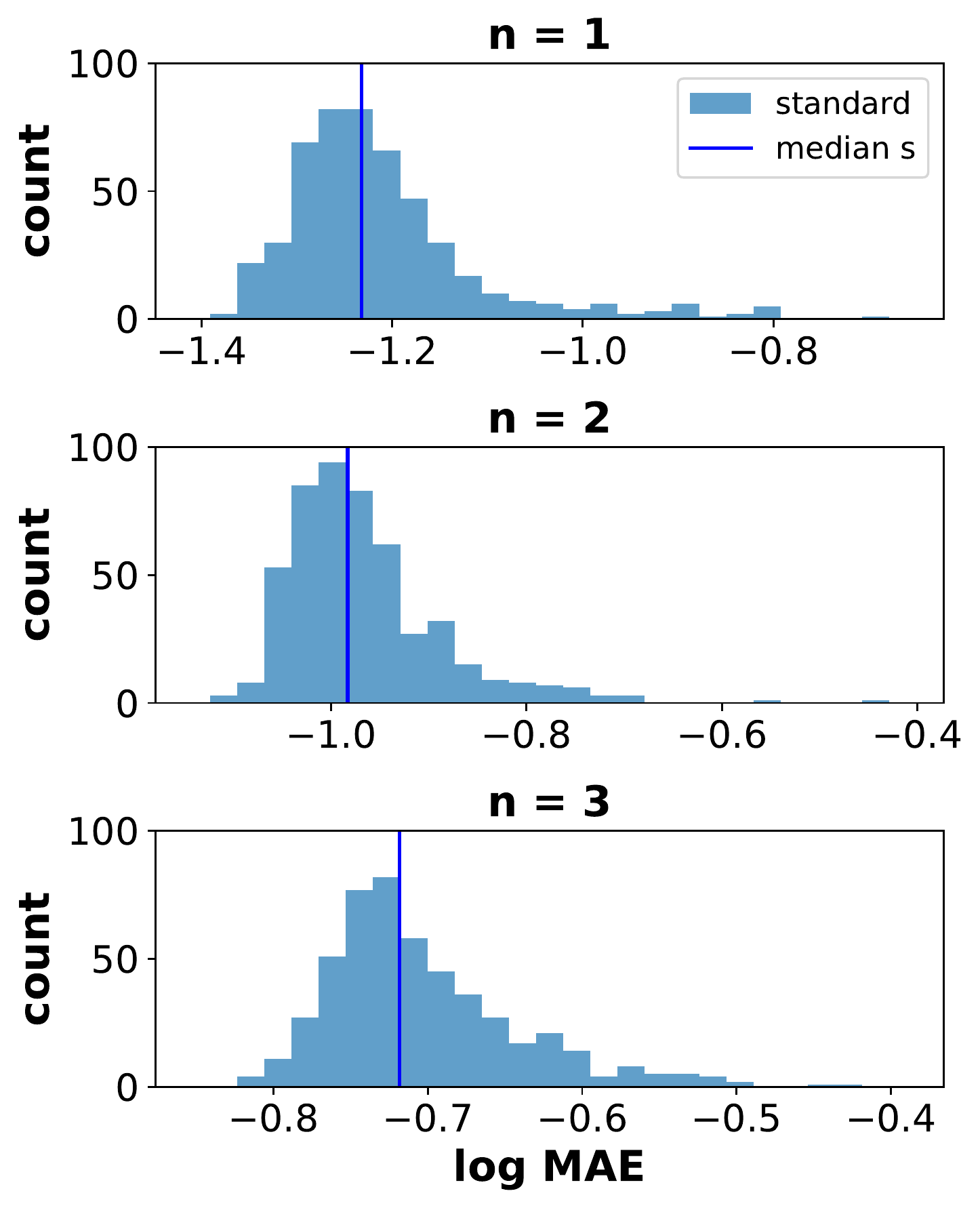}
   \includegraphics[width=.48\linewidth]{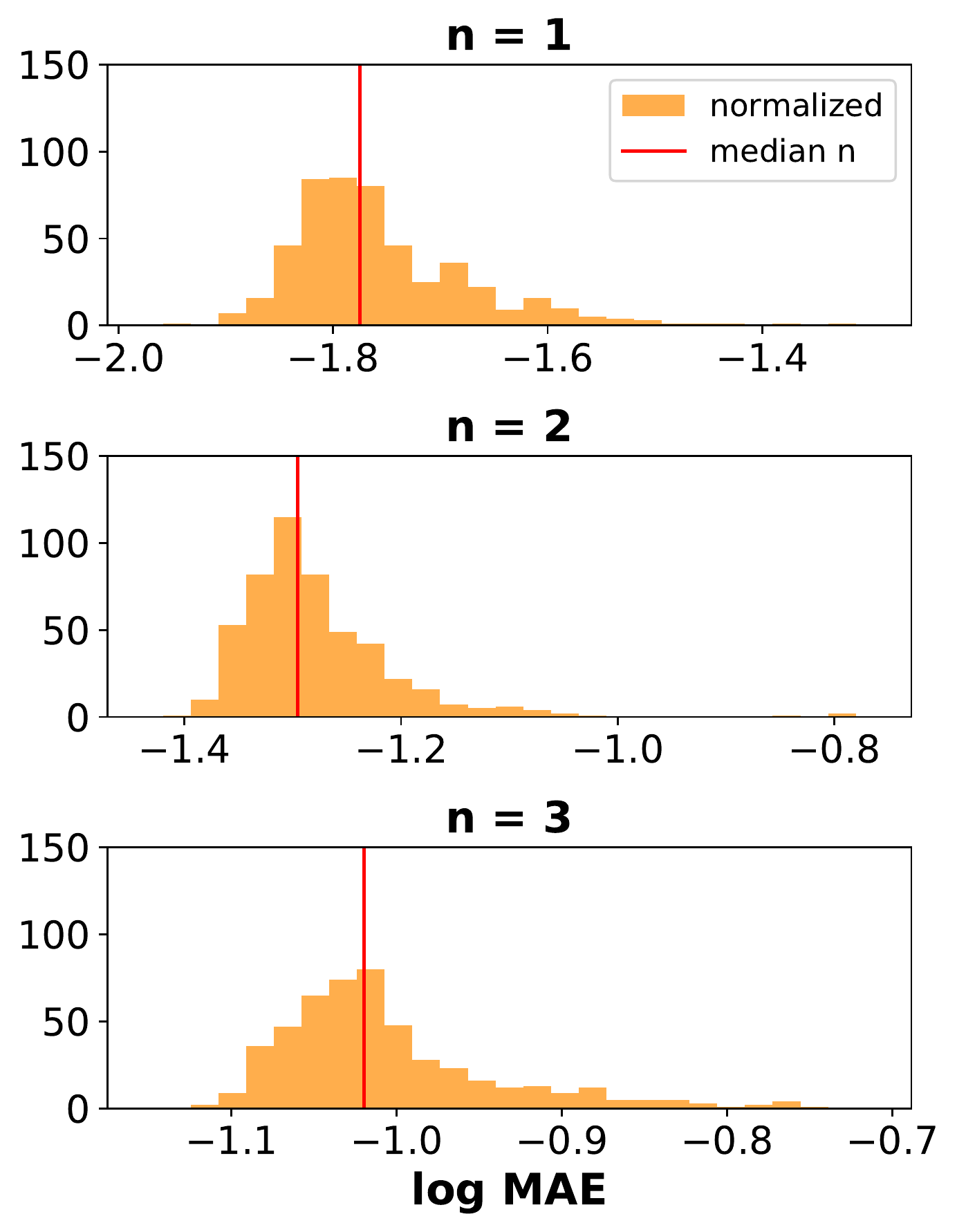}
    \caption{MAE over all 500 experiments for standard and normalized expected robustness with trajectories sample from\emph{Immigration} (1 dim), \emph{Isomerization} (2 dim) and \emph{Transcription} (3 dim)}
    \label{fig:mae_exp_othermodels}
\end{figure}

\begin{figure}[H]
    \centering
   \includegraphics[width=.48\linewidth]{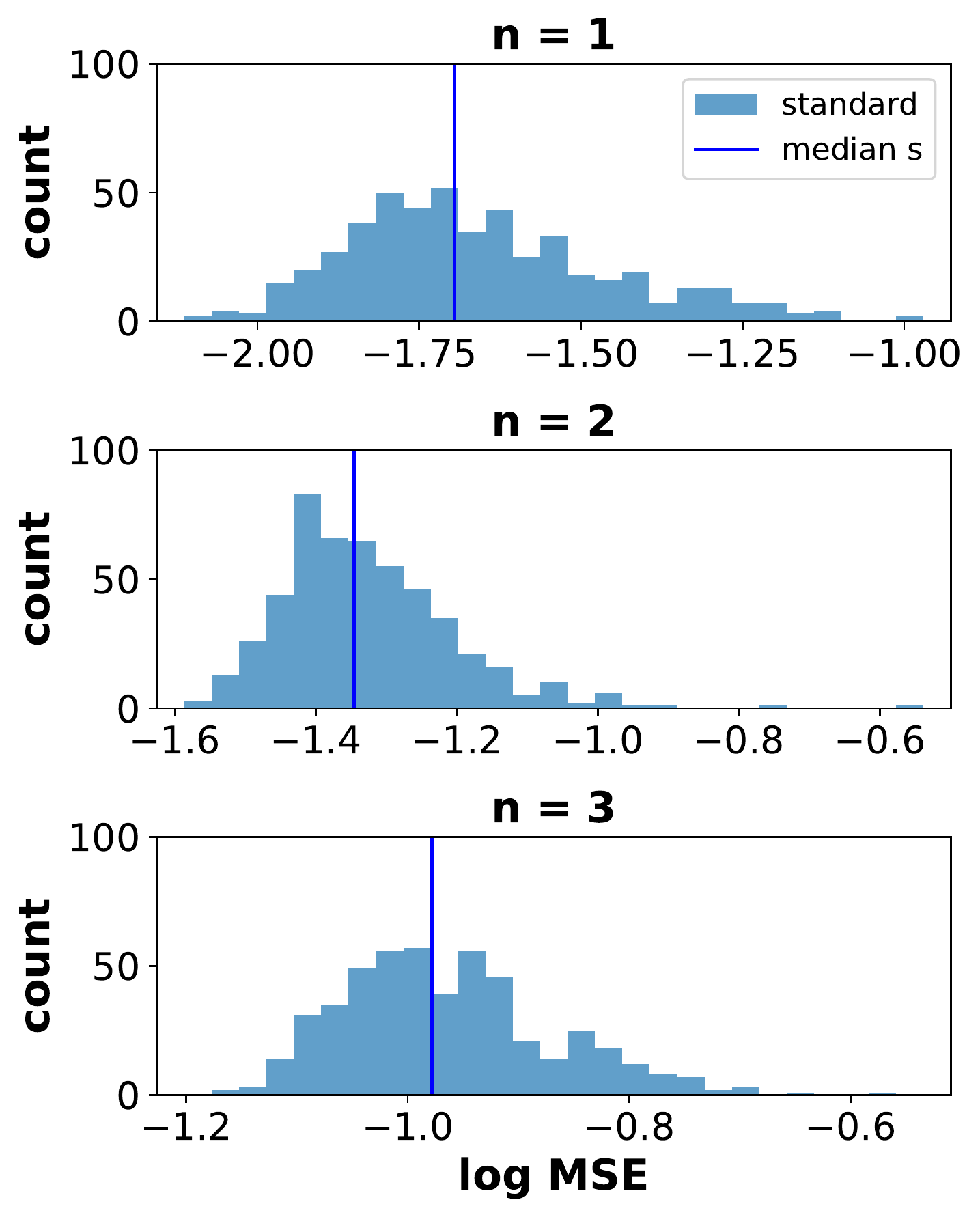}
   \includegraphics[width=.48\linewidth]{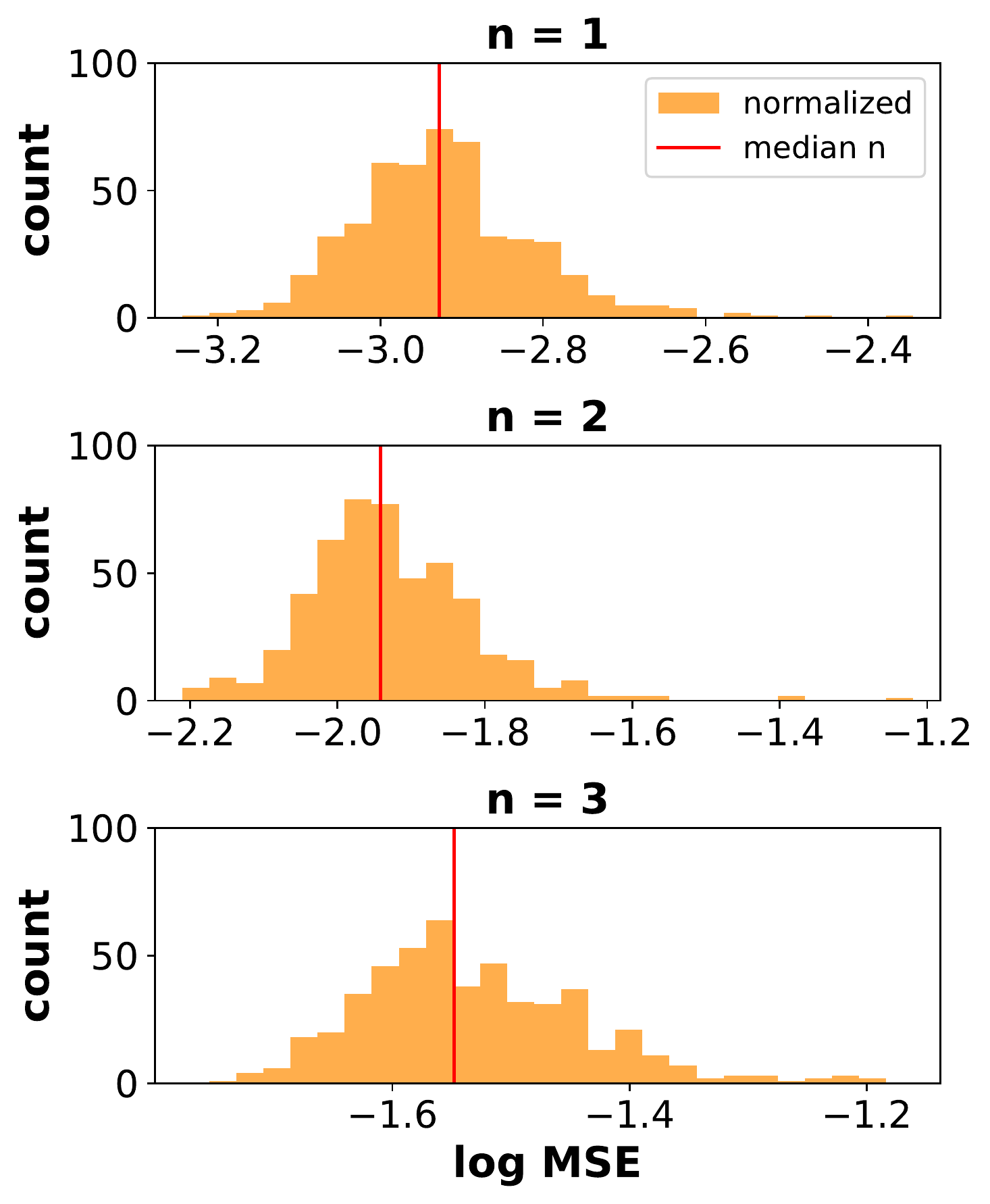}
    \caption{MSE over all 500 experiments for standard and normalized expected robustness with trajectories sample from \emph{Immigration} (1 dim), \emph{Isomerization} (2 dim) and \emph{Transcription} (3 dim)}
    \label{fig:mse_exp_othermodels}
\end{figure}

 \begin{table}[]
  	\begin{center}
  		\hspace*{0cm}
\begin{tabular}{lrrrrrr}
\toprule
{} & \multicolumn{2}{l}{MSE} & \multicolumn{2}{l}{MAE} & \multicolumn{2}{l}{MRE}  \\
{} &  $\rho$ &$\rhon$ & $\rho$ &$\rhon$ &  $\rho$& $\rhon$ \\
\midrule
immigration   &  0.020172 &   0.001182 &  0.058594 &    0.01680 &  0.275879 &   0.134888  \\
isomerization &  0.045105 &   0.011444 &  0.104004 &    0.05069 &  0.293457 &   0.220337  \\
transcription &  0.105103 &   0.028336 &  0.191284 &    0.09552 &  0.718750 &   0.591309    \\
\bottomrule
\end{tabular}
\vspace{0.3cm}
\caption{Median of MSE, MAE and MRE of 500 experiments standard $\rho$ and normalized $\rhon$ expected robustness on trajectories sample from \emph{Immigration} (1 dim), \emph{Isomerization} (2 dim) and \emph{Transcription} (3 dim)}
  		\label{tab:result_avrob_othermodels}
  	\end{center}
  \end{table}


\end{document}